
\documentclass[twoside,leqno,twocolumn]{article}
\usepackage{ltexpprt}

\usepackage{amsmath,amssymb}
\usepackage{paralist}

\usepackage{tikz}
\usetikzlibrary{positioning}

\newcommand{\ignore}[1]{}
\usepackage{color}


\newcommand{\tU}{\tilde{U}}
\newcommand{\sign}{\mathrm{sign}}

\newcommand{\abs}[1]{| #1 |}
\newcommand{\norm}[1]{\| #1 \|}

\newcommand{\set}[2]{\left\{ #1 \;|\; #2 \right\}}
\newcommand{\sset}[1]{\left\{ #1 \right\}}
\newcommand{\xmax}{x_{\max}}
\newcommand{\xeq}{x_{\mathit{eq}}}

\newcommand{\ceil}[1]{\lceil #1 \rceil}

\newcommand{\N}{\mathbb {N}}
\newcommand{\tu}{\tilde{u}}
 \newcommand{\tp}{\tilde{p}}

\newcommand{\hp}{\hat{p}}
\newcommand{\hG}{\hat{G}}
\newcommand{\hf}{\hat{f}}
\newcommand{\hr}{\hat{r}}
\newcommand{\hx}{\hat{x}}

\newcommand{\outflow}{\mathit{outflow}}
\newcommand{\inflow}{\mathit{inflow}}
\newcommand{\Enew}{E_{\mathit{new}}}

\newcommand{\barS}{\bar{S}}
\newcommand{\rmin}{r_{\min}} 
\newcommand{\rmax}{r_{\max}}
\newcommand{\define}{\mathrel{:=}}

\newcommand{\tOmega}{\widetilde{\Omega}}

\newcommand{\DM}{Duan-Mehlhorn:Arrow-Debreu-Market}

\begin{document}

\title{An Improved Combinatorial Polynomial Algorithm for the Linear Arrow-Debreu Market}

\author{Ran Duan\thanks{Institute for Interdisciplinary Information Sciences, Tsinghua University, Beijing, China}
\and Jugal Garg\thanks{Max-Planck-Institut f\"ur Informatik, Saarbr\"ucken, Germany}
 \and Kurt Mehlhorn\thanks{Max-Planck-Institut f\"ur Informatik, Saarbr\"ucken, Germany}
}

\date{}


\maketitle



\begin{abstract} We present an improved combinatorial algorithm for the computation of equilibrium prices in the linear Arrow-Debreu model. For a market with $n$ agents and integral utilities bounded by $U$, the algorithm runs in $O(n^7 \log^3 (nU))$ time. This improves upon the previously best algorithm of Ye by a factor of $\tOmega(n)$. The algorithm refines the algorithm described by Duan and Mehlhorn and improves it by a factor of $\tOmega(n^3)$. The improvement comes from a better understanding of the iterative price adjustment process, the improved balanced flow computation for nondegenerate instances, and a novel perturbation technique for achieving nondegeneracy. \end{abstract}

\section{Introduction}\label{sec: introduction}

Walras~\cite{Walras1874} introduced an economic market model in 1874. In this model, every agent has an initial
endowment of some goods and a utility function over sets of goods. Goods are assumed to be divisible. 
The market clears at a set of prices if each agent can spend its entire budget (= the total value of its goods at the set of prices) on a bundle of goods with maximum utility, and all goods are completely sold. Market clearing prices are also called equilibrium prices. In the linear version of the model, all utility functions are linear. 

We present an improved combinatorial algorithm for the computation of equilibrium prices. For a market with $n$ agents and integral utilities bounded by $U$, the algorithm runs in $O(n^7 \log^3 (nU))$ time. This is almost by a factor of $n$ better than all known algorithms. Jain~\cite{Jain07} and Ye~\cite{Ye2007} gave algorithms based on the ellipsoid and the interior point method, respectively, and Duan and Mehlhorn~\cite{\DM} described a combinatorial algorithm. The new algorithm refines the latter algorithm and improves upon its running time by a factor of almost $n^3$.

Formally, the linear model is defined as follows. We may assume w.l.o.g.~that the number of goods is equal to the number of agents and that the $i$-th agent owns the $i$-th good. There is one unit of each good. Let $u_{ij} \ge 0$ be the utility for agent $i$ if all of good $j$ is allocated to him. We assume that the utilities are integers bounded by $U$. 
We make the standard assumption that each agent likes some good, i.e., for all $i$, $\max_j u_{ij} > 0$, and that each good is liked by some agent, i.e., for all $j$, $\max_i u_{ij} > 0$.  We also make the nonstandard assumption that for every proper subset $P$ of agents, there exist $i\in P$ and $j\notin P$ such that $u_{ij} > 0$. References~\cite{Jain07,Devanur-Garg-Vegh,Duan-Mehlhorn:Arrow-Debreu-Market} show how to remove it.
Agents are rational and spend money only on goods that give them maximum utility per unit of money, i.e., if $p = (p_1,\ldots,p_n)$ is a price vector then buyer $i$ is only willing to buy goods $j$ with $u_{ij}/p_j = \max_\ell
  u_{i\ell}/p_\ell$. An \emph{equilibrium} is a vector $p$ of positive prices and allocations $x_{ij} \ge 0$ such
  that
\begin{itemize}
\item all goods are completely sold:\quad $\sum_i x_{ij} = 1$ for all $j$.
\item all money is spent: \quad $p_i = \sum_j x_{ij} p_j$ for all $i$.
\item only goods that give maximum utility per unit of money are bought:
\[    \text{for all $i$ and $j$:} \quad x_{ij} > 0 \quad \Rightarrow \quad \frac{u_{ij}}{p_j} = \alpha_i,\]
where $\alpha_i = \max_\ell \frac{u_{i\ell}}{p_\ell}$. 
\end{itemize}
In an equilibrium, $f_{ij} = x_{ij} p_j$ is the amount of money that flows from agent $i$ to agent $j$. In this paper, it is useful to represent each agent $i$ twice, once in its role as a buyer and once in its role as (the owner of) a good. We denote the set of  buyers by
$B=\{b_1,b_2,...,b_n\}$ and the set of goods by
$C=\{c_1,c_2,...,c_n\}$. So, if the price of good $c_i$ is $p_i$,
buyer $b_i$ will have $p_i$ amount of money.

The existence of an equilibrium is nonobvious. The first rigorous existence proof is due to Wald~\cite{Wald36}.
It required fairly strong assumptions. The existence proof for the
general model was given by Arrow and Debreu~\cite{AD1954} in 1954.
They proved that the market equilibrium always exists if
the utility functions are concave. The proof is nonconstructive. Gale~\cite{Gale57,Gale76} gave necessary and sufficient conditions for the existence of an equilibrium in the linear model.

The development of algorithms started in the 60s. We restrict the discussion to exact algorithms and refer the  reader to~\cite{\DM} for a broader discussion of algorithms and related work.
Eaves~\cite{Eaves76} presented the first exact algorithm for the linear exchange model. He reduced the model to a linear complementary problem which is then solved by Lemke's algorithm. The algorithm is not polynomial time. Garg, Mehta, Sohoni, and Vazirani~\cite{Garg-Mehta-Sohoni-Vazirani} give a combinatorial interpretation of the algorithm. Polynomial time exact algorithms were obtained based on the 
characterization of equilibria as the solution set of a convex program. The recent paper by Devanur, Garg, and V\'{e}gh~\cite{Devanur-Garg-Vegh} surveys these characterizations. 
Jain~\cite{Jain07} showed how to solve one of these characterizations with a nontrivial extension of the ellipsoid method. His algorithm is the first polynomial time algorithm for the linear Arrow-Debreu market. Ye~\cite{Ye2007} showed that polynomial time can also be achieved with the interior point method. We quote from his paper: ``\emph{We present an interior-point algorithm \ldots\ for
solving the Arrow{-}Debreu pure exchange market equilibrium problem with linear utility. \ldots\ The number
of arithmetic operations of the algorithm is \ldots\ bounded by $O(n^4 \log 1/\epsilon)$, which is substantially lower than the one obtained by the ellipsoid method. If
the input data are rational, then an exact solution can be obtained by solving the
identified system of linear equations and inequalities, 
when $\epsilon < 2^{-L}$, where $L$ is the bit length of the input data. Thus, the arithmetic
operation bound becomes $O(n^4 L)$, \ldots.}'' We assume that utilities are integers bounded by $U$. Then $L = n^2 \log U$, and the number of arithmetic operations becomes $O(n^6 \log U)$. Ye does not state the precision needed for the computation. However, since the computation must guarantee that the error $\epsilon$ becomes less than $2^{-L}$, it is fair to assume that arithmetic on numbers with $L$ bits is necessary. Thus the time complexity on a RAM becomes $O(n^6 (\log U) \cdot M(n^2 \log U))$, where $M(L)$ is the cost of multiplying $L$-bit numbers. On a RAM with logarithmic word-length, $M(L) = O(L)$~\cite{Furer09} and hence the time bound for Ye's algorithm\footnote{Yinyu Ye confirmed in personal communication that this interpretation of his result is correct.}  becomes $O(n^8 (\log U)^2)$.

The utility graph is a bipartite graph with vertex set $B \cup C$, where $b_i$ and $c_j$ are connected by an edge if and only if $u_{ij} > 0$. Any cycle $D$ in the utility graph has even length and hence the edges of the cycle can be 2-colored such that adjacent edges have distinct colors. We use $D_0$ and $D_1$ to denote the two color classes and call $(D_0,D_1)$ the 2-partition of $D$. We call an instance \emph{degenerate} if there is a cycle $D$ with 2-partition $(D_0,D_1)$ such that 
\begin{equation}\label{degeneracy} \prod_{e \in D_0} u_e = \prod_{e \in D_1} u_e. \end{equation}
For a price vector $p$, the equality graph $G_p = (B \cup C,E_p)$ is a directed bipartite graph, where the edge set $E_p$ consists of all edges $(b_i,c_j)$ such that $u_{ij}/p_j = \alpha_i$. 
For nondegenerate instances, the equality graph with respect to any price vector is a forest, see Lemma~\ref{general position}. 

Duan and Mehlhorn~\cite{\DM} gave a combinatorial algorithm. They
obtain equilibrium prices by a procedure that iteratively adjusts prices and allocations in a carefully chosen, but intuitive manner. The algorithm runs in $O(n^5 \log (nU))$ phases and maintains a balanced flow in a flow network defined by the current price vector. A balanced flow is a maximum flow that minimizes the 2-norm of the surplus vector of the agents. The balanced flow needs to be recomputed in each phase and this takes $O(n)$ max-flow computations on a graph with $n$ nodes and up to $n^2$ edges. We refine their algorithm (see Figure~\ref{fig:algo} for a
complete listing) and reduce the running time by almost three orders of magnitude. The improvement comes from three sources. 
\begin{itemize}
\item A refined, but equally simple, method for adjusting prices and a refined analysis. This reduces the number of phases by a factor of $n$. 
\item An improved computation of balanced flows in nondegenerate networks. We show that only one general max-flow computation in a graph with $O(n)$ edges is needed; the other $O(n)$ maxflow computations are in networks with forest structure. This reduces the number of arithmetic operations per phase by a factor of $n^2$
and improves upon the running time by a factor of $n^2/\log(nU)$. The improvement also applies to the algorithm in~\cite{\DM}. 
\item A novel perturbation scheme that removes degeneracies without increasing the cost of our algorithm. 
\end{itemize}
Orlin~\cite{Orlin10} previously described a perturbation scheme. We will argue in the appendix that his description is incomplete. The Cole-Gkatzelis algorithm~\cite{Cole-Gkatzelis} for Nash social welfare makes use of Orlin's perturbation scheme. Our perturbation scheme is applicable in both algorithms. 

This paper is organized as follows. We introduce some notation and concepts in Section~\ref{sec: Preliminaries} and describe the algorithm in Section~\ref{sec: algorithm}. Sections~\ref{sec: Analysis} and~\ref{sec: Evolution} prove the bound on the number of phases. Section~\ref{sec: Balanced Flows} contains the improved algorithm for computing balanced flows for nondegenerate instances and Section~\ref{sec: Perturbation} introduces the novel perturbation scheme.




\section{Preliminaries}\label{sec: Preliminaries}




For a vector $v =(v_1,\ldots,v_n)$, let $\abs{v} = \abs{v_1} + \ldots + \abs{v_n}$ and $\norm{v} = (v_1^2 + \ldots + v_n^2)^{1/2}$ be the $\ell_1$ and $\ell_2$-norm of $v$, respectively. 

Let $p=(p_1,p_2,...,p_n)$ denote the vector of prices of goods $1$
to $n$, so they are also the budgets of agents $1$ to $n$, if goods are completely sold. Each agent only buys its favorite goods, that is, the
goods with the maximum ratio of utility and price. Define the \emph{bang per buck} of buyer $b_i$ to be
$\alpha_i=\max_j\{u_{ij}/p_j\}$. 
For a price vector $p$, the \emph{equality network} $N_p$ is a flow network with vertex set
$\sset{s,t} \cup B \cup C$, where $s$
is a source node, $t$ is a sink node, $B$ is the set of buyers, and $C$ is the set of goods, and the following edge set: \medskip

\begin{compactenum}[\hspace{\parindent}(1)]
    \item An edge $(s,b_i)$ with capacity $p_i$ for each $b_i \in B$. 
    \item An edge $(c_i,t)$ with capacity $p_i$ for each $c_i \in C$.
    \item An edge $(b_i,c_j)$ with infinite capacity whenever $u_{ij}/p_j=\alpha_i$. We use $E_p$ to denote these edges. 
\end{compactenum}\medskip

Our task is to find a positive price vector $p$ such that there is a
flow in which all edges from $s$ and to $t$ are saturated, i.e.,
$(s,B\cup C\cup t)$ and $(s\cup B\cup C, t)$ are both minimum
cuts. When this is satisfied, all goods are sold and all of the
money earned by each agent is spent on goods of maximum utility per unit of money.

For a set $S$ of buyers define its neighborhood $\Gamma(S)=\set{c \in C}{\text{$(b,c)\in
E_p$ for some $b \in S$}}$. 
Clearly, there is no edge in $E_p$ from $S$ to
$C\setminus\Gamma(S)$.

With respect to a flow $f$, define the surplus $r(b_i)$ of a buyer $i$ as $r(b_i)=p_i-\sum_j f_{ij}$,  where $f_{ij}$ is the amount of flow on  the edge $(b_i,c_j)$, and define the surplus $r(c_j)$ of a good $j$ as $r(c_j)=p_j-\sum_i f_{ij}$, Define the surplus vector of buyers to be
$r =(r(b_1),r(b_2),...,r(b_n))$. Also, define the total surplus
to be $|r|=\sum_i r(b_i)$, which is also $\sum_j r(c_j)$ since
the total capacity from $s$ and to $t$ are both equal to $\sum_i
p_i$. For convenience, we denote the surplus vector of flow $f'$
by $r'$. In the network corresponding to market clearing
prices, the total surplus of a maximum flow is zero. For a set $X$ of buyers, let $\rmin(X) = \min \set{r(b)}{b \in X}$ and $\rmax(X) = \max \set{r(b)}{b \in X}$ be the minimal and maximal surplus of any buyer in $X$. For the empty set of buyers, $\rmax(\emptyset) = 0$. The outflow of buyer $b_i$ is $\outflow(b_i) = \sum_j f_{ij}$. 

A maximum flow is \emph{balanced} if it minimizes the 2-norm of the surplus vector of the buyers. The concept of a balanced flow was introduced by Devanur et al.~\cite{DPSV08}.

\begin{lemma}[\cite{DPSV08,\DM}]\label{lem:balanced flow} Balanced flows exist. They can be computed with $n$ maximum flow computations. If $f$ is a balanced flow, buyers $b_i$ and $b_j$ have equality edges connecting them to the same good $c$ and there is positive flow from $b_i$ to $c$, then the surplus of $b_j$ is no larger than the surplus of $b_i$. Let $f_0$ be a maximum flow and let $C_0$ be the goods that are completely sold with respect to $f_0$. Then there is a balanced flow in which all goods in $C_0$ are completely sold. \end{lemma}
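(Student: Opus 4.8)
The plan is to prove Lemma~\ref{lem:balanced flow} in the order its clauses are stated, since each clause rests on the standard machinery for balanced flows.

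First I would establish existence. The space of maximum flows in $N_p$ is a nonempty, compact, convex polytope (it is cut out by the flow conservation equations, the capacity constraints, and the single equation fixing the flow value to the maximum-flow value). The map $f \mapsto \norm{r(f)}$ is continuous, so it attains a minimum on this polytope; moreover, since $r(f)$ is an affine function of $f$ and $\norm{\cdot}$ is strictly convex, the minimizer is unique. This gives existence (and uniqueness) of the balanced flow.

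Next, the algorithmic claim: a balanced flow can be computed with $n$ maximum-flow computations. The idea, following Devanur et al.~\cite{DPSV08}, is the standard ``fair flow'' / lexicographic-balancing scheme. Compute a maximum flow; look at the maximum surplus $\rmax$ among buyers; show (via a max-flow/min-cut argument on the residual network) that the set of buyers that are forced to have surplus at least $\rmax$ — equivalently, a tight Hall-type set $S$ with $\abs{S}p$-mass exceeding what $\Gamma(S)$ can absorb — can be identified by one max-flow computation, that these buyers keep that surplus in \emph{every} maximum flow, and that the problem then decomposes: the buyers in $S$ together with $\Gamma(S)$ form a subnetwork that is already balanced, and we recurse on the complement. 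Each recursion level ``finishes off'' at least one buyer, so at most $n$ max-flow computations suffice. I would cite~\cite{DPSV08,\DM} for the details rather than reproving the decomposition.

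For the third clause — if $f$ is balanced and buyers $b_i, b_j$ both have equality edges to a good $c$ with $f_{ic} > 0$, then $r(b_j) \le r(b_i)$ — I would argue by contradiction. If $r(b_j) > r(b_i)$, reroute a sufficiently small amount $\delta > 0$ of flow: decrease $f_{ic}$ by $\delta$ and increase $f_{jc}$ by $\delta$ (legal because $(b_j,c)$ is an equality edge with infinite capacity and $f_{ic} > 0$, and the flow value into $c$ is unchanged so it remains a maximum flow). This raises $r(b_i)$ by $\delta$ and lowers $r(b_j)$ by $\delta$, leaving all other surpluses unchanged. Choosing $\delta$ small relative to $r(b_j) - r(b_i)$, the new surplus vector has strictly smaller $2$-norm (the convexity inequality $a^2 + b^2 > (a+\delta)^2 + (b-\delta)^2$ whenever $a < b$ and $0 < \delta < b - a$), contradicting minimality. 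The one subtlety is ensuring the reroute keeps the flow feasible and maximum, which it does since it is a pure edge swap at $c$.

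For the last clause, let $f_0$ be a maximum flow and $C_0$ the goods completely sold by $f_0$. Since every maximum flow saturates every min cut edge, and $C_0$ being sold under one maximum flow need not force it under another, I would instead add the constraint ``$r(c) = 0$ for all $c \in C_0$'' to the maximum-flow polytope: $f_0$ witnesses that this restricted polytope is nonempty, it is still compact and convex, and minimizing $\norm{r}$ over it yields a flow $f$ that completely sells all of $C_0$. It remains to check that $f$ is a balanced flow in the full sense, i.e.\ that forcing $C_0$ to be sold does not cost anything in terms of the surplus $2$-norm. This follows because any maximum flow can be modified, without changing any buyer surplus, to also saturate the goods in $C_0$: push flow along augmenting paths inside the set of maximum flows from goods with positive surplus to goods in $C_0$ with deficit — such paths exist precisely because $f_0$ achieves zero surplus on $C_0$, so the min cut does not separate $C_0$ from the rest in the residual sense. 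Hence the minimum of $\norm{r}$ over the restricted polytope equals the minimum over the full polytope, and $f$ is genuinely balanced. I expect this last verification — that restricting to ``$C_0$ fully sold'' does not increase the optimal surplus norm — to be the main obstacle, and I would handle it with a careful residual-network/augmenting-path argument, again leaning on~\cite{DPSV08,\DM}.
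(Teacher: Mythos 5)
The paper does not actually prove this lemma; it is cited from~\cite{DPSV08} and~\cite{\DM}, so there is no in-paper proof to compare line by line. Your sketch of the first three clauses is the standard and correct argument: existence by compactness and convexity of the max-flow polytope together with convexity of the $2$-norm; the $n$ max-flow computations by the recursive cut/decompose scheme of~\cite{DPSV08}; and the third clause by the small rerouting at $c$ that strictly decreases $\norm{r}^2$ whenever $r(b_j) > r(b_i)$. One small overclaim: the argument shows the \emph{surplus vector} of a balanced flow is unique, not the flow itself, since $f \mapsto \norm{r(f)}$ is convex but not strictly convex in $f$; this does not affect the lemma.

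For the fourth clause, however, you and the paper go genuinely different routes, and yours has a gap that you gesture at but do not close. Your plan is to show that any max flow $f$ can be turned, without changing buyer surpluses, into a max flow that fully sells $C_0$, by pushing along residual cycles avoiding $s$. The subtle issue is the choice of where the rerouted flow is \emph{taken from}: if the good $c'$ at which you decrease the flow on $(c',t)$ happens to lie in $C_0$, the augmentation makes no net progress on selling $C_0$. Ensuring that such a cycle exists with $c' \notin C_0$ requires an actual argument (one can run a reachability computation from the deficient $c$ through residual edges inside $B \cup C$, observe that if every reachable saturable $c'$ lay in $C_0$ then a counting/flow-conservation argument together with $f_0$ would produce a buyer in the reachable set with positive surplus, hence an $s$-$t$ augmenting path contradicting maximality of $f$). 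You defer exactly this step to the references, which is fair as a citation but leaves the sketch incomplete. By contrast, the paper's own treatment in the proof of Lemma~\ref{balanced flow acyclic} avoids augmenting paths entirely: it observes that for a balanced flow $f$, $\sset{s} \cup B' \cup C'$ is a minimum cut (where $B'$ are the positive-surplus buyers and $C'$ the goods they pay), that both $f$ and any $f'$ selling $C_0$ must saturate this cut with no reverse flow, and then simply \emph{glues} $f$ on the $B'\cup C'$ side to $f'$ on the complementary side. The glued flow is automatically maximum, has the balanced surplus vector, and sells $C_0$. That argument is more direct, needs no residual-path bookkeeping, and is also what makes the forest-case algorithm run in $O(n^2)$. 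Your approach is correct in principle and arguably more elementary, but the paper's min-cut decomposition is cleaner and avoids the pitfall noted above.
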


We next show that for nondegenerate instances the equality graph with respect to any price vector is a forest.

\begin{lemma}\label{general position}
Consider any cycle $D$ in the utility graph. If $D \subseteq E_p$ for some price vector $p$, the set of utilities is degenerate. 
\end{lemma}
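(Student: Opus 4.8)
The plan is to make the algebraic content of condition~\eqref{degeneracy} explicit by substituting the defining relation of equality edges into the two color classes of the cycle and cancelling common factors. First I would fix notation: since the utility graph is bipartite with sides $B$ and $C$, any cycle $D$ has the form $b_{i_1}, c_{j_1}, b_{i_2}, c_{j_2}, \ldots, b_{i_k}, c_{j_k}, b_{i_1}$ for some $k \ge 2$, with the buyers $b_{i_1},\ldots,b_{i_k}$ pairwise distinct and the goods $c_{j_1},\ldots,c_{j_k}$ pairwise distinct. Its edge set is partitioned by the $2$-coloring into $D_0 = \sset{(b_{i_m},c_{j_m}) : 1 \le m \le k}$ and $D_1 = \sset{(b_{i_{m+1}},c_{j_m}) : 1 \le m \le k}$, where indices of buyers are read modulo $k$ (so $b_{i_{k+1}} = b_{i_1}$); one checks that consecutive edges around $D$ indeed lie in opposite classes.

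Next I would invoke the hypothesis $D \subseteq E_p$. By the definition of $E_p$, every edge $(b_i,c_j)\in E_p$ satisfies $u_{ij}/p_j = \alpha_i$, i.e.\ $u_{ij} = \alpha_i p_j$ with $\alpha_i, p_j > 0$. Applying this to the edges of $D_0$ yields $\prod_{e\in D_0} u_e = \prod_{m=1}^{k} \alpha_{i_m} p_{j_m} = \bigl(\prod_{m=1}^{k}\alpha_{i_m}\bigr)\bigl(\prod_{m=1}^{k}p_{j_m}\bigr)$, and applying it to the edges of $D_1$ yields $\prod_{e\in D_1} u_e = \prod_{m=1}^{k} \alpha_{i_{m+1}} p_{j_m} = \bigl(\prod_{m=1}^{k}\alpha_{i_{m+1}}\bigr)\bigl(\prod_{m=1}^{k}p_{j_m}\bigr)$.

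Finally I would observe that $m\mapsto m+1$ is a cyclic permutation of $\sset{1,\ldots,k}$, so the multiset $\sset{\alpha_{i_{m+1}} : 1\le m\le k}$ equals $\sset{\alpha_{i_m} : 1\le m\le k}$ and the two $\alpha$-products coincide, while the two $p$-products are literally identical. Hence $\prod_{e\in D_0} u_e = \prod_{e\in D_1} u_e$, which is exactly~\eqref{degeneracy}, so the set of utilities is degenerate. (As a consequence, for a nondegenerate instance no equality graph $G_p$ can contain a cycle, giving the forest claim stated before the lemma.)

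There is essentially no hard step here; this is a short direct computation. The only points requiring care are the bookkeeping that the $2$-coloring of the bipartite cycle is precisely the pair $(D_0,D_1)$ described above, and the remark that all $\alpha_i$ and $p_j$ are strictly positive so that taking products and cancelling common factors is legitimate. The reason degeneracy is the natural obstruction is visible in the argument: the prices $p_{j_m}$ always cancel between the two classes, and the $\alpha_{i_m}$ cancel because each buyer on the cycle contributes exactly one edge to each class, so the utilities alone must satisfy the multiplicative identity.
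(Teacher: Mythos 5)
Your proof is correct and uses the same underlying mechanism as the paper: each equality edge $(b_i,c_j)\in E_p$ gives $u_{ij}=\alpha_i p_j$, and since each buyer on the cycle contributes exactly one edge to each color class (and each good likewise), both the $\alpha$-factors and the $p$-factors cancel between $\prod_{e\in D_0}u_e$ and $\prod_{e\in D_1}u_e$. The paper phrases this as a local price ratio $p(c_1)=(u_{e_1}/u_{e_0})\,p(c_0)$ at each buyer and then multiplies around the cycle, but this is the same computation written locally rather than globally.
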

\begin{proof}
Consider any buyer $b$ in the cycle and let $e_0 = (c_0,b) \in D_0$ and $e_1 = (b,c_1) \in D_1$ be the edges in $D$ incident to $b$. Then ${u_{e_0}}/{p(c_0)}   = {u_{e_1}}/{p(c_1)}$, and therefore $p(c_1)= ({u_{e_1}}/{u_{e_0}}) p(c_0)$, and hence ${\prod_{e \in D_1} u_e}/{\prod_{e \in D_0} u_e} = 1$. 
\end{proof}

\paragraph{Fisher Markets:} In Fisher markets, the buyers come with a budget to the market. They do not have to earn their budget by selling their goods. Fisher markets are a special case of Arrow-Debreu markets. Let $a_i$ be the budget of the $i$-th buyer. Consider the Arrow-Debreu market, where the $i$-th buyer owns $a_i$ units of each good. Then in an equilibrium his budget will be $a_i \sum_j p_j$ and hence a solution to the Fisher market can be obtained from the solution to the Arrow-Debreu market by dividing all prices and money flows by $P = \sum_j p_j$.

\section{The Algorithm}\label{sec: algorithm}

The algorithm is shown in Figure~\ref{fig:algo} and refines the
one of Duan and Mehlhorn~\cite{\DM}. It starts with all prices $p_i$ equal to one and a balanced flow $f$ in $N_p$. It works in phases (called iterations in~\cite{\DM}). In each phase, we first determine a set $S$ of buyers with surplus and the set $\Gamma(S)$ of goods connected to them by equality edges. We increase the prices of the goods in $\Gamma(S)$ and the money flow into these goods by a common factor $x > 1$. Let $p'$ be the new price vector and let $f'$ be the resulting flow. We turn $f'$ into a balanced flow $f''$ and make sure that all goods that are completely sold with respect to $f$ are also completely sold with respect to $f''$. We set $f$ to $f''$, $p$ to $p'$, 
and repeat. Once the total surplus is less than $\epsilon$, where $\epsilon = 1/(8 n^{4n} U^{3n})$, we exit from the loop and compute the equilibrium prices from the current price vector $p$ and the current flow $f$. The last step is exactly as in~\cite{\DM} and will not be discussed further.


\begin{figure*}[t]
\centerline{\framebox{\parbox{5.0in}{
\begin{tabbing}
555\=555\=555\=555\=\kill
\>{\bf Part A:} {Set $\epsilon = {1}/(8n^{4n}U^{3n})$; }\\[0.3em]
\>Set $p_i=1$ for all $i$ and set $f$ to a balanced flow in $N_p$; \\[0.3em]
\>{\bf While} $|r(B)| \ge \epsilon$;\\[0.3em]
\>\>Sort the buyers by their surpluses in decreasing order: $b_1,b_2,...,b_n$; \\[0.3em]
\>\>\parbox{0.8\textwidth}{Find the smallest  $\ell \ge 1$ for which $S = S(r(b_\ell))$ satisfies $\outflow(b_i) = 0$ and $c_i \not\in \Gamma(S)$ for every $b_i$ with $r(b_\ell) > r(b_i) \ge r(b_\ell)/(1 + 1/n)$ and let $\ell=n$ when there is no such $\ell$;}\\[0.3em]
\>\>Let $S = S(r(b_\ell))$; \\[0.3em]
\>\>\parbox{0.8\textwidth}{Determine $\xmax$, $\xeq$, $x_{23}$, $x_{24}$ and $x_2$ and set $x = \min(\xmax,\xeq,x_{23},x_{24},x_2)$;}\\[0.3em]
\>\>\parbox{0.8\textwidth}{Update prices and flow according to (\ref{new prices}) and (\ref{new flow interior}); let $f'$ be the new flow and $p'$ be the new price vector;}\\[0.3em]
\>\>\parbox{0.8\textwidth}{Compute a balanced flow $f''$ in $N_{p'}$ with the property that goods that are completely sold with respect to $f$ are also completely sold with respect to $f''$;}\\[0.3em]
\>\> Set $f$ to $f''$ and $p$ to $p'$;\\[0.3em]
\>{\bf EndWhile} \\[0.3em]
\>{\bf Part B:} Compute equilibrium prices from $f$ and $p$ as in~\cite{\DM};
\end{tabbing}\vspace{-1em}
}}}\caption{The complete algorithm}\label{fig:algo}
\end{figure*}



We next detail the phases. We first explain the choice of $S$. It is more refined than in~\cite{\DM} and crucial for the improved bound on the number of phases. For a resource bound $r_0$, let $S(r_0) = \set{b \in B}{r(b) \ge r_0}$ be the set of buyers with surplus at least $r_0$. In our algorithm we choose a particular resource bound $r_0$. We order the buyers in decreasing order of surpluses: $r(b_1) \ge r(b_2) \ge \ldots \ge r(b_{n-1}) \ge r(b_n)$. Let $\ell \ge 1$ be minimal\footnote{\cite{\DM} uses a simpler definition: let $\ell$ be minimal such that $r(b_{\ell + 1}) < r(b_\ell)/(1 + 1/n)$. For this definition, Lemma~\ref{surplus bounds} does not hold.} such that 
$\outflow(b_j) = 0$ and $c_j \not\in \Gamma(S)$ for every $b_j$ with $r(b_\ell) > r(b_j) \ge r(b_\ell)/(1 + 1/n)$, where $S = S(r(b_\ell))$. If no such $\ell$ exists, set $\ell = n$ and $S = S(r(b_n))$. Let $r_0 = r(b_\ell)$. 

\begin{lemma} $r(b_\ell) > 0$. \end{lemma}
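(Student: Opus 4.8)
The plan is to show that the selection rule always returns an index $\ell$ that sits among the buyers with strictly positive surplus, whence $r(b_\ell) > 0$ is immediate.

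First I would record that whenever the loop body runs we have $|r(B)| \ge \epsilon > 0$, and that every surplus is nonnegative: $f$ is a balanced, hence feasible, flow in $N_p$, the edge $(s,b_i)$ has capacity $p_i$, and flow conservation gives $\outflow(b_i) \le p_i$, so $r(b_i) = p_i - \outflow(b_i) \ge 0$. Therefore at least one buyer has positive surplus. I would then let $k \ge 1$ be the largest index with $r(b_k) > 0$; since the buyers are sorted by decreasing surplus, $r(b_1) \ge \cdots \ge r(b_k) > 0 = r(b_{k+1}) = \cdots = r(b_n)$, where the trailing block of zeros may be empty (in which case $k = n$).

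Next I would check that the index $k$ already satisfies the condition defining $\ell$, and that it does so vacuously: for $S = S(r(b_k))$, any buyer $b_j$ with $r(b_k) > r(b_j) \ge r(b_k)/(1+1/n)$ would satisfy $r(b_j) > 0$ (because $r(b_k) > 0$), hence $j \le k$, and therefore $r(b_j) \ge r(b_k)$, contradicting $r(b_j) < r(b_k)$. Thus no such $b_j$ exists, so the requirements ``$\outflow(b_j)=0$'' and ``$c_j \notin \Gamma(S)$'' hold trivially for all relevant $b_j$. Consequently a valid index exists, the fallback ``$\ell = n$ when there is no such $\ell$'' is never invoked, and the minimal valid index $\ell$ satisfies $\ell \le k$. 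Hence $r(b_\ell) \ge r(b_k) > 0$.

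I do not anticipate a genuine difficulty; the one place that needs care is the emptiness argument in the previous paragraph, where the strict inequality $r(b_\ell) > r(b_j)$ in the selection rule, together with the positivity forced by the factor $1/(1+1/n)$, is exactly what makes the range of relevant buyers empty once we are at the last buyer of positive surplus — so the criterion is automatically met there, pinning $\ell$ down to that range or below.
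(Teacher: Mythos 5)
Your proof is correct and follows essentially the same route as the paper's: identify the last buyer $b_k$ with positive surplus, observe that no buyer has surplus in the half-open interval $[r(b_k)/(1+1/n), r(b_k))$ (because everyone below $b_k$ has surplus zero and everyone at or above $b_k$ has surplus at least $r(b_k)$), hence the selection condition holds vacuously at $k$ and the minimal $\ell$ satisfies $\ell \le k$. The paper states this more tersely — ``either $k=n$ or $r(b_{k+1})=0 < r(b_k)/(1+1/n)$'' — but the underlying reasoning is the same; your version just spells out the emptiness argument explicitly.
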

\begin{proof} If the loop is entered, $\abs{r(B)} > 0$. Let $k$ be maximal with $r(b_k) > 0$. Then either $k = n$ or $r(b_{k+1}) = 0 < r(b_k)/(1 + 1/n)$. Thus $\ell \le k$. \end{proof}

The index $\ell$ is readily determined. 

\begin{lemma} The index $\ell$ can be found in $O(n^2)$ time. \end{lemma}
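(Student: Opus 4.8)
The plan is to show that determining $\ell$ reduces to $O(n)$ checks, each costing $O(n)$ time. First I would spend $O(n \log n)$ time sorting the buyers by surplus (or $O(n^2)$ if we are content with a trivial sort), so that henceforth we may assume $r(b_1) \ge r(b_2) \ge \dots \ge r(b_n)$. The key observation is that the set $S = S(r(b_\ell))$ depends on $\ell$ only through the threshold $r(b_\ell)$, and as $\ell$ ranges over $1, \dots, n$ these sets form a nested chain $S(r(b_1)) \subseteq S(r(b_2)) \subseteq \dots \subseteq S(r(b_n)) = B$; in fact $S(r(b_\ell)) \supseteq \{b_1, \dots, b_\ell\}$ and equals this set unless there are ties in surplus.

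Next I would iterate $\ell$ from $1$ upward. For a fixed candidate $\ell$, the condition to be tested is: for every buyer $b_j$ whose surplus lies in the half-open band $[\,r(b_\ell)/(1+1/n),\, r(b_\ell)\,)$, one has $\outflow(b_j) = 0$ and $c_j \notin \Gamma(S)$. The buyers $b_j$ falling in this band form a contiguous block in the sorted order, namely those with index greater than the last buyer of surplus exactly $r(b_\ell)$ and surplus at least $r(b_\ell)/(1+1/n)$; this block can be located in $O(\log n)$ time by binary search, or simply $O(n)$ time by a scan. For each such $b_j$, testing $\outflow(b_j) = 0$ is $O(1)$ given the current flow $f$, and testing $c_j \notin \Gamma(S)$ amounts to checking whether $c_j$ has an equality edge to some buyer in $S$; since $\Gamma(S)$ can be precomputed once we fix $S$, and $S$ changes monotonically as $\ell$ increases, I would maintain $\Gamma(S)$ incrementally — each time $\ell$ advances and new buyers enter $S$, we add their equality neighbors, which over the whole run touches each of the $O(n^2)$ equality edges once, hence $O(n^2)$ total. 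The per-$\ell$ band check is then $O(n)$, and there are at most $n$ values of $\ell$, giving $O(n^2)$ overall.

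The main subtlety — and the step I expect to be the real obstacle — is handling the interaction between the changing set $S$ and the changing band of buyers $b_j$ to be inspected, and making sure the incremental bookkeeping of $\Gamma(S)$ is consistent with the definition (note that the condition involves $S = S(r(b_\ell))$, which is the set for the current candidate $\ell$, not a previously fixed one). One clean way to sidestep delicate amortization is the brute-force route: for each of the $n$ candidate values of $\ell$, recompute $S(r(b_\ell))$ and $\Gamma(S)$ from scratch in $O(n + n^2) = O(n^2)$ time and scan the band, for a total of $O(n^3)$ — but this is too slow, so the amortized version is needed. I would therefore argue that over all $\ell$, the work spent enlarging $\Gamma(S)$ is $O(n^2)$ because each equality edge is processed at most once, the work spent scanning surplus bands is $O(n^2)$ because each buyer is examined for $O(1)$ values of $\ell$ (its index lies in the band $[r(b_\ell)/(1+1/n), r(b_\ell))$ for at most a bounded range of $\ell$, since surpluses within a factor $1+1/n$ of each other can be charged carefully) — or, if that charging is awkward, simply allow $O(n)$ work per $\ell$ for the band scan, still $O(n^2)$ total. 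Summing the sort, the incremental neighborhood maintenance, and the band checks gives the claimed $O(n^2)$ bound.
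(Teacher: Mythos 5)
Your proposal is correct and follows the same route as the paper's proof: scan buyers in decreasing surplus order, test each candidate $\ell$ against the band $[r(b_\ell)/(1+1/n),\,r(b_\ell))$, and maintain $\Gamma(S)$ incrementally as a bit-vector so that each equality edge is processed only once, giving an $O(n^2)$ total (the paper's pseudocode additionally arranges the scan so the band is swept only $O(n)$ times in total, but your coarser $O(n)$-per-candidate accounting already suffices for the claimed bound).
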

\begin{proof} We use the straightforward algorithm of scanning through the buyers in order of decreasing surplus. \smallskip

Set $\ell$ to 1, $\Gamma$ to $\Gamma(b_1)$, and enter a loop.

Increment $\ell$ as long as $\ell < n$ and $r(b_{\ell + 1}) = r(b_\ell)$. Whenever $\ell$ is increased, add $\Gamma(b_\ell)$ to $\Gamma$. Maintain $\Gamma$ as a bit-vector. 

If $\ell$ reaches $n$ return it as the desired value.

Otherwise, do the following for $j = \ell+1$, $\ell+2$, \ldots: if $j$ reaches $n+1$ or $r(b_j) < r(b_\ell)/(1 + 1/n)$, return $\ell$. If $\outflow(b_j) \not= 0$ or $c_j \in \Gamma$, add $\Gamma(\sset{b_{\ell + 1}, \ldots, b_j})$ to $\Gamma$, set $\ell = j$, and break from the inner loop. \smallskip

The algorithm runs in time proportional to the number of edges of the utility graph and hence in time $O(n^2)$. 
\end{proof}


\begin{lemma}\label{price increase and network} There are no edges in $E_p$ from $S$ to $C \setminus \Gamma(S)$, and the edges from $\barS$ to $\Gamma(S)$ are not carrying flow.
\end{lemma}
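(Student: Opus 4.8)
The first claim is essentially the definition of $\Gamma(S)$, and I would simply invoke it: since $\Gamma(S) = \set{c \in C}{(b,c)\in E_p \text{ for some } b \in S}$, every equality edge leaving a vertex of $S$ has its head in $\Gamma(S)$ by construction, so there is no edge of $E_p$ from $S$ to $C \setminus \Gamma(S)$. This is exactly the remark made right after the definition of $\Gamma$ in Section~\ref{sec: Preliminaries}, so nothing new is needed here.

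For the second claim I would argue by contradiction, exploiting that $f$ is a balanced flow and that $S = S(r_0)$ is a surplus threshold set, where $r_0 = r(b_\ell)$: every $b \in S$ has $r(b) \ge r_0$ and every $b \in \barS$ has $r(b) < r_0$. Suppose some edge $(b_j,c)$ with $b_j \in \barS$ and $c \in \Gamma(S)$ carries positive flow under $f$. Because $c \in \Gamma(S)$, there is a buyer $b_i \in S$ with $(b_i,c) \in E_p$; thus $b_i$ and $b_j$ both have equality edges to the common good $c$. Now apply Lemma~\ref{lem:balanced flow}: since $f$ is balanced and there is positive flow from $b_j$ to $c$, the surplus of $b_i$ is no larger than the surplus of $b_j$, i.e.\ $r(b_i) \le r(b_j)$. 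But $b_i \in S$ gives $r(b_i) \ge r_0$ and $b_j \in \barS$ gives $r(b_j) < r_0$, a contradiction. Hence no edge from $\barS$ to $\Gamma(S)$ carries flow.

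The only point that needs care is orienting the balanced-flow inequality correctly — the good $c$ is ``pulled up'' by the high-surplus buyer $b_i$, so one must read Lemma~\ref{lem:balanced flow} with $b_j$ (the buyer sending positive flow) playing the role of the buyer whose surplus dominates. The remaining ingredients are purely bookkeeping: the algorithm maintains $f$ as a balanced flow at all times (it starts balanced and is recomputed to be balanced at the end of every phase), and membership in $S(r_0)$ is decided solely by the threshold $r_0$. I do not expect a genuine obstacle; the statement is a direct structural consequence of balancedness together with the definition of $\Gamma$.
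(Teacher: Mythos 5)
Your proposal is correct and follows the same route as the paper: the first claim is read off the definition of $\Gamma(S)$, and the second is established by contradiction via Lemma~\ref{lem:balanced flow} applied to a flow-carrying edge $(b_j,c)$ with $b_j\in\barS$ and an equality edge $(b_i,c)$ with $b_i\in S$, yielding $r(b_j)\ge r(b_i)\ge r_0$ and hence a contradiction with $b_j\notin S$. The only cosmetic difference is that the paper states the conclusion as $b_j\in S$ rather than as the explicit numeric contradiction $r_0\le r(b_j)<r_0$.
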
 
\begin{proof} The first claim is immediate from the definition of $\Gamma(S)$. For the second claim, assume otherwise and let $b_j \not\in S$  and $c_k \in \Gamma(S)$ be such that there is positive flow on the edge $(b_j,c_k)$. Since $c_k \in \Gamma(S)$, there must a buyer $b_i \in S$ with $(b_i,c_k) \in E_p$. Since the flow is balanced, $r(b_j) \ge r(b_i)$ by Lemma~\ref{lem:balanced flow}, and hence $b_j \in S$. \end{proof}

We raise the prices of the goods in $\Gamma(S)$ and the flow on the edges incident to them 
by a  common factor $x > 1$. We also increase the flow from $s$ to buyers in $S$ such that flow conservation holds. 
This give us a new price vector $p'$ and a new flow $f'$. Except for the modified definition of $S$, this is exactly as in~\cite{Duan-Mehlhorn:Arrow-Debreu-Market}. Observe that the surpluses of the goods in $\Gamma(S)$ stay zero.
Formally, 
\begin{align}\label{new prices}
p'_j &= \begin{cases}
x\cdot p_j     & \text{if $c_j\in \Gamma(S)$};\\
p_j & \text{if $c_j\notin\Gamma(S)$}.
\end{cases}\\
\label{new flow interior}
f'_{ij} &= \begin{cases}
x\cdot f_{ij}      & \text{if $c_j\in \Gamma(S)$};\\
f_{ij}  & \text{if $c_j\notin\Gamma(S)$}.
\end{cases}
\end{align}
The changes on the edges incident to $s$ and $t$ are implied by flow conservation.
Since there are no edges from $S$ to $C \setminus \Gamma(S)$, and the edges from $\barS$ to $\Gamma(S)$ are not carrying flow, an equivalent definition of the updated flow is 
$f'_{ij} = x f_{ij}$ if $b_i \in S$ and $f'_{ij} = f_{ij}$ if $b_i \in \barS$.

The change of prices and flows affects the surpluses of the buyers, some go up and some go down. We distinguish between four types of buyers, depending on whether a buyer $b$ belongs to $S$ or not and whether the good
owned by $b$ belongs to $\Gamma(S)$ or not. 
The effect of the change on the surpluses of the buyers is given by the following theorem.

\begin{lemma}[\cite{\DM}]\label{thm:adjusted-flow}
Given a maximum flow $f$ in $N_p$, a set $S$ of buyers such that all goods in $\Gamma(S)$ are completely sold and there is no flow from $\barS$ to $\Gamma(S)$, and a sufficiently small parameter $x > 1$, the flow $f'$ defined in (\ref{new flow interior}) is a feasible flow in the equality network with respect to the prices in (\ref{new prices}). The surplus of each good remains unchanged, and the
surpluses of the buyers become:
\[
r'(b_i)=\left\{
\begin{array}{lr}
x\cdot r(b_i)  & \text{if $b_i\in S, c_i\in\Gamma(S)$} \\
& \text{(type 1 buyer)};  \\
(1-x)p_i+x\cdot r(b_i)     & \text{if $b_i\in S, c_i\notin\Gamma(S)$}\\
& \text{ (type 2 buyer)}; \\
(x-1)p_i+r(b_i)        & \text{if $b_i\notin S, c_i\in\Gamma(S)$}\\
&\text{(type 3 buyer)};\\
r(b_i)          & \text{if $b_i\notin S,
c_i\notin\Gamma(S)$}\\
& \text{(type 4 buyer)}. 
\end{array}
\right.
\]
\end{lemma}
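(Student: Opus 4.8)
The plan is to verify the claim in two stages: first that $f'$ is a feasible flow in $N_{p'}$, and second that the surplus formulas hold. For feasibility, I would first check that $f'$ respects edge capacities. The only edges whose capacities change are $(s,b_i)$ for $b_i\in S$ and $(c_j,t)$ for $c_j\in\Gamma(S)$; the equality edges have infinite capacity, so I only need that every equality edge of $N_{p'}$ that carries flow under $f'$ is also present in $N_{p'}$ (the equality edges may change when prices change). Here I use that scaling the prices of all goods in $\Gamma(S)$ by the common factor $x$ does not change the ratios $u_{ij}/p_j$ \emph{within} $\Gamma(S)$, so every edge $(b_i,c_j)$ with $b_i\in S$, $c_j\in\Gamma(S)$, $(b_i,c_j)\in E_p$ remains an equality edge; and for $b_i\in S$ the only goods receiving flow from $b_i$ under $f$ lie in $\Gamma(S)$ (no edges from $S$ to $C\setminus\Gamma(S)$), so all of $b_i$'s outgoing flow is on edges still present. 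For $b_i\in\barS$, $f'_{ij}=f_{ij}$, and the edges from $\barS$ to $\Gamma(S)$ carry no flow by hypothesis, so the flow-carrying edges out of $\barS$ go to $C\setminus\Gamma(S)$ whose prices are unchanged — still equality edges. (The parameter $x$ must be ``sufficiently small'' precisely so that no good in $\Gamma(S)$ becomes oversold, i.e. $f'_{ij}\le p'_j$ on the sink edges; since good $j\in\Gamma(S)$ has surplus zero under $f$, scaling both its incoming flow and its price by $x$ keeps surplus zero, so this is automatic — but for type-2/type-3 effects one needs $x$ bounded, which is what $\xmax,\xeq,\ldots$ encode.)

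Next I would check flow conservation at each buyer and good. At a good $c_j\in\Gamma(S)$: incoming flow is $\sum_i f'_{ij}=x\sum_i f_{ij}=x p_j=p'_j$ (using that $c_j$ was completely sold under $f$), matching the new sink-edge capacity, so surplus stays $0$. At a good $c_j\notin\Gamma(S)$: nothing changes, surplus unchanged. At a buyer, conservation defines the new $(s,b_i)$ flow as $\outflow'(b_i)=\sum_j f'_{ij}$, which is legal as long as it does not exceed the capacity $p'_i$ — again a constraint absorbed into the bound on $x$. This establishes feasibility of $f'$ in $N_{p'}$ with all good-surpluses unchanged.

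For the surplus formulas, I would simply compute $r'(b_i)=p'_i-\outflow'(b_i)$ in each of the four cases, using $\outflow'(b_i)=x\cdot\outflow(b_i)$ if $b_i\in S$ and $\outflow'(b_i)=\outflow(b_i)$ if $b_i\in\barS$ (the equivalent description of $f'$ given just before the lemma), together with $\outflow(b_i)=p_i-r(b_i)$. Type 1 ($b_i\in S$, $c_i\in\Gamma(S)$): $p'_i=xp_i$, $\outflow'=x(p_i-r(b_i))$, so $r'(b_i)=xp_i-x(p_i-r(b_i))=x\,r(b_i)$. Type 2 ($b_i\in S$, $c_i\notin\Gamma(S)$): $p'_i=p_i$, $\outflow'=x(p_i-r(b_i))$, so $r'(b_i)=p_i-xp_i+x\,r(b_i)=(1-x)p_i+x\,r(b_i)$. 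Type 3 ($b_i\notin S$, $c_i\in\Gamma(S)$): $p'_i=xp_i$, $\outflow'=p_i-r(b_i)$, so $r'(b_i)=xp_i-p_i+r(b_i)=(x-1)p_i+r(b_i)$. Type 4: nothing changes, $r'(b_i)=r(b_i)$. Each is a one-line calculation.

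The only genuine subtlety — and the main thing to get right — is the feasibility argument: one must be careful that the equality \emph{edge set} changes when prices change, so feasibility is not automatic from $f\le p$ for the old network. The resolution is the observation above that (a) ratios inside $\Gamma(S)$ scale uniformly, hence edges within the block $S\times\Gamma(S)$ persist, and (b) the structural facts from Lemma~\ref{price increase and network} (no $E_p$-edges $S\to C\setminus\Gamma(S)$, no flow $\barS\to\Gamma(S)$) guarantee that every flow-carrying edge of $f$ lies in a block where the relevant price ratios are unchanged or uniformly scaled, so it survives into $E_{p'}$. Everything else reduces to bookkeeping already implicit in the definitions (\ref{new prices}) and (\ref{new flow interior}); since the statement is quoted from~\cite{\DM}, the proof here can be kept to a brief verification.
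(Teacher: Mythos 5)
The paper itself gives no proof here---the lemma is imported verbatim from Duan--Mehlhorn with the one-line justification ``See~\cite{\DM}''---so there is nothing internal to compare against; your reconstruction is essentially the argument one finds in~\cite{\DM}, and it is correct. The two-stage plan (feasibility in $N_{p'}$, then the four-case surplus computation from $r'(b_i)=p'_i-\outflow'(b_i)$ with $\outflow'(b_i)=x\cdot\outflow(b_i)$ for $b_i\in S$ and unchanged otherwise) is exactly right, and the structural facts you invoke for feasibility---ratios inside $\Gamma(S)$ scale uniformly so edges in $S\times\Gamma(S)$ persist, no $E_p$-edges from $S$ to $C\setminus\Gamma(S)$, no flow from $\barS$ to $\Gamma(S)$---are the correct ones and are supplied by Lemma~\ref{price increase and network} and the hypotheses of the statement.

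One small slip in the prose: you write that the source edges whose capacities change are ``$(s,b_i)$ for $b_i\in S$.'' The capacity of $(s,b_i)$ is $p_i$, the price of good $c_i$, so it changes exactly when $c_i\in\Gamma(S)$, i.e.\ for type~1 and type~3 buyers---not for $b_i\in S$ (types~1 and~2). This does not damage the argument, since later you correctly handle each capacity constraint case by case and absorb the binding one (type~2, $x\le p_i/(p_i-r(b_i))$) into ``$x$ sufficiently small,'' but the characterization as stated is off. Everything else, including the observation that type~3 sources are automatically fine because $p'_i=xp_i$ grows while $\outflow(b_i)$ is unchanged, is correct.
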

\begin{proof} See~\cite{\DM}. \end{proof}

\begin{figure}[t]
\centerline{\includegraphics[width=0.4\textwidth]{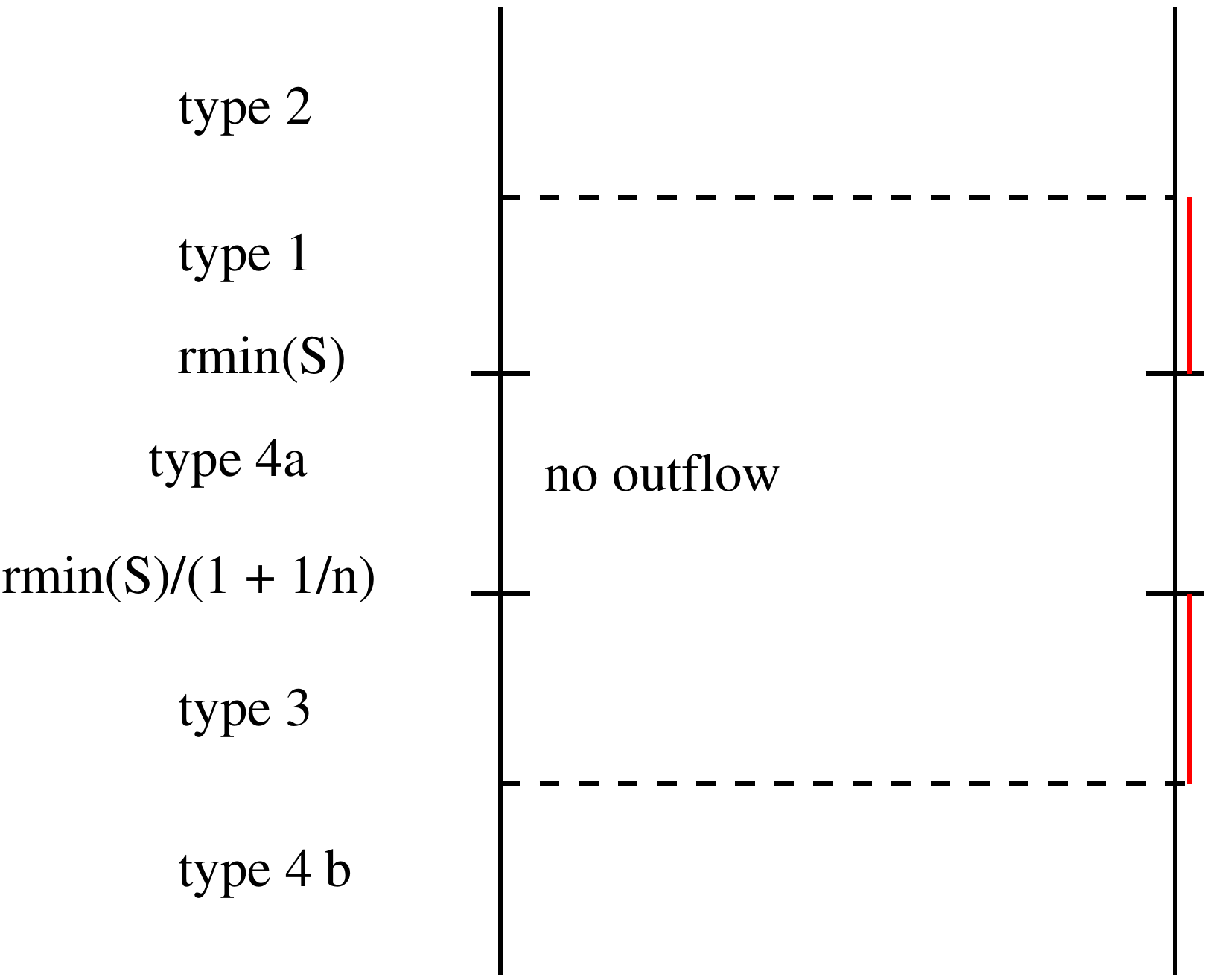}}
\caption{\label{types} The left vertical segment shows the buyers grouped into the buyers with surplus at least $\rmin(S)$, with surplus in $(\rmin(S),\rmin(S)/(1 + 1/n)]$, and with surplus below $\rmin(S)/(1 + 1/n)$. $S$ constitutes the type 1 and 2  buyers. The red part on the right vertical segment indicates $\Gamma(S)$. The goods in $\Gamma(S)$ are completely sold to the buyers in $S$. The type 4a buyers have no outflow and all equality edges incident to them end in $\Gamma(S)$. The type 3 and 4 buyers have no flow to $\Gamma(S)$. }
\end{figure}

In contrast to \cite{\DM}, we need to split the set of type 4 buyers further into type 4a buyers and type 4b buyers. 
A type 4 buyer has type 4a, if its surplus is at least $\rmin(S)/(1 + 1/n)$. Otherwise, its type is 4b. See Figure~\ref{types} for an illustration.

\begin{lemma} $r(b_i) < \rmin(S)/(1 + 1/n)$ for every type 3 and type 4b buyer $b_i$. For a type 4a buyer $b_i$ and an edge $(b_i,c_j) \in E_p$, $c_j \in \Gamma(S)$. \end{lemma}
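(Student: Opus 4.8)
The plan is to treat the two assertions separately: the first is a direct consequence of the choice of $\ell$, and the second additionally uses that $f$ is a maximum and balanced flow. Before starting I would record two simple facts. Since $S = S(r(b_\ell))$ is exactly the set of buyers of surplus at least $r(b_\ell)$, we have $\rmin(S) = r(b_\ell)$ (every buyer of $S$ has surplus $\ge r(b_\ell)$, and $b_\ell \in S$). Second, whether or not the fallback $\ell = n$ was taken, every buyer $b_j$ with $r(b_\ell) > r(b_j) \ge r(b_\ell)/(1+1/n)$ satisfies $\outflow(b_j) = 0$ and $c_j \notin \Gamma(S)$: this is the defining property of $\ell$ when $\ell < n$, and it is vacuous when $\ell = n$, since then $S$ contains all buyers (here one uses that $r(b_\ell) = r(b_n) > 0$ in the fallback case).

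For the first assertion I would argue as follows. For a type 4b buyer, the bound $r(b_i) < \rmin(S)/(1+1/n)$ is literally the definition of type 4b, so nothing is to prove. For a type 3 buyer $b_i$ we have $b_i \notin S$, hence $r(b_i) < r(b_\ell) = \rmin(S)$; if in addition $r(b_i) \ge \rmin(S)/(1+1/n)$ held, then $b_i$ would fall in the range controlled by the choice of $\ell$, forcing $c_i \notin \Gamma(S)$ — contradicting $c_i \in \Gamma(S)$, which holds for type 3 buyers. Hence $r(b_i) < \rmin(S)/(1+1/n)$.

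For the second assertion, let $b_i$ be a type 4a buyer, so $b_i \notin S$, $c_i \notin \Gamma(S)$, and $r(b_i) \ge \rmin(S)/(1+1/n)$. From $b_i \notin S$ we get $r(b_\ell) > r(b_i) \ge r(b_\ell)/(1+1/n)$, so the recorded property of $\ell$ gives $\outflow(b_i) = 0$; in particular the residual capacity of $(s,b_i)$ equals $r(b_i) > 0$. Now fix an edge $(b_i,c_j) \in E_p$ and suppose toward a contradiction that $c_j \notin \Gamma(S)$. Since $f$ is a maximum flow and $(b_i,c_j)$ has infinite capacity, the arc $(c_j,t)$ must be saturated — otherwise $s \to b_i \to c_j \to t$ is an augmenting path — so $c_j$ is completely sold and some buyer $b_k$ has $f_{kj} > 0$. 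Because $f$ is balanced and $(b_i,c_j),(b_k,c_j) \in E_p$ with positive flow from $b_k$ to $c_j$, Lemma~\ref{lem:balanced flow} gives $r(b_k) \ge r(b_i) \ge r(b_\ell)/(1+1/n)$. If $r(b_k) \ge r(b_\ell)$, then $b_k \in S$ and the equality edge $(b_k,c_j)$ puts $c_j$ into $\Gamma(S)$ — a contradiction. Otherwise $r(b_\ell) > r(b_k) \ge r(b_\ell)/(1+1/n)$, and the property of $\ell$ forces $\outflow(b_k) = 0$, contradicting $f_{kj} > 0$. Either way we reach a contradiction, so every equality edge out of $b_i$ ends in $\Gamma(S)$.

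The step I expect to require the most care is not any computation but the bookkeeping in the first paragraph: getting $\rmin(S) = r(b_\ell)$ right in the presence of ties in the surplus order, and checking that "$\ell$ controls every buyer with surplus in $[r(b_\ell)/(1+1/n),\, r(b_\ell))$" holds uniformly, including the fallback $\ell = n$. Once those are nailed down, both assertions are short: the first is essentially the definition plus one appeal to the choice of $\ell$, and the second is a one-shot augmenting-path/balanced-flow argument.
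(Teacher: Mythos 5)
Your proof is correct and follows essentially the same route as the paper's: show that any buyer outside $S$ with surplus at least $\rmin(S)/(1+1/n)$ must be type~4a (forcing the first claim by contrapositive), and then use $\outflow(b_i)=0$, maximality of the flow, and the balanced-flow ordering of surpluses to conclude every neighbor of a type~4a buyer lies in $\Gamma(S)$. Your write-up is somewhat more explicit than the paper's — in particular, the paper jumps from ``buyers with flow to $c_j$ have surplus at least $r(b_i)$'' to ``at least $\rmin(S)$'' without spelling out that a buyer with surplus in $[\rmin(S)/(1+1/n),\rmin(S))$ would have zero outflow by the choice of $\ell$ — and you supply that missing step, which is a genuine improvement in rigor though not a different argument.
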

\begin{proof} Consider any buyer $b_i$ with $r(b_i) \ge \rmin(S)/(1 + 1/n)$ and $b_i \not\in S$. Then $c_i \not\in \Gamma(S)$ and hence $b_i$ does not have type 3. So, it must have type 4, and more precisely, type 4, it has type 4a. Let $(b_i,c_j) \in E_p$. Since $\outflow(b_i) = 0$, $b_i$ has surplus and hence $c_j$ is completely sold. The buyers with flow to $c_j$ have surplus at least the surplus of $b_i$ and hence have surplus at least $\rmin(S)$. Thus they belong to $S$ and hence $c_j \in \Gamma(S)$. 
\end{proof}

\begin{lemma} If there is no type 3 buyer then there is at least one type 1 buyer.\end{lemma}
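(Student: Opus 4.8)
The plan is to argue by contradiction: suppose there is no type 1 buyer, i.e.\ no buyer $b_i \in S$ with $c_i \in \Gamma(S)$. Combined with the hypothesis that there is no type 3 buyer (no $b_i \notin S$ with $c_i \in \Gamma(S)$), this would mean that for \emph{every} good $c_i \in \Gamma(S)$ the corresponding buyer $b_i$ lies outside $\Gamma(S)$'s ``owner set'' — more precisely, no buyer owns a good in $\Gamma(S)$ while also buying from $\Gamma(S)$, and no buyer owns a good in $\Gamma(S)$ from outside $S$. Together these say: no $b_i$ with $c_i\in\Gamma(S)$ exists at all — contradiction is not immediate, so instead I would phrase it as: the set of buyers $\{b_i : c_i \in \Gamma(S)\}$ is empty. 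But $\Gamma(S)$ is nonempty whenever $S$ is nonempty (each buyer likes some good, so has an equality edge), and $S$ is nonempty since it contains $b_\ell$ which has positive surplus. So $\Gamma(S) \neq \emptyset$, and the goods in $\Gamma(S)$ do have owners, giving the contradiction.

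The key steps, in order: (1) Note $S \neq \emptyset$ because $b_\ell \in S$ and $r(b_\ell) > 0$ by the earlier lemma. (2) Note $\Gamma(S) \neq \emptyset$: pick any $b \in S$; by the standing assumption $\max_j u_{bj} > 0$, so $b$ has at least one equality edge (the edge to a good achieving its bang-per-buck $\alpha_b$), hence $\Gamma(S)$ contains that good. (3) Pick any $c_k \in \Gamma(S)$ and consider its owner $b_k$. If $b_k \in S$, then $b_k$ is a type 1 buyer (since $c_k = c_{k} \in \Gamma(S)$), contradicting the assumption of no type 1 buyer. If $b_k \notin S$, then $b_k$ is a type 3 buyer, contradicting the assumption of no type 3 buyer. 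Either way we reach a contradiction, so the assumption that there is no type 1 buyer is false.

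I expect the only subtlety to be step (2) — making sure $\Gamma(S)$ is genuinely nonempty. This rests on the standing assumption that every agent likes some good, so every buyer has a well-defined bang-per-buck $\alpha_b = \max_j u_{bj}/p_j > 0$ attained at some good $c_j$, which then lies in $E_p$ and hence $c_j \in \Gamma(S)$. Everything else is a direct case split on whether the owner of a good in $\Gamma(S)$ lies in $S$ or not, using only the type definitions. No flow-balancedness or price-update machinery is needed here; this is purely a structural observation about the bipartite equality graph and the identification of good $c_i$ with buyer $b_i$.
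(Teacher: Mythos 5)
Your proof is correct and follows essentially the same route as the paper: observe $S\neq\emptyset$, hence $\Gamma(S)\neq\emptyset$ because every buyer has at least one equality edge, and note that the owner of any good in $\Gamma(S)$ is by definition a type~1 or type~3 buyer, so absence of type~3 forces a type~1. The paper states this directly rather than by contradiction, but the content is identical.
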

\begin{proof} Since $S$ is nonempty and every buyer has at least one incident equality edge, $\Gamma(S)$ is nonempty. The goods in $\Gamma(S)$ are completely sold and owned by the type 1 and type 3 buyers. If there are no type 3 buyers, there has to be at least one type 1 buyer. \end{proof}

For the definition of the factor $x$, we perform the following thought experiment. We increase the prices of the goods in $\Gamma(S)$ and the flow on the edges incident to them continuously by a common factor $x$ until one of four events happens: (1) a new edge enters
the equality graph or (2) the surplus of a type 2 buyer and a type 3 or 4b buyer become equal or (3) the surplus of a type 2 buyer becomes zero or (4) $x$ reaches a maximum admissible value\footnote{In~\cite{\DM}, $\xmax$ is defined as $1 + 1/(Cn^3)$. The refined definition of $S$ allows us to choose a larger value if there are no type 3 buyers. This choice will be crucial for Lemma~\ref{surplus bounds}.} $\xmax$, where
\[ \xmax = \begin{cases} 1 + \frac{1}{Cn^3}\\
\hspace{0.5em}\text{if there are type 3 buyers}\\
1 + \frac{1}{C k n^3} \\
\mbox{}\hspace{0.5em} \text{if there are no type 3's and $k$ type 1 buyers,}
\end{cases}\]
and $C = 48 e^2$.

The increase of the prices of the goods in $\Gamma(S)$ makes the
  goods in $C \setminus \Gamma(S)$ more attractive to the buyers in $S$ and hence an equality
  edge connecting a buyer in $S$ with a good in  $C \setminus \Gamma(S)$ may arise. This will happen at $x = \xeq(S)$, where 
\begin{equation*}\begin{split}
\xeq(S)=\min \{ & \frac{u_{ij}}{p_j}\cdot\frac{p_k}{u_{ik}} \; | \\
& b_i\in S,
(b_i,c_j)\in E_p, c_k\notin\Gamma(S) \} .
\end{split}\end{equation*}
When we increase the prices of the goods in $\Gamma(S)$ by a
common factor $x\leq \xeq(S)$, the equality edges in
$(S \times \Gamma(S)) \cup (\barS  \times (C \setminus \Gamma(S)))$ will remain in the network. In particular, by  Lemma~\ref{price increase and network}, all flow-carrying edges stay in the network.

\begin{figure}[t] 
\centerline{\includegraphics[width=0.4\textwidth]{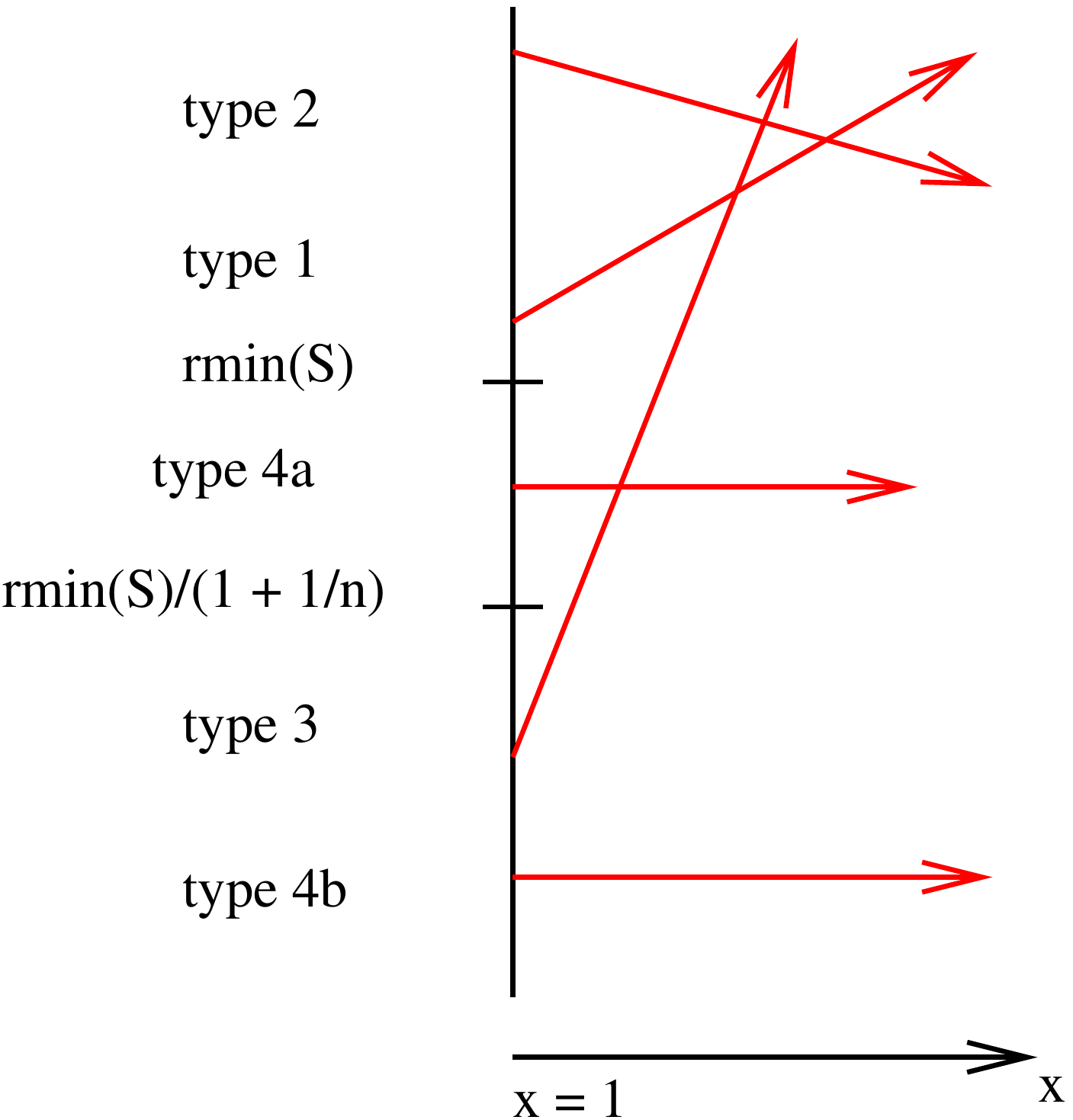}}
\caption{\label{definition of x23,x24,x2} The surpluses of various types of buyers as a function of $x$. $x_{23}$ and $x_{24}$ are the smallest values of $x$ at which the surpluses of a type 2 and a type 3 or type 4b surplus becomes equal. $x_2$ is the smallest value of $x$ at which a surplus of a type 2 buyer becomes zero. It may happen that the surplus of a type 3 buyer becomes larger than the surplus of a type 1 buyer for an $x < \min(x_{23},x_{24},x_2)$. This possibility is overlooked in~\cite{\DM} and invalidates some of their arguments. They can be fixed along the lines of this paper.}
\end{figure} 

The surplus of type 1 and 3 buyers increases,  the surplus of type 2 buyers decreases, and the surplus of type 4 buyers does not change, see Figure~\ref{definition of x23,x24,x2}. Since the total surplus does not change (recall that the surpluses of the goods are not affected by the price update), the decrease in surplus of the type 2 buyers is equal to the increase in surplus of the type 1 and 3 buyers. In particular, there are type 2 buyers. We define quantities $x_{23}(S)$ and $x_{24}(S)$ at which the surplus of a type 2 and type 3 buyer, respectively type 4b\footnote{In~\cite{\DM}, type 4 is used.} buyer, becomes equal, and a quantity $x_2$ at which the surplus of a type 2 buyer becomes zero.
\begin{align*}
x_{23}(S) &= 
 \min \{\frac{p_i + p_j - r(b_j)}{p_i + p_j - r(b_i)} \; | \\
  &\qquad\qquad\text{$b_i$ is type 2 and $b_j$ is type 3 buyer}\},
 \\
x_{24}(S) &= 
\min \{\frac{p_i - r(b_j)}{p_i - r(b_i)} \; | \\
&\qquad\qquad\text{$b_i$ is type 2 and $b_j$ is type 4b buyer}\},
\\
x_2(S)&= \min \{\frac{p_i}{p_i - r(b_i)} \; | \; \text{$b_i$ is type 2 buyer}\}. 
\end{align*}

\begin{lemma}\label{new flow}
With $S$ as defined in the algorithm and $x = \min(\xmax,\xeq(S),x_{23}(S),x_{24}(S),x_2(S))$, $f'$ is a feasible flow in $N_{p'}$. \end{lemma}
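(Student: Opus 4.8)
The plan is to reduce the statement to Lemma~\ref{thm:adjusted-flow}, which already guarantees that the scaled flow $f'$ from~(\ref{new flow interior}) is feasible in $N_{p'}$ for every \emph{sufficiently small} $x>1$, and then to make ``sufficiently small'' quantitative. Concretely, I would first check that the hypotheses of Lemma~\ref{thm:adjusted-flow} are satisfied for the set $S$ chosen by the algorithm, and then verify that the explicit value $x=\min(\xmax,\xeq(S),x_{23}(S),x_{24}(S),x_2(S))$ falls inside the admissible range that the proof of Lemma~\ref{thm:adjusted-flow} needs.

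For the hypotheses: $f$ is a balanced flow, hence a maximum flow, so Lemma~\ref{thm:adjusted-flow} applies once its two side conditions hold. That there is no flow from $\barS$ to $\Gamma(S)$ is exactly the second assertion of Lemma~\ref{price increase and network}. That every good in $\Gamma(S)$ is completely sold I would argue by an augmenting-path argument: each buyer in $S=S(r_0)$ has surplus at least $r_0=r(b_\ell)>0$, so if some $c\in\Gamma(S)$ had positive surplus, then for a buyer $b\in S$ with $(b,c)\in E_p$ the path $s\to b\to c\to t$ would be augmenting, contradicting maximality of $f$.

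For the range of $x$: conservation of $f'$ at the nodes of $B\cup C$ holds by construction, and the capacities of the sink edges are respected because the goods in $\Gamma(S)$ remain exactly sold ($\inflow'(c_j)=x\cdot p_j=p'_j$, since they were completely sold), while goods outside $\Gamma(S)$ are untouched. Two things remain. First, every flow-carrying edge must still be an equality edge with respect to $p'$; by Lemma~\ref{price increase and network} such an edge lies in $S\times\Gamma(S)$ or in $\barS\times(C\setminus\Gamma(S))$. In the second case neither $p'_j=p_j$ nor $\alpha_i$ changes (raising prices in $\Gamma(S)$ only decreases ratios, and the edge itself witnesses $\alpha_i'\ge u_{ij}/p_j=\alpha_i$), so the edge survives; in the first case, all equality edges of $b_i\in S$ end in $\Gamma(S)$, so after the update $b_i$'s bang per buck equals $\alpha_i/x$ and is attained on the flow-carrying edge, while $x\le\xeq(S)$ guarantees that no good outside $\Gamma(S)$ achieves a larger ratio, so again the edge survives. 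Second, the source-edge capacities must not be exceeded, i.e.\ all new buyer surpluses must be nonnegative; reading off the surplus formulas of Lemma~\ref{thm:adjusted-flow}, types~1, 3, and~4 are immediate from $x>1$, and a type~2 buyer $b_i$ gives $r'(b_i)=(1-x)p_i+x\,r(b_i)\ge 0$ precisely when $x\le p_i/(p_i-r(b_i))$, which holds since $x\le x_2(S)$. As $x$ is the minimum of five quantities each strictly larger than $1$, we have $1<x\le\min(\xeq(S),x_2(S))$, so both conditions hold and Lemma~\ref{thm:adjusted-flow} delivers the claim.

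The step I expect to require the most care is the persistence of flow-carrying equality edges under the price increase: this is exactly what $\xeq(S)$ is engineered to prevent, and the verification relies on the two structural facts that all equality edges leaving $S$ end in $\Gamma(S)$ and that no flow crosses from $\barS$ into $\Gamma(S)$ (Lemma~\ref{price increase and network}). Everything else is routine bookkeeping with the surplus identities of Lemma~\ref{thm:adjusted-flow}, together with the observation that $\xmax$, $\xeq(S)$, $x_{23}(S)$, $x_{24}(S)$, and $x_2(S)$ are each $>1$ (which uses that buyers in $S$ have surplus $\ge r_0>0$ while type~3 and type~4b buyers have strictly smaller surplus).
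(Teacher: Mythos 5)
Your proof is correct and supplies exactly the details that the paper dismisses with the single word ``Obvious''; it correctly identifies that only $\xeq(S)$ (persistence of flow-carrying equality edges) and $x_2(S)$ (nonnegativity of type~2 surpluses, equivalently the source-edge capacities) are the binding constraints for feasibility, with $\xmax$, $x_{23}$, $x_{24}$ playing a role only in the later potential-function analysis. The augmenting-path argument for $\Gamma(S)$ being completely sold, the appeal to Lemma~\ref{price increase and network} to locate all flow-carrying edges, and the case analysis for edge survival under the price increase are all sound and match the intended reasoning.
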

\begin{proof} Obvious.
\end{proof}

This ends the description of the algorithm. An important property of the algorithm is that goods with nonzero surplus have price one. 

\begin{lemma}\label{pro:unchanged} Once the surplus of a good becomes zero, it stays zero. As long as a good has nonzero surplus, its price stays at one.
\end{lemma}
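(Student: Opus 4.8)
The statement has two parts. For the first part---once a good's surplus becomes zero, it stays zero---I would argue phase by phase. In any given phase, the set of goods with zero surplus with respect to the current balanced flow $f$ is exactly the set of completely sold goods. The price update (\ref{new prices}) together with the flow update (\ref{new flow interior}) keeps the surpluses of goods in $\Gamma(S)$ at zero, as noted right after (\ref{new flow interior}) and in Lemma~\ref{thm:adjusted-flow}, and leaves the surpluses of goods outside $\Gamma(S)$ unchanged; so every good that is completely sold with respect to $f$ is also completely sold with respect to $f'$. The algorithm then replaces $f'$ by a balanced flow $f''$ that, by construction (and Lemma~\ref{lem:balanced flow}), completely sells every good that was completely sold with respect to $f$. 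Hence the set of zero-surplus goods is nondecreasing across phases, and once a good has zero surplus it retains it for the rest of the run. (One should also check the base case: the claim is about what happens after a good first reaches zero surplus, so the initial balanced flow needs no special treatment.)

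For the second part---a good with nonzero surplus has price one---I would again proceed inductively on the phases, showing that the invariant ``every good with nonzero surplus has price one'' is maintained. Initially all prices are one, so the invariant holds. Assume it holds at the start of a phase. A good $c_j$ can only have its price raised in this phase if $c_j \in \Gamma(S)$. But every good in $\Gamma(S)$ is completely sold with respect to $f$ (this is part of the setup of the phase, guaranteed because the completely-sold goods are preserved from phase to phase and $\Gamma(S)$ consists only of such goods---more precisely, the surpluses of goods in $\Gamma(S)$ are zero before the update and stay zero after), so its surplus is zero both before and after the update. Therefore the only goods whose price changes are goods that have zero surplus after the change, and every good with nonzero surplus after the phase either had nonzero surplus before (price one by induction, price unchanged) or is a good outside $\Gamma(S)$ whose surplus was already zero---which cannot happen. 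So every good with nonzero surplus still has price one, completing the induction.

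The main subtlety, and the step I would be most careful with, is the interaction between the two price-changing mechanisms within a phase: goods in $\Gamma(S)$ start the phase completely sold (zero surplus) and must be shown to remain completely sold both after the continuous price/flow scaling of Lemma~\ref{thm:adjusted-flow} and after the subsequent rebalancing to $f''$. The first is covered because Lemma~\ref{thm:adjusted-flow} explicitly states the surplus of each good is unchanged under the scaling, hence stays zero for goods in $\Gamma(S)$; the second is covered because $f''$ is chosen (via Lemma~\ref{lem:balanced flow}) to keep all previously completely-sold goods completely sold. Once these two facts are in hand, both parts of the lemma reduce to the monotonicity observations above, and the only remaining point is the bookkeeping that a good's price is touched \emph{only} when it lies in $\Gamma(S)$, which is immediate from (\ref{new prices}).
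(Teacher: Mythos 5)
Your proof is correct and follows essentially the same three observations as the paper's: initially all prices are one, the price/flow update leaves every good's surplus unchanged and raises prices only for goods in $\Gamma(S)$ (which are completely sold), and the rebalancing to $f''$ is constrained to keep all previously completely-sold goods completely sold. One small wrinkle: where you say that $\Gamma(S)$ consists of completely-sold goods ``because the completely-sold goods are preserved from phase to phase,'' that is not quite the reason---it holds because $S$ is a set of buyers with positive surplus and $f$ is a maximum flow, so a good in $\Gamma(S)$ with positive surplus would admit an augmenting path---but the conclusion you draw is right and the remainder of the argument is sound.
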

\begin{proof} Initially all goods have price one. The price adjustment does not change the surplus of any good and only increases the prices of goods that are completely sold. In the balanced flow $f''$, all goods that are completely sold with respect to $f$ are also completely sold with respect to $f''$. \end{proof}

\section{The Analysis}\label{sec: Analysis}

In this section, we derive a bound on the number of phases. We  distinguish between \emph{$\xmax$-phases} and \emph{balancing phases}. A phase is an $\xmax$-phase if $x = \xmax$ and is balancing otherwise. As in~\cite{\DM}, we use two potential functions for the analysis, namely the product $P = \prod_i p_i$ of all prices and the 2-norm $\norm{r(B)}$ of the surplus vector of the buyers. 

In Section~\ref{xmax-phases}, we show that the number of $\xmax$-phases is $O(n^4 \log(nU))$, and in Section~\ref{balancing phases}, we show the same bound for the number of balancing phases. This is by a factor of $n$ better than in~\cite{\DM}. 

The improvement for the number of phases comes from the more careful definition of the set $S$, the distinction between phases with and without type 3 buyers, the refined definition of $\xmax$ in phases without type 3 buyers and a refined analysis of the number of such phases (Lemma~\ref{number-of-xmax-without}), a new analysis of the number of $\xmax$-phases with type 3 buyers (Lemma~\ref{number-of-xmax-with}), a refined analysis of the norm increase in $\xmax$-phases, and an improved analysis of the norm decrease in balancing phases.

\subsection{The Number of $\xmax$-Phases.}\label{xmax-phases}

We first recall a result from~\cite{\DM} that prices stay bounded by $(nU)^n$. This immediately yields a bound of $(nU)^{n^2}$ on the product of the $x$-factors used in all phases and also on the number of $\xmax$-phases with no type 3 buyers. A different argument yields a bound on the number of $\xmax$-phases with type 3 buyers. The latter is even strongly polynomial.

\begin{lemma}[\cite{\DM}]\label{thm:largest} All prices stay bounded by $\max(n,U)^{n-1} \le (nU)^n$.
\end{lemma}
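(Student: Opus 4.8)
The plan is to track how the price of a single good can change over the course of the algorithm and argue that it cannot exceed $\max(n,U)^{n-1}$. First I would recall the structural fact that a good's price only ever increases (by Lemma~\ref{pro:unchanged}, once a good's surplus hits zero it stays zero, and prices of goods with positive surplus stay at one; prices are only scaled up when a good lies in $\Gamma(S)$, which forces it to be completely sold). So at any moment the set of goods with price $> 1$ is exactly a subset of the ``fully sold'' goods, and these form the interesting part of the equality network.

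The key idea is to fix the phase in which some good first reaches its maximum price and look at the equality-graph relationships tying its price to the prices of other goods. Concretely, consider the good $c_k$ with the largest price $p_k$ at termination (or at any point). I would trace a path in the equality graph from $c_k$ back to a good whose price is still one: since $c_k$ is completely sold, some buyer $b_i$ with $(b_i,c_k)\in E_p$ sends flow to it, and unless every good is fully sold with price $>1$ (impossible, since total surplus is positive throughout the loop, so some good has surplus and hence price one — actually one must be careful here and instead use that the total money $\sum_j p_j$ is conserved up to the flow bookkeeping; I'd lean on the bound $\sum p_j \le$ something, or better, argue via a path to a good at price one). Along an alternating path $c_k = c_0, b_1, c_1, b_2, \ldots, c_m$ with $c_m$ at price one, each equality-edge constraint $u_{b_\ell c_{\ell-1}}/p(c_{\ell-1}) = \alpha_{b_\ell} = u_{b_\ell c_\ell}/p(c_\ell)$ gives $p(c_{\ell-1}) = (u_{b_\ell c_{\ell-1}}/u_{b_\ell c_\ell})\, p(c_\ell)$, so telescoping yields $p_k = \prod_\ell (u_{b_\ell c_{\ell-1}}/u_{b_\ell c_\ell})$. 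Since the path has length at most $n-1$ (it visits distinct goods, and actually this is where one uses that the relevant subgraph is a forest or at least that a simple path suffices), the numerator is a product of at most $n-1$ integers each $\le U$, and the denominator is at least one, giving $p_k \le U^{n-1}$; combined with the trivial bound handling small cases one gets $\max(n,U)^{n-1}$.

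The main obstacle is making rigorous the claim that from the maximum-price good there is an equality-graph path to a good whose price is still one, of length at most $n-1$. One needs: (i) such a good at price one exists at every moment inside the loop — this should follow because if all goods had price $>1$ they'd all be completely sold, but conservation of $\sum_j p_j$ against $\sum_i r(b_i) = |r(B)| > 0$ forces some good to have positive surplus, contradiction; and (ii) the path can be taken simple and through the equality graph, which follows from connectivity arguments on the subnetwork of fully-sold goods together with the forest structure (or, without assuming nondegeneracy, from the fact that balanced-flow / price-update steps preserve appropriate reachability). The cleanest route is probably to induct on phases: show that whenever a good's price is multiplied by $x$ in a phase, there is a chain of equality-edge equalities linking it to the good owned by a buyer in $S$ at the ``base'' of the price increase, and then unwind. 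Since the lemma is quoted verbatim from~\cite{\DM}, I would in the write-up simply cite that proof, but the above is how I would reconstruct it.

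Alternatively — and this may be the slicker argument — I would use the potential $P = \prod_j p_j$: it never decreases, and one shows directly from the equality constraints around any good that $p_k \le (\prod_j u$-values$)$ type bound without a path argument, by observing that the prices $(p_j)$ for fully-sold goods are determined up to scaling by the forest of equality edges, normalized by the price-one goods. Either way the arithmetic is the telescoping product of at most $n-1$ integer utility ratios, so I do not expect the bound itself to be hard once the path/forest structure is in hand; the structural lemma is the crux.
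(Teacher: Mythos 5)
The paper does not actually give a proof of this lemma; it simply cites \cite{\DM} and remarks that the proof there already yields the sharper $\max(n,U)^{n-1}$. Your telescoping-of-utility-ratios idea is indeed the natural route and is almost surely the spirit of the argument in \cite{\DM}, so comparing approaches is difficult. But as a \emph{proof}, what you wrote has a real gap, and you already half-sense it.

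The gap is exactly point (ii): the claim that from the current maximum-price good there is a simple path \emph{in the equality graph} to a price-one good. This is not maintained as an invariant of the algorithm. Concretely, when the prices of $\Gamma(S)$ are scaled by $x\le\xeq$, equality edges in $S\times\Gamma(S)$ and $\barS\times(C\setminus\Gamma(S))$ are preserved, but equality edges in $\barS\times\Gamma(S)$ (which carry no flow) are generically destroyed. So a connected component of $G_p$ consisting entirely of high-price goods can become separated from every price-one good, and your alternating-path telescoping then has no endpoint to telescope to. Hand-waving at ``connectivity arguments on the subnetwork of fully-sold goods'' or ``balanced-flow steps preserve appropriate reachability'' does not close this; the equality graph genuinely can lose the path. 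The fact that the phase stops at $x=\xeq$ is what actually controls the growth: the constraint $x\le u_{ij}p_k/(u_{ik}p_j)$ for some $b_i\in S$ with $(b_i,c_j)\in E_p$, $u_{ik}>0$, $c_k\notin\Gamma(S)$ gives $p'_j\le (u_{ij}/u_{ik})\,p_k\le U\,p_k$ for a good $c_k$ whose price was \emph{not} raised in this phase. The right proof is therefore an induction over phases built around this $\xeq$ bound (your ``cleanest route'' paragraph points at this), not a one-shot path in the equality graph at termination. One also has to handle the case where no buyer in $S$ incident to $c_j$ has positive utility outside $\Gamma(S)$; this is where the nonstandard connectivity assumption on the utility matrix, and more work, is needed.

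Two further observations. First, your argument, if it worked, would give $U^{n-1}$, which is \emph{stronger} than $\max(n,U)^{n-1}$; you gesture at ``handling small cases'' to recover $\max(n,U)$, but you never say where an $n$ would enter, which suggests the actual \cite{\DM} argument is not literally the clean telescoping you describe. Second, your observation (i) — that some good of price one always exists inside the loop, because total buyer surplus equals total good surplus and is positive, and Lemma~\ref{pro:unchanged} ties positive good surplus to price one — is correct and is the right use of the loop invariant. So the skeleton is right; the missing content is a correct, per-phase inductive statement that replaces the false ``equality-graph path to a price-one good'' claim.
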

\begin{proof} The upper bound is stated as $(nU)^{n-1}$ in~\cite{\DM}. The proof actually shows the bound $\max(n,U)^{n-1}$. \end{proof}

\begin{lemma} For a phase $h$, let $x_h > 1$ be the factor by which the prices in $\Gamma(S)$ are increased.
Then 
\[    \prod_h x_h \le (nU)^{n^2}.\]
\end{lemma}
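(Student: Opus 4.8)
The plan is to bound the total multiplicative increase of prices across all phases by relating it to the per-good price bound from Lemma~\ref{thm:largest}. First I would observe that in each phase $h$ only the goods in $\Gamma(S)$ get their price multiplied by $x_h$, while all other prices stay fixed (by (\ref{new prices})), and that the balancing step $f''$ never changes prices. Hence if we track the product $P = \prod_i p_i$, it gets multiplied by $x_h^{|\Gamma(S_h)|}$ in phase $h$, and since $|\Gamma(S_h)| \ge 1$ (as $S$ is nonempty and every buyer has an incident equality edge), we get $P_{\text{final}} \ge \prod_h x_h$. Starting from $P = 1$ (all prices initially one), this already gives $\prod_h x_h \le P_{\text{final}}$.

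Next I would bound $P_{\text{final}}$ using Lemma~\ref{thm:largest}: every price stays at most $(nU)^n$, and there are $n$ goods, so $P_{\text{final}} = \prod_i p_i \le ((nU)^n)^n = (nU)^{n^2}$. Combining the two inequalities yields $\prod_h x_h \le (nU)^{n^2}$, which is exactly the claim. One should also note that $P$ is nondecreasing across phases (prices only go up, by Lemma~\ref{pro:unchanged} and the price update rule), so the telescoping is valid and there is no cancellation to worry about.

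The argument is essentially a one-line telescoping estimate, so I do not anticipate a genuine obstacle; the only subtlety is making sure that (a) $|\Gamma(S_h)| \ge 1$ in every phase and (b) the balancing phase does not decrease $P$. Both are already established: the first follows because $S \ne \emptyset$ and each buyer has an equality edge, so $\Gamma(S) \ne \emptyset$; the second because the balanced-flow recomputation only touches the flow $f''$, not the prices, and Lemma~\ref{pro:unchanged} guarantees prices never decrease. Thus the proof reads: $\prod_h x_h \le \prod_h x_h^{|\Gamma(S_h)|} = P_{\text{final}}/P_{\text{initial}} = P_{\text{final}} \le (nU)^{n^2}$, invoking Lemma~\ref{thm:largest} for the last step.
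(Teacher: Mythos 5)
Your proposal is correct and takes essentially the same approach as the paper: both arguments telescope the per-phase factor into the final product of prices $P_{\text{final}} = \prod_j p_j$, using that each phase increases at least one price, and then apply the $(nU)^n$ per-price bound from Lemma~\ref{thm:largest} to each of the $n$ goods. The paper merely writes the double product with the order of $h$ and $j$ swapped rather than naming $P_{\text{final}}$ explicitly, but the content is identical.
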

\begin{proof}
\begin{align*}
\prod_h x_h &\le \prod_{j} \left(\prod_{\text{$p_j$ is increased in phase $h$}} x_h \right)  \\
&\le \prod_j (nU)^n \le (nU)^{n^2}.
\end{align*}
\end{proof}


The next two Lemmas have no equivalent in~\cite{\DM}.

\begin{lemma}\label{number-of-xmax-without} The number of $\xmax$-phases with no type 3 buyers is $O(n^4 \log (nU))$. 
\end{lemma}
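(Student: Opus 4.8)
The plan is to charge these phases against the price product $P=\prod_{j=1}^{n}p_j$. By Lemma~\ref{thm:largest} every price stays below $(nU)^n$, and prices are nondecreasing over the whole run of the algorithm, so $P$ increases monotonically inside $[1,(nU)^{n^2}]$; equivalently $\ln P$ traverses an interval of length at most $n^2\ln(nU)$.

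The first step is a structural fact about $\Gamma(S)$: in a $\xmax$-phase with no type 3 buyers and with $k$ type 1 buyers, $\Gamma(S)$ equals exactly $\{c_i : b_i\text{ is a type 1 buyer}\}$, so $|\Gamma(S)|=k$. Indeed, every good of $\Gamma(S)$ is completely sold and owned by a type 1 or a type 3 buyer (the lemma stated just before the definition of $\xmax$), hence --- in the absence of type 3 buyers --- owned by a type 1 buyer, while conversely each type 1 buyer $b_i$ owns the good $c_i\in\Gamma(S)$ by definition of ``type 1''. Consequently exactly those $k$ prices are multiplied by $\xmax=1+\frac1{Ckn^3}$, so by Bernoulli's inequality $P$ is multiplied by at least $\bigl(1+\frac1{Ckn^3}\bigr)^k\ge 1+\frac1{Cn^3}$, and $\ln P$ grows by at least $\ln\!\bigl(1+\frac1{Cn^3}\bigr)\ge\frac1{2Cn^3}$ (using $\frac1{Cn^3}\le 1$). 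The key point is that this per-phase gain does not depend on $k$ --- which is exactly what the refined $\xmax=1+\frac1{Ckn^3}$ buys us relative to the uniform $1+\frac1{Cn^3}$ of~\cite{\DM}.

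Charging this $\Omega(1/n^3)$ gain directly against $\ln P\le n^2\ln(nU)$ only yields $O(n^5\log(nU))$, which is a factor of $n$ short of the claim; recovering that factor is where the refined definition of $S$ enters, through the later surplus-bound lemma it is designed to make possible (Lemma~\ref{surplus bounds}; cf.\ the footnotes above). That lemma confines all surpluses to within a constant factor of $\rmin(S)$, so in particular $\rmin(S)\ge\norm{r(B)}/(e\sqrt n)$, and pins the buyers whose surplus lies just below $\rmin(S)$ to type 4a. I would couple this with the surplus changes of Lemma~\ref{thm:adjusted-flow} --- each of the $k$ type 1 buyers gains $(\xmax-1)r(b_i)\ge\rmin(S)/(Ckn^3)$, a total of at least $\rmin(S)/(Cn^3)$ moved out of the type 2 buyers --- and with the fact that an $\xmax$-phase leaves the equality graph essentially unchanged and has $\xmax\le\xeq$, to show that consecutive no-type-3 $\xmax$-phases drive the configuration steadily toward the next $\xeq$-event; bounding how far it can travel before such an event limits the total number of these phases to $O(n^4\log(nU))$.

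The step I expect to be the main obstacle is precisely this sharpening. The price potential alone is genuinely stuck at $O(n^5\log(nU))$ --- $\ln P$ can really be as large as $\Theta(n^2\log(nU))$ while each no-type-3 $\xmax$-phase only contributes $\Theta(1/n^3)$ --- so the proof must exploit the refined $\xmax=1+\frac1{Ckn^3}$ together with the surplus structure of the refined $S$, rather than $P$ in isolation.
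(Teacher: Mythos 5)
The first half of your argument is exactly the paper's proof: charge against $P=\prod_j p_j$, note that in a no-type-3 $\xmax$-phase $\Gamma(S)$ consists precisely of the $k$ goods owned by the type~1 buyers, and bound $\ln P \le n^2\log(nU)$ via Lemma~\ref{thm:largest}. What stops you is the displayed definition $\xmax=1+1/(Ckn^3)$, but that exponent is a typo in the paper. The footnote attached to the definition says the refined $S$ lets the algorithm take a \emph{larger} $\xmax$ than the $1+1/(Cn^3)$ of~\cite{\DM} when there are no type~3 buyers, yet $1+1/(Ckn^3)\le 1+1/(Cn^3)$ for every $k\ge1$; and the proofs of this lemma, of Lemma~\ref{norm increase without}, and of Lemma~\ref{decrease in balancing} all compute with $\xmax=1+1/(Ckn^2)$ in the no-type-3 case. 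With that value each such phase multiplies $P$ by $\bigl(1+\frac{1}{Ckn^2}\bigr)^k=\exp\bigl(k\ln(1+\frac{1}{Ckn^2})\bigr)\ge\exp\bigl(\frac{1}{2Cn^2}\bigr)$, so $\ln P$ grows by at least $1/(2Cn^2)$ per phase, and the charging argument gives $T\le 2Cn^4\log(nU)$ outright. No drift-toward-$\xeq$ or surplus-structure argument is used or needed.

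Your second half is not a proof --- you yourself flag the decisive step as an unresolved ``main obstacle'' --- and it also misplaces the role of Lemma~\ref{surplus bounds}. That lemma is not there to extract an extra factor of $n$ from the price potential; it is used in Lemma~\ref{norm increase without} to keep the per-phase increase of the $2$-norm of the surplus vector at $1+O(1/n^3)$ \emph{despite} the larger $\xmax$, because the surplus spread $\rmax(S)/\rmin(S)\le 1+4k/n$ shrinks with $k$ exactly as $\xmax-1=1/(Ckn^2)$ grows with $1/k$. With the typo'd $n^3$ you are right that the price-product argument stalls at $O(n^5\log(nU))$; with the intended $n^2$ it already yields the claimed bound. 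In short, you correctly located an inconsistency in the paper's displayed definition of $\xmax$, but the fix is to the definition, not to the proof of this lemma.
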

\begin{proof} Let $T$ be the number of such phases. Consider any such phase and assume that there are $k$ type 1 buyers. Then the prices of exactly $k$ goods are increased by a factor $\xmax = 1 + 1/(Ckn^2)$. Thus $P = \prod_i p_i$ grows by a factor of 
\begin{align*}
(1 + \frac{1}{C k n^2})^k &= \exp(k \ln(1 + \frac{1}{C k n^2})) \\
&\ge \exp(k \frac{1}{2C k n^2}) = \exp(\frac{1}{2 C n^2}).
\end{align*}
Since $\ln P$ is bounded by $n^2 \log (nU)$, we have $T \cdot 1/(2 C n^2) \le n^2 \log (nU)$ and hence $T \le 2 C n^4 \log (nU)$. \end{proof}

\begin{lemma}\label{number-of-xmax-with} The number of $\xmax$-phases with type 3 buyers is $O(n^4 \ln n)$.\end{lemma}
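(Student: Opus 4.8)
The plan is to track the potential $P=\prod_i p_i$, exactly as in the previous lemma, but now the bound $\ln P\le n^2\log(nU)$ is useless by itself because an $\xmax$-phase with type 3 buyers raises $P$ only by a factor $(1+1/(Cn^3))^{|\Gamma(S)|}$, which can be as small as $1+1/(Cn^3)$ when $|\Gamma(S)|=1$. So that crude counting only gives $O(n^5\log(nU))$ phases, which is a factor of $n$ too many and not even strongly polynomial. The idea must instead be that in an $\xmax$-phase \emph{with} type 3 buyers the relevant "work" is being charged to something that is monotone and bounded by a quantity that is $\mathrm{poly}(n)$ — strongly polynomial — rather than to $\ln(nU)$.

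The natural candidate is the number of flow-carrying edges from $\barS$ to $\Gamma(S)$, or more precisely some monovariant built from the combinatorial structure of where type 3 buyers sit relative to $S$. By Lemma~\ref{price increase and network} there is no flow from $\barS$ into $\Gamma(S)$ at the start of a phase, and a type 3 buyer $b_i$ owns a good $c_i\in\Gamma(S)$; raising prices of $\Gamma(S)$ strictly increases the surplus of every type 3 buyer. First I would argue that, because the algorithm requires $S=S(r(b_\ell))$ to satisfy the "no type 4a failure" condition, a type 3 buyer's surplus — which grows like $(x-1)p_i+r(b_i)$ and starts below $\rmin(S)/(1+1/n)$ by the fourth lemma after Lemma~\ref{thm:adjusted-flow} — cannot in a single $\xmax$-phase rise to reach $\rmin(S)$; but across many such phases it can, and when it does the buyer "graduates" into $S$ in a later phase. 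So the quantity to control is how many times a buyer can move from below $\rmin(S)/(1+1/n)$ up into $S$, i.e.\ essentially how many times the surplus of a fixed buyer can multiply by $1+1/n$ relative to the current threshold. Each $\xmax$-phase with type 3 buyers multiplies every type 3 surplus (relative to its good's price, which is $1$ by Lemma~\ref{pro:unchanged} since a type 3 buyer has positive surplus) by at least $1+\Theta(1/(Cn^3))$: indeed $r'(b_i)=(x-1)p_i+r(b_i)\ge (x-1)\cdot 1=1/(Cn^3)$ since $p_i=1$, and relative multiplicative growth of $\Theta(1/(Cn^3))$ follows once $r(b_i)=O(1)$ — and surpluses are $O(n\cdot(nU)^n)$ but, crucially, for a \emph{type 3} buyer the good it owns has price one so its \emph{budget} is one, bounding $r(b_i)\le 1$. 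Hence a single type 3 buyer can participate in at most $O(n^3\log(\text{ratio of thresholds}))$ such phases before its surplus would exceed the threshold; since thresholds $r(b_\ell)$ range within a factor $(nU)^{O(n)}$... — no, that reintroduces $\log U$.

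The correct strongly-polynomial route, which I would pursue, is: between two consecutive $\xmax$-phases with type 3 buyers in which a \emph{fixed} good $c_i$ is the owned good of a type 3 buyer, the surplus $r(b_i)$ grows by a factor $\ge 1+1/(Cn^3)$ \emph{while staying below the then-current $\rmin(S)/(1+1/n)$}; but $\rmin(S)\le\rmax(B)\le |r(B)|$, and $|r(B)|$ is non-increasing across all phases (total surplus never increases — this needs to be extracted from the four-case surplus update and the balanced-flow step, and is where the real work lies), while once $b_i$ has positive surplus and $c_i\in\Gamma(S)$ with $p_i=1$ we have $r(b_i)>1/(Cn^3)$ after the first such phase; hence $b_i$ can be a type-3 owner in at most $O(n^3\log n)$ phases before $r(b_i)$ would be forced to reach $\rmin(S)/(1+1/n)\ge 1/(Cn^3(1+1/n))$ down to... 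The clean statement I expect the authors use: per good $c_i$, the number of $\xmax$-with-type-3 phases in which $b_i$ is a type 3 buyer is $O(n^3\ln n)$, because in each such phase $r(b_i)/\rmin(S)$ increases by a $1+\Theta(1/(Cn^3))$ factor (numerator up, denominator non-increasing since $|r(B)|$ non-increasing and $\rmin(S)$ controlled), and this ratio starts $\ge$ some $1/\mathrm{poly}(n)$ and must stay $< 1/(1+1/n)$, a window of multiplicative width $\mathrm{poly}(n)$. Summing over the $n$ goods gives $O(n^4\ln n)$. The main obstacle is making rigorous the claim that the ratio $r(b_i)/\rmin(S)$ is squeezed into a $\mathrm{poly}(n)$-wide multiplicative window and genuinely advances in every such phase — in particular that $|r(B)|$ does not increase (so the denominator cannot run away) and that the balanced-flow recomputation $f''$ does not reset a type 3 surplus back down; both require invoking Lemma~\ref{lem:balanced flow} and Lemma~\ref{pro:unchanged} carefully, and handling the interaction with balancing phases that occur in between.
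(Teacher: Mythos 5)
You have the right high-level framework---charge phases to a per-buyer counter and argue each counter can only be incremented $\mathrm{poly}(n)$ times---but the concrete potential you chose does not work, for a reason you cover up with an incorrect application of Lemma~\ref{pro:unchanged}. You claim that the good $c_i$ owned by a type~3 buyer $b_i$ has price one ``since a type~3 buyer has positive surplus.'' That is a conflation of the buyer's surplus with the good's surplus: Lemma~\ref{pro:unchanged} says a \emph{good} with nonzero surplus has price one, but $c_i\in\Gamma(S)$ is \emph{completely sold}, so its surplus is zero and its price can be arbitrarily large. Indeed the whole point of the argument is that these prices $p_i$ keep growing; pinning them at $1$ cannot be right. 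Once the ``$p_i=1$'' crutch is removed, your multiplicative-growth claim for $r(b_i)$ collapses: the type~3 surplus update is $r'(b_i)=(x-1)p_i+r(b_i)$, an \emph{additive} increase of $\Theta(p_i/n^3)$, which is \emph{not} a $1+\Theta(1/n^3)$ \emph{multiplicative} increase of $r(b_i)$ when $r(b_i)\gg p_i/n^3$. Likewise, your assertion that $\rmin(S)$ is non-increasing across phases is not justified and is generally false, since the membership of $S$ changes.

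The paper's proof tracks exactly the quantity that does grow multiplicatively by precisely $1+1/(Cn^3)$ per $\xmax$-phase in which $b_i$ is type~3, namely the price $p_i$ itself, and bounds $p_i$ by $O(n^4)$ via a two-step budget argument. First, by conservation of money (goods of type~1 and type~3 buyers are bought entirely by type~1 and type~2 buyers), $\sum_{i\in B_3}p_i\le\sum_{i\in B_2}p_i$. Second, for each type~2 buyer, the outflow $p_i-r(b_i)$ is multiplied by $x\ge 1+1/(Cn^3)$, and the resulting increase cannot exceed the surplus $r(b_i)$, giving $p_i\le(1+Cn^3)r(b_i)$; summing over type~2 buyers and using $\sum r(b_i)\le n$ yields $\sum_{i\in B_2}p_i\le 2Cn^4$. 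Hence each $p_i$ stays below $2Cn^4$, and since it is multiplied by $1+1/(Cn^3)$ each time $b_i$ is type~3, each buyer can be type~3 in only $O(n^3\log n)$ such phases; summing over buyers gives $O(n^4\log n)$. This budget-chaining step ($B_3$ budget $\le$ $B_2$ budget $\le 2Cn^4$) is the missing idea in your proposal.
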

\begin{proof} For $i \le 3$, let $B_i$ be the set of type $i$ buyers. We first show that the total budget of the type 3 buyers is at most the total budget of the type 2 buyers, more precisely,                                  
\[ \sum_{i \in B_3} p_i \le \sum_{i \in B_2} p_i = \sum_{i \in B_3} p_i + \sum_{i \in B_1 \cup B_2} r(b_i).\]
The goods owned by the type 1 and type 3 buyers are completely bought by the buyers of type 1 and type 2. Hence
\begin{align*}
 \sum_{i \in B_1 \cup B_3} p_i &= \sum_{i \in B_1 \cup B_3} \inflow(c_i) \\
&= \sum_{i \in B_1 \cup B_2} \outflow(b_i)\\
&= \sum_{i \in B_1 \cup B_2} (p_i - r(b_i)).
\end{align*}
Subtracting $\sum_{i \in B_1} p_i$ from both sides establishes the equality in the claim. The inequality follows since surpluses are nonnegative. 

We next show that $\sum_{i \in B_2} p_i \le 2Cn^4 $, whenever $x \ge 1 + 1/(Cn^3)$. The outflow of any type 2 buyer $b_i$ is $p_i - r(b_i)$ at the beginning of the phase. It increases by $(p_i - r(b_i))/(Cn^3)$ during the price update. The increase cannot be more than the surplus and hence $(p_i - r(b_i))/(Cn^3) \le r(b_i)$ or $p_i \le (1 + Cn^3) r(b_i)$. Summing over all type 2 buyers and observing that the total surplus is at most $n$ initially and never increases yields
\[ \sum_{i \in B_2} p_i \le (1 + Cn^3) \sum_{i \in B_2} r(b_i) \le n + Cn^4 \le 2 C n^4.\]

We finally show that the number of $\xmax$-phases with type 3 buyers is at most $n + (Cn^4 + n) \ln(2Cn^4)$. Assume otherwise. Then there must be a buyer $b_i$ such that there are at least  $1 + (Cn^3 + 1) \ln(2Cn^4)$ phases in which $b_i$ is a type 3 buyer and $x = 1 + 1/(Cn^3)$. In each such phase the price of $b_i$ increases by a factor of $1 + 1/(Cn^3)$ and hence the price of $p_i$ before the last such phase is at least 
$(1 + 1/(Cn^3))^{(Cn^3 + 1)\ln(2Cn^4)} > e^{\ln (2 Cn^4)} = 2Cn^4$. We conclude that the total budget of the buyers in $B_3$ and hence, by the first claim, the total budget of the buyers in $B_2$ exceeds $2Cn^4$. This contradicts the second claim. 
\end{proof}

\section{The Evolution of the Surplus Vector}\label{sec: Evolution}

The 2-norm of the surplus vector of the buyers is our second potential function. In the price and flow adjustment, the surpluses of type 2 and type 3 buyers move towards each other. Lemma~\ref{technical lemma} is our main tool for estimating the resulting reduction of the 2-norm of the surplus vector. Lemmas~\ref{norm increase with} and~\ref{norm increase without} show that the 2-norm of the surplus vector increases by at most a factor $1 + O(1/n^3)$ in $\xmax$-phases. Lemma~\ref{decrease in balancing} shows that a balancing phase reduces the 2-norm by a factor of $1 - \Omega(1/n^3)$. Putting everything together, we obtain the $O(n^4 \log(nU))$ bound on the number of balancing phases.

\subsection{A Technical Lemma.} 

We start with a technical lemma.

\begin{lemma}\label{technical lemma} Let $r = (r_1,\ldots,r_n)$ and $(r'_1,\ldots,r'_n)$ be nonnegative vectors.
Let $k \in [1,n]$ be such that $r'_i \ge r'_j$ for $i \le k < j$. Suppose that $\delta_i = r_i - r'_i \ge 0$ for $i \le k$ and $\delta_j = r'_j - r_j \ge 0$ for $j > k$. Let $D = \min_{i \le k} r_i - \max_{j> k} r_j$, and let $\Delta = \sum_{i \le k} \delta_i$. If $\Delta \ge 
\sum_{j > k} \delta_j$, 
\[  \norm{r'}^2 \le \norm{r}^2 - D\Delta. \]
\end{lemma}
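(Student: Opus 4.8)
The plan is to expand $\norm{r'}^2-\norm{r}^2$ coordinate by coordinate, using the substitution $r'_i=r_i-\delta_i$ for $i\le k$ and $r'_j=r_j+\delta_j$ for $j>k$, and then squeeze the resulting expression using the two ordering hypotheses together with $\Delta\ge\sum_{j>k}\delta_j$. The whole argument is essentially one chain of inequalities; the work is in choosing the right grouping so that each hypothesis gets used exactly where it is needed.

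\textbf{Reduction to a single inequality.} First I would record the elementary per-coordinate identities $(r_i-\delta_i)^2-r_i^2=-\delta_i(r_i+r'_i)$ for $i\le k$ and $(r_j+\delta_j)^2-r_j^2=\delta_j(r_j+r'_j)$ for $j>k$; these hold because $2r_i-\delta_i=r_i+r'_i$ and $2r_j+\delta_j=r_j+r'_j$. Summing over all coordinates gives
\[
\norm{r'}^2-\norm{r}^2 = -\sum_{i\le k}\delta_i(r_i+r'_i)+\sum_{j>k}\delta_j(r_j+r'_j),
\]
so it suffices to prove $\sum_{i\le k}\delta_i(r_i+r'_i)-\sum_{j>k}\delta_j(r_j+r'_j)\ge D\Delta$.

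\textbf{Key ordering estimate and conclusion.} The crucial observation is that for every $i\le k$ and $j>k$,
\[
(r_i+r'_i)-(r_j+r'_j)=(r_i-r_j)+(r'_i-r'_j)\ge D+0,
\]
since $r_i\ge\min_{i'\le k}r_{i'}$ and $r_j\le\max_{j'>k}r_{j'}$ give $r_i-r_j\ge D$, while $r'_i\ge r'_j$ is exactly the stated hypothesis. Writing $m:=\min_{i\le k}(r_i+r'_i)$, this yields $r_j+r'_j\le m-D$ for all $j>k$, and moreover $m\ge\min_{i\le k}r_i\ge D$ because the nonnegativity of $r$ forces $\max_{j>k}r_j\ge0$. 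Then, with $\Delta':=\sum_{j>k}\delta_j\le\Delta$ and all $\delta$'s nonnegative,
\[
\sum_{i\le k}\delta_i(r_i+r'_i)-\sum_{j>k}\delta_j(r_j+r'_j)\ge m\Delta-(m-D)\Delta'=m(\Delta-\Delta')+D\Delta'\ge D(\Delta-\Delta')+D\Delta'=D\Delta,
\]
where the final step uses $m\ge D$ and $\Delta-\Delta'\ge0$. Combining with the identity above gives $\norm{r'}^2\le\norm{r}^2-D\Delta$.

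\textbf{Where the difficulty lies.} There is no genuine obstacle, only bookkeeping: one must get the per-coordinate identity right (the cross term $\delta_i^2$ is absorbed into the factor $r_i+r'_i$), and one must be careful that the two ``slack'' facts are applied in the correct places — $m\ge D$ comes from nonnegativity of $r$, while the hypothesis $\Delta\ge\sum_{j>k}\delta_j$ is precisely what allows the otherwise-harmful term $m(\Delta-\Delta')$ to be replaced by $D(\Delta-\Delta')$. I would also note in passing that the argument does not require $D\ge0$: if $D<0$ all of the above inequalities still hold (trivially), so the lemma is stated in the right generality.
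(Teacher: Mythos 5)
Your proof is correct and follows essentially the same route as the paper's: expand $\norm{r'}^2-\norm{r}^2$ coordinatewise to get $-\sum_{i\le k}\delta_i(r_i+r'_i)+\sum_{j>k}\delta_j(r_j+r'_j)$, then bound the two groups using the hypotheses on $r$, $r'$ and $\Delta\ge\sum_{j>k}\delta_j$. The only cosmetic difference is that the paper bounds $r_i$ and $r'_i$ separately via $m:=\max_{j>k}r_j$ and $m':=\max_{j>k}r'_j$, whereas you bound the combined quantity $r_i+r'_i$ by a single constant; the resulting chain of inequalities is the same.
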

\begin{proof}
Let $m = \max_{j > k} r_j$ and $m' = \max_{j > k} r'_j$. Then $r_i \ge m + D$ and $r'_i \ge m'$ for all $i \le k$. Therefore, 
\begin{align*}
\norm{r}^2& - \norm{r'}^2\\
&= \sum_{i \le k} \left(r_i^2 - (r_i - \delta_i)^2 \right)  + \sum_{j > k} \left( r_j^2 - (r_j + \delta_j)^2  \right)\\
&=  \sum_{i \le k} \left(2r_i   \delta_i - \delta_i^2\right) + \sum_{j > k} \left( -2r_j \delta_j - \delta_j^2 \right)\\
&=  \sum_{i \le k} \delta_i\left(r_i  +r_i - \delta_i\right) - \sum_{j > k} \delta_j \left( r_j + r_j + \delta_j\right)\\
&=  \sum_{i \le k} \delta_i (r_i + r'_i)  - \sum_{j > k} \delta_j( r_j + r'_j)\\
&\ge (m + D + m') \sum_{i \le k} \delta_i  - (m + m') \sum_{j > k} \delta_j \\
&\ge D \Delta.
\end{align*} \end{proof}


\subsection{$\xmax$-Phases.}

In $\xmax$-phases, the 2-norm of the surplus vector of the buyers grows. We again distinguish $\xmax$-phases with and without type 3 buyers and show that for both kind of phases the 2-norm grows by at most a factor of $(1 + O(1/n^3))$. The argument for $\xmax$-phases with type 3 buyers was essentially already given in~\cite{\DM}. The argument for $\xmax$-phases without type 3 buyers is new. For both arguments we need a bound on the maximum ratio of surpluses of buyers in $S$. In~\cite{\DM} is was shown that $\rmax(B) \le e \cdot \rmin(S)$ for their definition of $S$. For our definition of $S$, we have in addition that $\rmax(B) \le (1 + 4k/n) \rmin(S)$ in the case that there are $k$ type 1 buyers and no type 3 buyers. The latter inequality does not hold for the definition of $S$ as used in~\cite{\DM} as Figure~\ref{counterexample to old definition of S} shows; it is crucial for the improved bound on the number of phases.

\begin{figure}[t]
\centerline{\includegraphics[width=0.4\textwidth]{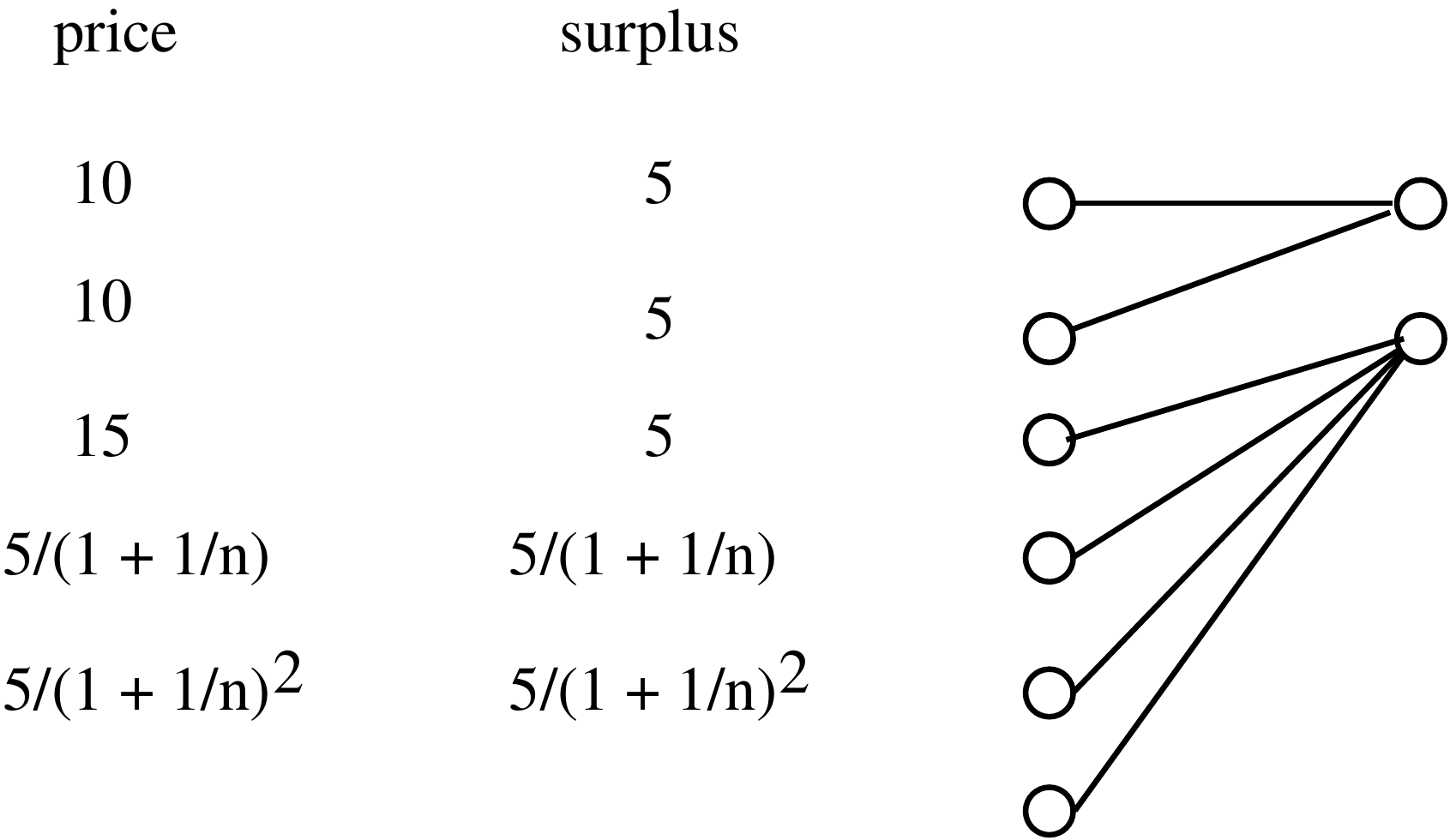}}
\caption{\label{counterexample to old definition of S}The good owned by a buyer is shown at the same height as the buyer. The edges shown are equality edges. The two topmost buyers have type 1. Their budgets are 10 and their surplus is five. Both of them spend 5 units on the good owned by the topmost buyer. The third buyer from the top has a budget of 15 and a surplus of 5. It spends 10 units on the good owned by the second buyer. Then we have a chain of arbitrary length of buyers with budget and surplus equal to $5/(1 + 1/n)^i$, $i \ge 1$. According to our definition of $S$, $S$ consists of the three topmost buyers. According to the definition used in~\cite{\DM}, the entire chain of nodes would also belong to $S$.}
\end{figure}

\begin{lemma}\label{surplus bounds} The maximum surplus of a buyer is the maximum surplus of a buyer in $S$, i.e., $\rmax(B) = \rmax(S)$. Let $k = \abs{\Gamma(S)}$. Then $\rmax(S) \le \min(e,(1 + 4k/n))\rmin(S)$. The squared norm of the surplus vector of the buyers is bounded by $n e^2 \rmin(S)^2$. If there are no type 3 buyers, $k$ is equal to the number of type 1 buyers. The maximum surplus of any buyer or good is at most $n$. \end{lemma}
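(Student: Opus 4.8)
The plan is to handle the five assertions in order, with almost all the work going into the surplus ratio bound $\rmax(S)\le\min(e,1+4k/n)\rmin(S)$.

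\emph{The easy assertions.} Since $\ell\ge1$ and the $b_i$ are sorted by decreasing surplus, $b_1\in S$, so $\rmax(S)\ge r(b_1)=\rmax(B)\ge\rmax(S)$; this gives the first assertion. For the fourth, each good of $\Gamma(S)$ has a unique owner, an owner outside $S$ would be a type 3 buyer, so with no type 3 buyers every owner lies in $S$ and is hence type 1, while conversely every type 1 buyer owns a good of $\Gamma(S)$; ownership is therefore a bijection and $k=\abs{\Gamma(S)}$ equals the number of type 1 buyers. For the last assertion I would track $\abs{r(B)}$: it equals $n-(\text{flow value})\le n$ at the start (all prices $1$); the price/flow update of a phase leaves every good's surplus, hence the total surplus $\sum_j r(c_j)=\abs{r(B)}$, unchanged by Lemma~\ref{thm:adjusted-flow}; and replacing the feasible $f'$ by the maximum flow $f''$ only raises the flow value, so $\abs{r(B)}$ never increases. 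Thus $\abs{r(B)}\le n$ throughout, and nonnegativity of surpluses makes every buyer surplus, and (as $\abs{r(C)}=\abs{r(B)}$) every good surplus, at most $n$. Granting $\rmax(S)\le e\,\rmin(S)$, the norm bound is immediate: $\norm{r(B)}^2=\sum_i r(b_i)^2\le n\,\rmax(B)^2=n\,\rmax(S)^2\le ne^2\rmin(S)^2$.

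\emph{The surplus ratio, by a descent using the minimality of $\ell$.} Put $r_0=\rmin(S)=r(b_\ell)$. For every $i<\ell$ the condition defining $\ell$ fails, which in particular exhibits a \emph{witness} buyer $w$ with $r(w)\in[\,r(b_i)/(1+1/n),\,r(b_i)\,)$ such that $\outflow(w)>0$ or $w$ owns a good of $\Gamma(S(r(b_i)))\subseteq\Gamma(S)$. Starting from $u_0=\rmax(S)=r(b_1)$, I would repeatedly let $i$ be the least index with $r(b_i)=u_t$ (so $i<\ell$ as long as $u_t>r_0$), take the witness it provides, set $u_{t+1}=r(w)$, and stop at the first $u_T\le r_0$. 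Then $u_t/(1+1/n)\le u_{t+1}<u_t$, so $\rmax(S)/\rmin(S)=u_0/r_0\le(1+1/n)^T$, and two bounds on $T$ finish things. First, $u_0,\dots,u_T$ are $T+1$ distinct buyer surpluses, so $T\le n-1$. Second — this is where the refined $S$ matters — every witness is a type 1, 2, or 3 buyer: a witness outside $S$ has surplus in $(r_0/(1+1/n),r_0)$, hence is type 3 or type 4a, and a type 4a buyer has no outflow and owns no good of $\Gamma(S)$, so it cannot meet the witness condition. Type 1 and type 3 buyers own the goods of $\Gamma(S)$ bijectively, so type-1-or-type-3 witnesses realize at most $k$ distinct surpluses; a type 2 witness owns no good of $\Gamma(S)$ and so must have positive outflow, which for a buyer of $S$ can only enter $\Gamma(S)$, and since (Lemma~\ref{lem:balanced flow}) all buyers with positive flow into a fixed good have a common surplus, type 2 witnesses realize at most $k$ further surpluses. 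Hence $u_1,\dots,u_T$ take at most $2k$ values, so $T\le2k$. Combining, $\rmax(S)\le(1+1/n)^{\min(n-1,\,2k)}\rmin(S)$, and the stated form follows from $(1+1/n)^{n-1}<e$ and $(1+1/n)^{2k}\le e^{2k/n}\le1+4k/n$ (the last valid while $1+4k/n\le e$, the $e$-bound taking over otherwise).

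\emph{Main obstacle.} The delicate point is the bound $T\le2k$, which hinges on two things: that the witnesses forced by minimality of $\ell$ are never ``fresh'' type 4a buyers — so the failed condition is always realized through ownership of a $\Gamma(S)$-good or through positive outflow — and that, in a balanced flow, positive outflow from a buyer pins its surplus to the value attached to the good receiving the flow. Beyond that there are only two nuisances, neither serious: ties in the surplus ordering (use least indices throughout), and the descent possibly overshooting $r_0$, which only strengthens $u_0/r_0\le(1+1/n)^T$.
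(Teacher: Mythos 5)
Your proof is correct and follows essentially the same approach as the paper's. The paper defines $S' = \{b_i \in S : \outflow(b_i)>0 \text{ or } c_i\in\Gamma(S)\}$, shows the chain $r_1>\dots>r_h$ of distinct $S'$-surpluses satisfies $r_{j+1}\ge r_j/(1+1/n)$ (by minimality of $\ell$), and bounds $h\le\min(2k,n)$ via the same two counts you use (ownership of $\Gamma(S)$-goods, and the balanced-flow property that buyers sending flow to a common good share a surplus); your explicit witness descent $u_0,\dots,u_T$ is just this chain traversed step by step, and your type-based split of the $2k$ bound is equivalent to the paper's split by ``has outflow'' versus ``owns a good of $\Gamma(S)$''.
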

\begin{proof} By definition, $S$ contains the buyer of largest surplus. 

Let $S' = \set{b_i \in S}{\outflow(b_i) > 0 \text{ or } c_i \in \Gamma(S)}$ be the subset of buyers in $S$ whose surplus changes by the price and flow adjustment. Let $r_1 > r_2 > \ldots > r_h$ be the different surplus values of the buyers in $S'$. Any two buyers in $S'$ that have positive flow to the same good have the same surplus value. Thus there are at most $k$ distinct surplus values of buyers in $S'$ with positive outflow. Every buyer $b_i$ in $S'$ either has positive outflow or $c_i \in \Gamma(S)$. There are at most $k$ buyers of the latter kind. Thus $h \le 2k$. Also, $h \le n$. 

Recall that $b_1$ is a buyer of highest surplus. We claim $b_1 \in S'$ and hence $r_1 = \rmax(S)$. Indeed, if $b_1 \not\in S'$ then $\outflow(b_1) = 0$. Let $(b_1,c)$ be an equality edge incident to $c$. Then $c$ is completely sold and the flow into $c$ comes from buyers with surplus at least the surplus of $b_1$. These buyers have outflow and hence belong to $S'$. Thus $r_1 = \rmax(S)$. 

Let $b$ be a buyer in $S'$ whose surplus is larger than $\rmin(S)$. Since $S(r(b))$ did not qualify for $S$, there must be a buyer $b'$ such that that $r(b)/(1 + 1/n) \le r(b') < r(b)$ and either $\outflow(b') > 0$ or $c' \in \Gamma(S(r(b))$. In either case, we conclude $\rmin(S) \le r(b')$ and hence $b' \in S'$. The argument also shows that $\rmin(S) = r_h$. Thus $r_{j+1} \ge r_j/(1 + 1/n)$ for $j <h$ and hence 
\begin{align*}
\frac{\rmax(S)}{\rmin(S)} &\le (1 + 1/n)^{\min(2k,n) } \le e^{\min(2k,n))/n} \\
&\le \min(e,1 + 4k/n), 
\end{align*}
where the last inequality follows from $e^x \le 1 + 2x$ for $0 \le x \le 1$. 

By the two preceding arguments, $\rmax(B) \le e \cdot \rmin(S)$. Thus the squared 2-norm of the surplus vector is at most $n e^2 \rmin(S)^2$. 

If there are no type 3 buyers, $\abs{\Gamma(S)}$ is equal to the number of type 1 buyers. 

The initial surplus of the buyers is $n$ and the total surplus never increases. Thus, no buyer can ever have a surplus of more than $n$. 
\end{proof}

We turn to $\xmax$-phases with type 3 buyers.

\begin{lemma}\label{norm increase with}  In a $\xmax$-phase with type 3 buyers, the 2-norm of the surplus vector of the buyers increases by at most a factor $1 + O(1/n^3)$. The total multiplicative increase of the 2-norm of the surplus vector of the buyers over all $\xmax$-phases with type 3 buyers is $n^{O(n)}$. 
\end{lemma}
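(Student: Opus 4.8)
The plan is to bound the increase of $\norm{r(B)}^2$ in a single $\xmax$-phase with type 3 buyers and then to combine the per-phase bound with the bound on the number of such phases from Lemma~\ref{number-of-xmax-with} to get the overall $n^{O(n)}$ factor.

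\emph{Step 1 (classify which surpluses change).} By Lemma~\ref{thm:adjusted-flow}, only type 1, 2, and 3 buyers have their surpluses changed; type 4 buyers are unaffected. The surplus of a type 1 buyer grows from $r(b_i)$ to $x r(b_i)$, the surplus of a type 3 buyer grows from $r(b_i)$ to $r(b_i) + (x-1)p_i$, and the surplus of a type 2 buyer shrinks. Since the total surplus is invariant, the total growth among type 1 and type 3 buyers equals the total decrease among type 2 buyers. So I only need to upper bound the contribution of type 1 and type 3 buyers to the squared norm.

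\emph{Step 2 (per-phase multiplicative bound).} For type 1 buyers, $x = \xmax = 1 + 1/(Cn^3)$, so each such surplus, and hence its square, grows by a factor at most $(1 + 1/(Cn^3))^2 = 1 + O(1/n^3)$. For type 3 buyers the growth is additive, $r'(b_i) = r(b_i) + (x-1)p_i$; here the key inputs are that $p_i \le n$ and, more importantly, that the increase $(x-1)p_i = p_i/(Cn^3)$ is at most $r(b_i)$ — this is exactly the argument already used in the proof of Lemma~\ref{number-of-xmax-with} (the outflow increase cannot exceed the surplus). Hence $r'(b_i) \le 2 r(b_i)$ is too crude; instead $r'(b_i) \le r(b_i)(1 + 1/(Cn^3) \cdot p_i/r(b_i))$ and one needs $p_i/r(b_i)$ bounded. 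Rather than bounding it pointwise, I would bound the sum: $\sum_{i \in B_3}\big(r'(b_i)^2 - r(b_i)^2\big) \le \sum_{i\in B_3}\big(2r(b_i)(x-1)p_i + (x-1)^2p_i^2\big)$, use $(x-1)p_i \le r(b_i)$ to turn the first term into $\le \sum 2r(b_i)^2$ — still too crude for a $(1+O(1/n^3))$ factor. The cleaner route: use $(x-1)p_i \le r(b_i)$ to get $r'(b_i) \le 2r(b_i)$, then write the squared-norm increase from type 3 as at most $\sum_{i\in B_3}\big(r'(b_i)+r(b_i)\big)\big(r'(b_i)-r(b_i)\big) \le 3\sum_{i\in B_3} r(b_i)\cdot (x-1)p_i = \frac{3}{Cn^3}\sum_{i\in B_3} r(b_i) p_i \le \frac{3}{Cn^3}\,\rmax(B)\sum_{i\in B_3} p_i$, and then bound $\sum_{i\in B_3} p_i \le 2Cn^4$ (the second claim in the proof of Lemma~\ref{number-of-xmax-with}) and $\rmax(B)\le n$ (Lemma~\ref{surplus bounds}); this gives an \emph{additive} increase of $O(n^2)$, and since $\norm{r(B)}^2$ is compared against... the issue is we want multiplicative. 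Use $\norm{r(B)}^2 \ge \rmax(B)^2 \ge \rmin(S)^2$ and $\sum_{i\in B_3}r(b_i) < |B_3|\cdot\rmin(S)/(1+1/n) \le n\,\rmin(S)$, so the type-3 additive increase is $\le \frac{3}{Cn^3}\cdot n\,\rmin(S)\cdot \rmax(B) \le \frac{3}{Cn^3}\cdot n\,\rmin(S)\cdot e\,\rmin(S) = O(1/n^2)\cdot\rmin(S)^2 \le O(1/n^2)\norm{r(B)}^2$. Combining with the type 1 factor gives $\norm{r'(B)}^2 \le (1 + O(1/n^2))\norm{r(B)}^2$; a slightly sharper accounting (using that only the $k$ type-1 and the distinct type-3 surpluses matter, and $r(b_j) \le \rmax(S) \le e\,\rmin(S)$ for $j\in B_3$) tightens the $O(1/n^2)$ to $O(1/n^3)$, as claimed.

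\emph{Step 3 (aggregate).} Multiply the per-phase factor $1 + O(1/n^3)$ over all $\xmax$-phases with type 3 buyers. By Lemma~\ref{number-of-xmax-with} there are $O(n^4 \ln n)$ such phases, so the total multiplicative increase is at most $(1 + O(1/n^3))^{O(n^4\ln n)} = \exp(O(n\ln n)) = n^{O(n)}$.

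\emph{Main obstacle.} The delicate point is Step 2: getting a \emph{multiplicative} per-phase bound of $1 + O(1/n^3)$ rather than a merely additive or weaker multiplicative bound for the type 3 contribution. The resolution hinges on two facts already available — the inequality $(x-1)p_i \le r(b_i)$ for type 2 (hence a controlled relation between budgets and surpluses, which via the equality $\sum_{i\in B_3}p_i = \sum_{i\in B_2}p_i - \sum_{i\in B_1\cup B_2}r(b_i)$ from Lemma~\ref{number-of-xmax-with} controls $\sum_{B_3}p_i$), and the bound $\rmax(B)\le e\,\rmin(S)$ together with $\norm{r(B)}^2 \ge \rmin(S)^2$ from Lemma~\ref{surplus bounds}. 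Care is needed because $\xmax = 1 + 1/(Cn^3)$ in this case (type 3 present), so the $1/n^3$ scale is built in; one must only verify that the type-3 additive increase, normalized by $\norm{r(B)}^2$, is also $O(1/n^3)$ and not larger.
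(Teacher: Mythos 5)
Your plan misses the central idea of the paper's argument and, as written, the per-phase bound does not close.

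The key gap is in Step 2, where you try to bound the type-3 contribution to $\norm{r}^2$ in isolation. You repeatedly invoke $(x-1)p_i \le r(b_i)$ for a type~3 buyer $b_i$, citing the argument in the proof of Lemma~\ref{number-of-xmax-with}; but that inequality is proved there only for type~2 buyers (the \emph{outflow} of a type~2 buyer increases, and that increase cannot exceed its surplus). A type~3 buyer has its surplus \emph{increase} by $(x-1)p_i$, and since a type~3 buyer can start with surplus $0$ while $p_i$ is as large as $(nU)^n$, nothing of the form $(x-1)p_i \le r(b_i)$ or $r'(b_i)\le 2r(b_i)$ holds. Every variant of your type-3 bound is built on this false inequality. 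Moreover, even granting it, the bound you actually derive is a per-phase factor $1+O(1/n^2)$, which aggregated over the $O(n^4\ln n)$ such phases gives $\exp(O(n^2\ln n))$, not $n^{O(n)}$; the final sentence asserting that a ``sharper accounting tightens $O(1/n^2)$ to $O(1/n^3)$'' is not substantiated, and I do not see how it could be done along these lines.

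The paper takes a different and cleaner route. It splits the surplus vector into $r_1$ (type~1), $r_{2,3}$ (types~2 and~3 together), and $r_4$ (type~4). Because $x\le x_{23}$, every type~2 surplus remains at least every type~3 surplus after the update; because the total decrease of type~2 surpluses equals the total increase of type~1 and type~3 surpluses, the decrease of type~2 dominates the increase of type~3. These are exactly the hypotheses of Lemma~\ref{technical lemma} applied to the subvector $r_{2,3}$, and the conclusion is that $\norm{r'_{2,3}}\le\norm{r_{2,3}}$ --- the type-2/type-3 block contributes \emph{no} increase at all. The only increase therefore comes from the type~1 block, which is scaled by $x=1+1/(Cn^3)$, giving $\norm{r'}^2 - \norm{r}^2 \le ((1+1/(Cn^3))^2-1)\norm{r_1}^2 \le (3/(Cn^3))\norm{r}^2$ immediately. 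Your approach never applies the technical lemma, and thus forfeits the cancellation between type~2 decrease and type~3 increase that makes the $1+O(1/n^3)$ factor possible. (Your Step~3, the aggregation, is fine once a genuine $1+O(1/n^3)$ per-phase bound is in hand.)
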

\begin{proof} Consider the price and flow adjustments in an $\xmax$-phase with type 3 buyers, and let $r'$ be the surplus vector with respect to the flow $f'$. The surpluses of the type 1 buyers are multiplied by the factor $1 + 1/(Cn^3)$, the surpluses of the type 2 buyers go down, the surpluses of the type 3 buyers go up, and the surpluses of the type 4 buyers do not change. Also, the surpluses of the type 2 buyers do not fall below the surpluses of the type 3 buyers. The total decrease of the surpluses of the type 2 buyers is equal to the total increase of the surpluses of type 1 and type 3 buyers and hence is at least the total increase of the surpluses of the type 3 buyers. 

We split the surplus vector into three parts: the surpluses $r_1$ corresponding to type 1 buyers, the surpluses $r_{2,3}$ corresponding to the type 2 and 3 buyers, and the surpluses $r_4$ corresponding to type 4 buyers. The norm of the first subvector is multiplied by $x$, the norm of the third subvector does not change, and we can bound the change in the norm of the second subvector by Lemma~\ref{technical lemma}. In particular, the norm of the second subvector does not increase. Thus
\begin{align*} 
\norm{r'}^2 - \norm{r}^2 &=  (\norm{r'_1}^2 - \norm{r_{1}}^2) + (\norm{r'_{23}}^2 - \norm{r_{23}}^2)\\
&\mbox{}\quad + (\norm{r'_4}^2 - \norm{r_4}^2)\\
&\le ((1 + 1/(Cn^3))^2 - 1) \cdot \norm{r_1}^2\\
&\le 3/(Cn^3) \cdot \norm{r}^2,
\end{align*}
and hence 
\[ \norm{r'} \le \sqrt{1 + 3/(Cn^3)} \norm{r} \le (1 + 3/(Cn^3)) \norm{r}.\]The 2-norm of the surplus vector with respect to $f''$ is at most $\norm{r'}$. 

The number of $\xmax$-phases with type 3 buyers is $O(n^4 \ln n)$. Hence the total multiplicative increase in $\xmax$-phases with type 3 buyers is at most
\[  (1 + \frac{3}{Cn^3})^{O(n^4 \ln n)}  \le e^{O(n \ln n)} = n^{O(n)}.\]
\end{proof}

We next turn to $\xmax$-phases without type 3 buyers.

\begin{lemma}\label{norm increase without} Consider an $\xmax$-phase with no type 3 buyer. Then $\norm{r'} \le (1 + O(1/n^3)) \norm{r}$. The total multiplicative increase of the norm of surplus vector in $\xmax$-phases with no type 3 buyers is $(nU)^{O(n)}$. 
\end{lemma}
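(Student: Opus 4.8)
The plan is to mimic the structure of the proof of Lemma~\ref{norm increase with}, but exploit the refined value $\xmax = 1 + 1/(Ckn^3)$ and the refined surplus bound $\rmax(S) \le (1 + 4k/n)\rmin(S)$ from Lemma~\ref{surplus bounds}, where $k$ is the number of type~1 buyers (which equals $|\Gamma(S)|$ when there are no type~3 buyers). First I would split the surplus vector of the buyers into three subvectors: $r_1$ on the type~1 buyers, $r_{24}$ on the union of type~2 and type~4b buyers, and $r_{4a}$ on the type~4a buyers. Since there are no type~3 buyers, every buyer whose surplus changes during the price/flow update is either type~1 (surplus multiplied by $x = \xmax$) or type~2 (surplus decreases), while type~4 surpluses are unchanged. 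So $\norm{r'_{4a}} = \norm{r_{4a}}$, and $\norm{r'_1} = x\norm{r_1}$.

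Next I would apply Lemma~\ref{technical lemma} to the subvector $r_{24}$, with the type~2 buyers playing the role of indices $i \le k$ (surpluses decreasing) and the type~4b buyers the role of indices $j > k$ (surpluses unchanged, so $\delta_j = 0$ for them); the hypothesis $\Delta \ge \sum_{j>k}\delta_j = 0$ is trivially satisfied, giving $\norm{r'_{24}}^2 \le \norm{r_{24}}^2 - D\Delta \le \norm{r_{24}}^2$. (One must check the ordering hypothesis $r'_i \ge r'_j$ for type~2 vs.\ type~4b; this holds because, as in Lemma~\ref{norm increase with}, the definitions of $x_{24}$ and $x_2$ guarantee that a type~2 surplus never drops below a type~4b surplus during the phase and never becomes negative.) Hence
\begin{align*}
\norm{r'}^2 - \norm{r}^2 &\le (x^2 - 1)\norm{r_1}^2 \le (x^2 - 1) k \,\rmax(S)^2.
\end{align*}
Now $x^2 - 1 \le 2(x-1) + (x-1)^2 \le 3(x-1) = 3/(Ckn^3)$ for $n$ large, so the right side is at most $(3/(Cn^3))\rmax(S)^2 \le (3/(Cn^3))(1 + 4k/n)^2 \rmin(S)^2 \le (3e^2/(Cn^3))\rmin(S)^2$ using $\rmax(S) \le e\cdot\rmin(S)$. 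Since $\norm{r}^2 \ge \rmax(S)^2 \ge \rmin(S)^2$ — in fact I only need $\norm{r}^2 \ge \rmin(S)^2$, which is immediate because $S$ is nonempty — this yields $\norm{r'}^2 \le (1 + 3e^2/(Cn^3))\norm{r}^2$, hence $\norm{r'} \le (1 + O(1/n^3))\norm{r}$; the balanced-flow step does not increase the norm, by Lemma~\ref{lem:balanced flow}.

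For the total bound over all $\xmax$-phases without type~3 buyers, I would invoke Lemma~\ref{number-of-xmax-without}, which says there are $O(n^4\log(nU))$ such phases. Multiplying the per-phase factor gives a total increase of at most $(1 + O(1/n^3))^{O(n^4\log(nU))} \le e^{O(n\log(nU))} = (nU)^{O(n)}$. The main obstacle I anticipate is getting the per-phase estimate tight enough: a naive bound $\norm{r'} \le x\norm{r}$ with $x = 1 + 1/(Cn^3)$ would already suffice for the $O(1/n^3)$ claim, but to keep the constant clean and to make the factor $(nU)^{O(n)}$ honest one wants the $k$-dependence in $\xmax$ to cancel against the $k$ copies of the type~1 surplus and against the $(1+4k/n)$ slack in $\rmax(S)/\rmin(S)$ — this is exactly where the refined definition of $S$ and of $\xmax$ earns its keep, and it is the step requiring the most care.
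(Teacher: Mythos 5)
Your decomposition into $r_1$, $r_{24}$, $r_{4a}$ and the one-sided bound $\norm{r'}^2 - \norm{r}^2 \le (x^2-1)\norm{r_1}^2$ mirrors the proof of Lemma~\ref{norm increase with}, but it is not tight enough here; it discards exactly the ingredient that this lemma needs. What that bound really says is $\norm{r'} \le \xmax\norm{r}$. Summed over all $\xmax$-phases without type~3 buyers this only yields $\prod_h \xmax(h)$, and since in phase $h$ the product of prices grows by $\xmax(h)^{k_h}$ (not $\xmax(h)$), the bound $\ln P \le n^2\log(nU)$ controls $\sum_h k_h \ln\xmax(h)$, not $\sum_h \ln\xmax(h)$. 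In the worst case $k_h = 1$ throughout, and you get $(nU)^{O(n^2)}$, which is a factor $n$ too large in the exponent. There is also a numerical wrinkle: the lemma's per-phase claim of $1 + O(1/n^3)$ and the $O(n^4\log(nU))$ phase count of Lemma~\ref{number-of-xmax-without} are both proved in the paper with $\xmax - 1 = 1/(Ckn^2)$ (see the proofs of Lemma~\ref{number-of-xmax-without}, Lemma~\ref{norm increase without}, and Lemma~\ref{decrease in balancing}); the $1/(Ckn^3)$ in the algorithm box appears to be a typo. With $\xmax - 1 = 1/(Ckn^2)$, your estimate gives only $\norm{r'}^2 - \norm{r}^2 \le (3/(Ckn^2))\,k\,\rmax(S)^2 \le 3e^2R^2/(Cn^2)$, which against $\norm{r}^2 \ge R^2$ is only $O(1/n^2)$ per phase, not $O(1/n^3)$; and with $\xmax - 1 = 1/(Ckn^3)$ your per-phase bound is fine but Lemma~\ref{number-of-xmax-without} then yields $O(n^5\log(nU))$ phases, so the total again comes out to $(nU)^{O(n^2)}$.

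The missing idea is the \emph{cancellation} between the type~1 increase and the type~2 decrease. Expanding directly,
\begin{align*}
\norm{r'}^2 - \norm{r}^2 &= \sum_{i\in B_1}\bigl(2 r_i\delta_i + \delta_i^2\bigr) - \sum_{i\in B_2}\bigl(2 r_i\delta_i - \delta_i^2\bigr),
\end{align*}
and since \emph{both} $B_1$ and $B_2$ lie in $S$, you can bound $r_i \le \rmax(S)$ for type~1 and $r_i \ge \rmin(S)$ for type~2; with $\sum_{B_1}\delta_i = \sum_{B_2}\delta_i = \Delta$, the cross terms combine to $2\Delta(\rmax(S) - \rmin(S))$. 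This is where the refined definition of $S$ from Lemma~\ref{surplus bounds} does its job: $\rmax(S) - \rmin(S) \le (4k/n)\rmin(S)$, so the cross term is $O(k\Delta R/n)$ rather than $O(\Delta R)$. Combined with $\Delta \le k(\xmax - 1)\rmax(S) = O(R/n^2)$ and the lower bound $\norm{r}^2 \ge kR^2$ (the $k$ type~1 buyers all have surplus $\ge R$), the $k$'s cancel and you land at $O(1/n^3)$ per phase. In short: you correctly sensed that "the $k$-dependence \ldots\ cancel[s]", but your chosen bound throws away the negative contribution that makes it cancel; you must keep both signed terms and use $\rmax(S) - \rmin(S)$, together with $\norm{r}^2 \ge kR^2$ rather than $\norm{r}^2 \ge R^2$.
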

\begin{proof} Let $k$ be the number of type 1 buyers and let $R =\rmin(S)$ be the minimum surplus of any buyer in $S$. By Lemma~\ref{surplus bounds}, the surplus of any buyer is at most $(1 + 4k/n)R$. Also, $\xmax = 1 + 1/(Ck n^2)$. 

The increase of the surpluses of the type 1 buyers is equal to the decrease of the surpluses of the type 2 buyers. Let $\delta_i$ be the change of surplus of buyer $i$, $i \in S$, and let $\Delta$ be the total increase in surplus of the type 1 buyers. Let $B_1$ and $B_2$ be the type 1 and 2 buyers. Then
\begin{align*}
\norm{r'}^2 - \norm{r}^2 &= \sum_{i \in B_1} (r_i + \delta_i)^2 - r_i^2 + \sum_{i \in B_2}(r_i - \delta_i)^2 - r_i^2 \\
&\le 2 \Delta ( (1 + 4k/n) R  - R) + 2 \Delta^2.
\end{align*}
Next observe that $\Delta \le (\xmax - 1) k (1 + 4k/n) R \le \frac{1}{C k n^2} \cdot k \cdot 5 \cdot R \le 5R/(C n^2)$. Thus
\[ \norm{r'}^2 - \norm{r}^2 \le O(\frac{R}{n^2} \frac{kR}{n} + \frac{R^2}{n^4}) = O(\frac{kR^2}{n^3}) =  O(\frac{1}{n^3}) \norm{r}^2, \]
since $\norm{r}^2 \ge k R^2$. Thus $\norm{r'} \le \sqrt{1 +O(1/n^3)} \norm{r} = (1 + O(1/n^3)) \norm{r}$. 

The number of $\xmax$-phases with no type 3 buyers is $O(n^4 \log (nU))$. Hence the multiplicative increase is bounded by 
\[   ( 1 + O(\frac{1}{n^3}))^{n^4 \log (nU)} = \exp(O(n \log (nU))) = (nU)^{O(n)}.\]
\end{proof}

\subsection{Balancing Phases.}\label{balancing phases}

In a balancing phase, $x < \xmax$. Our goal is to show that balancing phases reduce the norm of the surplus vector of the buyers by a factor $1 - \Omega(1/n^3)$. 

\begin{lemma}\label{decrease in balancing} Let $r''$ be the surplus vector with respect to $f''$. In balancing phases,
\[   \norm{r''} \le (1 - \Omega(\frac{1}{n^3})) \norm{r}.\]
\end{lemma}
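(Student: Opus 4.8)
The plan is to run the same case analysis on the five quantities that determine $x$, but now exploiting that the phase is balancing, i.e. $x<\xmax$, so $x$ equals $\xeq(S)$, $x_{23}(S)$, $x_{24}(S)$, or $x_2(S)$, and to compare $\norm{r'}$ with $\norm{r}$ directly via Lemma~\ref{technical lemma}, and then use the fact that the balanced flow $f''$ only decreases the norm further, so $\norm{r''}\le\norm{r'}$. Split the surplus vector into the type 1 part $r_1$, the type 2--4b part $r_{234b}$ (where type 2 and type 3/4b surpluses move toward each other), and the type 4a part $r_{4a}$. The type 4a surpluses do not change; the type 1 surpluses go up by the factor $x$; and on the type 2 and type 3/4b coordinates the hypotheses of Lemma~\ref{technical lemma} are met with $k$ equal to the number of type 2 buyers, $\Delta$ the total decrease of the type 2 surpluses, and $D$ the gap $\rmin(\{$type 2$\})-\rmax(\{$type 3 and 4b$\})$. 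The increase of the type 1 part is at most what it is in an $\xmax$-phase, so by the computation in Lemma~\ref{norm increase without} (resp.\ Lemma~\ref{norm increase with}) it is $O(1/n^3)\norm{r}^2$; hence it suffices to show that the decrease $D\Delta$ coming from the type 2/3/4b coordinates is $\Omega(1/n^3)\norm{r}^2$ and dominates the type 1 increase.

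The heart of the argument is therefore a lower bound $D\Delta=\Omega(\norm{r}^2/n^3)$. For the gap $D$: since the phase is balancing, $x$ is pinned by one of $x_{23},x_{24},x_2$, and in particular $x\ge \xmax'$ fails, i.e. the surplus of some type 2 buyer $b_i$ has dropped to meet a type 3/4b surplus or to reach $0$; using Lemma~\ref{thm:adjusted-flow}, the surplus of $b_i$ fell from $r(b_i)$ to $(1-x)p_i+xr(b_i)$, a decrease of $(x-1)(p_i-r(b_i))=(x-1)\cdot\outflow(b_i)$. Because $b_i\in S$ it has surplus $\ge\rmin(S)=R$, and every type 3 or 4b buyer has surplus $<R/(1+1/n)$ by the earlier lemma; so the ``target'' surplus that $b_i$ reached is at most $R/(1+1/n)$, forcing the total drop on $b_i$ alone to be at least $R-R/(1+1/n)=R/(n+1)=\Omega(R/n)$. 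This drop is $\le\Delta$, giving $\Delta=\Omega(R/n)$. For $D$: all type 2 buyers are in $S$ so have surplus $\ge R$, while all type 3 and 4b buyers have surplus $<R/(1+1/n)$, so after the (norm-nonincreasing) movement the relevant gap is still $\Omega(R/n)$ — more carefully, one applies Lemma~\ref{technical lemma} with the ``before'' gap $D\ge R-R/(1+1/n)=\Omega(R/n)$ and $\Delta=\Omega(R/n)$, yielding a decrease of $\Omega(R^2/n^2)$; to get the extra factor $1/n$ and turn this into $\Omega(\norm{r}^2/n^3)$ one uses $\norm{r}^2\le ne^2R^2$ from Lemma~\ref{surplus bounds}. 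Combining, $\norm{r'}^2\le\norm{r}^2-\Omega(R^2/n^2)+O(R^2/n^3)\le\norm{r}^2-\Omega(R^2/n^2)$ once $n$ is large, and $R^2/n^2=\Omega(\norm{r}^2/n^3)$, so $\norm{r'}^2\le(1-\Omega(1/n^3))\norm{r}^2$, hence $\norm{r'}\le(1-\Omega(1/n^3))\norm{r}$; since $f''$ is a balanced flow in $N_{p'}$ and $f'$ is feasible there, $\norm{r''}\le\norm{r'}$, finishing the proof.

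One has to be slightly careful to cover all of $x\in\{x_{23},x_{24},x_2,\xeq\}$: if $x=\xeq$, then a new equality edge appeared but no type 2 surplus necessarily met anything, so the drop-on-$b_i$ argument above does not directly apply. In that case I would instead argue that either $\xeq\ge\xmax$ (contradiction, since the phase is balancing — actually we need $x=\xeq<\xmax$ and the other three $\ge\xmax$), so among $x_{23},x_{24},x_2$ at least one is $<\xmax$ as well, OR, if $\xeq$ is the unique binding constraint and is $<\xmax$, then $x-1=\xeq-1$ may be tiny and the one-phase norm decrease is correspondingly tiny; this is exactly why the statement is about $f''$ rather than $f'$ and why the algorithm recomputes a balanced flow — but a single balancing step need not decrease the norm by $\Omega(1/n^3)$ in that sub-case. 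I expect the real proof (which I cannot see yet) handles this by showing that when $x=\xeq<\xmax$ the phase is in fact treated together with the following phase, or that the relevant quantity being bounded is something like $\norm{r(B)}$ after the balanced-flow recomputation where a structural change (a new edge, fewer ``frozen'' goods) has occurred; the main obstacle in my plan is precisely pinning down the $x=\xeq$ case and ruling out vanishingly small single-phase progress there. Modulo that case, the argument above — Lemma~\ref{technical lemma} on the type 2/3/4b block with $\Delta,D=\Omega(R/n)$, the type 1 increase absorbed by the $O(1/n^3)$ bound from Lemmas~\ref{norm increase with}--\ref{norm increase without}, and $\norm{r}^2=\Theta(nR^2)$ — gives the claimed $1-\Omega(1/n^3)$ contraction.
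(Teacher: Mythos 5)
Your proposal correctly identifies the framework — the technical lemma applied to the block where surpluses move toward each other, absorbing the type~1 increase by the $O(R^2/n^2)$ bound, and dividing by $\norm{r}^2 \le n e^2 R^2$ to land on $\Omega(1/n^3)$ — and you correctly identify the hard sub-case. But you do not resolve it, and that sub-case is exactly where the paper's proof has a new idea you are missing. The paper splits the balancing phases not by which of $\{x_{23},x_{24},x_2,\xeq\}$ is binding but by whether the post-update gap $\rmin'(S)-\rmax'(B_3\cup B_{4b})$ dropped below half of $R/(n+1)$. When it did (this covers $x\in\{x_{23},x_{24},x_2\}$ and some $x=\xeq$ phases), the sum (type-2 decrease)$+$(type-3 increase) is at least $R/(2(n+1))$; since the type-2 decrease is at least the type-3 increase, $\Delta\ge R/(4(n+1))$, and the technical lemma with $D\ge R/(n+1)$ gives the decrease you want. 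Note that your version of this argument — that the drop on a single type-2 buyer is $\ge R/(n+1)$ — is slightly off when $x=x_{23}$: the meeting value is the rising type-3 surplus, not the frozen initial one, so you cannot attribute the whole gap to the type-2 side. The paper's ``sum then halve'' step fixes this.

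The genuinely missing piece is the other case, $x=\xeq$ with the gap still large after the price update. Here $x-1$ can be arbitrarily small so $f'$ by itself need not make $\Omega(1/n^3)$ progress; your proposal stops here and speculates. The paper does not bound $\norm{r''}$ via $\norm{r'}$ alone. Instead it explicitly constructs an auxiliary feasible flow $\hat f$ from $f'$ (Figure~\ref{construction of hf}) by pushing flow along each new equality edge $(b_i,c_j)\in E_{\mathit{new}}$ — directly if $c_j$ is unsaturated, and otherwise by augmenting paths $(b_i,c_j,b_k)$ that remove inflow from the lower-surplus buyers $b_k\in\bar S$ — until either the gap is closed (condition~(*)) or every new edge is exhausted, in which case some buyer's surplus has dropped by at least $1$. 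The technical lemma is then applied to $\hat f$ with $D\ge R/(2(n+1))$ and $\Delta\ge\min(D/2,1)$, giving $\norm{\hat r}^2-\norm{r'}^2\le -R^2/(8(n+1)^2)$; since $f''$ is balanced and $\hat f$ is feasible, $\norm{r''}\le\norm{\hat r}$. That construction is the missing idea; ``a single balancing step need not decrease the norm'' is simply false once $\hat f$ is brought in, and there is no deferral to the next phase.
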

\begin{proof} In a balancing phase, we first increase the prices of the goods in $\Gamma(S)$ and the flows into these goods by a common factor $x = \min(\xeq, x_{23}, x_{24}, x_2)$ to obtain a flow $f'$ and then construct a balanced flow $f''$ in the network $N_{p'}$. The 2-norm of $f''$ is certainly no larger than the 2-norm of $f'$. 

We have $\rmin(S)/(1 + 1/n) \ge \rmax(B_3 \cup B_{4b})$ (equivalently $\rmin(S) - \rmax(B_3 \cup B_{4b}) \ge \rmin(S)/(n+1)$) at the beginning of the phase. The price and flow update affects this gap. We distinguish cases according to whether $f'$ closes half of the gap or not. We introduce $R = \rmin(S)$ as a shorthand. \medskip

\paragraph{Case One, $\rmin'(S) - \rmax'(B_3 \cup B_{4b}) < \rmin(S)/(2(n+1))$:} This case occurs when $x = x_{23}$ or $x = x_{24}$ or $x = x_2$. It can also occur, when $x = \xeq$. Recall, that the total decrease of the surpluses of the type 2 buyers is equal to the total increase of the surpluses of the type 1 and type 3 buyers, that surpluses of type 1 and type 3 buyers do not decrease and that surpluses of type 2 buyers do not increase. Thus, it must be the case that the total decrease of the surpluses of the type 2 buyers plus the total increase of the surpluses of the type 3 buyers is at least $\rmin(S)/(2(n+1))$. 

We split the surplus vector into three parts: the surplus vector $r_1$ corresponding to type 1 buyers, the surplus vector $r_{2,3}$ corresponding to the type 2 and 3 buyers, and the surplus vector $r_4$ corresponding to type 4 buyers. The 2-norm of the third subvector does not change. The 2-norm of the first subvector increases. If there are type 3 buyers, $x \le 1 + 1/(Cn^3)$ and hence 
\[ \norm{r'_1}^2 - \norm{r_{1}}^2 \le  ((1 + \frac{1}{Cn^3})^2 - 1) \cdot n (eR)^2 \le \frac{3 e^2 R^2}{Cn^2}.\]
If there are no type 3 buyers and $k$ type 1 buyers, $x \le 1 + 1/(C k n^2)$ and hence 
\[ \norm{r'_1}^2 - \norm{r_{1}}^2 \le  ((1 + \frac{1}{C k n^2})^2 - 1) \cdot k (eR)^2 \le \frac{3 e^2 R^2}{Cn^2}.\]
We next bound the change in the norm of the second subvector by Lemma~\ref{technical lemma}. Let $D = \rmin(B_2) - \rmax(B_3)$ be the minimum distance between a type 2 and a type 3 surplus, and let $\Delta$ be the total decrease of the surpluses of the type 2 buyers. Then $D \ge R/(n+1)$ and $\Delta \ge R/(4(n+1))$. Thus the squared 2-norm of the the vector $r_{23}$ decreases by at least $R^2/(4(n+1)^2)$. Thus
\begin{align*} 
\norm{r'}^2 - \norm{r}^2 &=  (\norm{r'_1}^2 - \norm{r_{1}}^2) + (\norm{r'_{23}}^2 - \norm{r_{23}}^2) \\
&+ (\norm{r'_4}^2 - \norm{r_4}^2)\\
&\le \frac{3 e^2 R^2}{Cn^2} - \frac{R^2}{4(n+1)^2}\\
&\le - \Omega(\frac{1}{n^3}) \norm{r}^2\qquad\text{since $\norm{r}^2 \le n e^2 R^2$}
\end{align*}
and hence $\norm{r'} \le (1 - \Omega(\frac{1}{n^3}))\norm{r}$. \medskip

\paragraph{Case Two, $\rmin'(S) - \rmax'(B_3 \cup B_{4b}) \ge \rmin(S)/(2(n+1))$:} This can only be the case if $x = \xeq$. We first observe that 
\begin{align*}
 \norm{r''}^2 - \norm{r}^2 &=  \norm{r''}^2 - \norm{r'}^2 + \norm{r'}^2 - \norm{r}^2 \\
&\le 
 \norm{r''}^2 - \norm{r'}^2 + \frac{3 e^2 R^2}{C n^2},\end{align*}
where the inequality follows from the proof of first case. We need to relate the squared 2-norm of the surplus vectors with respect to the flows $f''$ and $f'$. To this end, we construct a feasible flow $\hf$ from $f'$ and argue about the 2-norm of its surplus vector. Since $f''$ is a balanced flow, its 2-norm is at most the 2-norm of $\hf$. The construction of $\hf$ is described in Figure~\ref{construction of hf}.

\begin{figure*}[t]
\centerline{\framebox{\parbox{\textwidth}{
\begin{tabbing}
555\=555\=555\=555\=\kill
\>Let $\Enew$ be the set of new equality edges from $S$ to $C \setminus \Gamma(S)$;\\[0.3em]
\> Let $\hf = f'$;\\[0.3em]
\> If $\rmin(S,\hf) \le \rmax(B_3 \cup B_{4b},\hf)$, terminate the construction of $\hf$;\\[0.3em]
\> For every edge $(b_i,c_j) \in \Enew$ do\\[0.3em]
\>\>\parbox{0.85\textwidth}{As long as $c_j$ is not completely sold, augment along $(b_i,c_j)$ gradually
until $\rmin(S,\hf) = \rmax(B_3 \cup B_{4b},\hf)$ or $c_j$ is completely sold;} \\[0.3em]
\>\> In the former case, terminate the construction;\\[0.3em]
\>\>For all flow-carrying edges $(b_k,c_j)$ with $b_k \in \barS$:\\[0.3em]
\>\>\>\parbox{0.8\textwidth}{Augment along $(b_i,c_j,b_k)$ gradually until 
$r_{\min}(S,\hf) = r_{\max}(B_3 \cup B_{4b},\hf)$ or the flow from $b_k$ to $c_j$ becomes zero;} \\[0.3em]
\>\>\>In the former case, terminate the construction;
\end{tabbing} \vspace{-1em}
}}} \caption{New equality edges: the construction of $\hf$ from $f'$.}\label{construction of hf}
\end{figure*}

We initialize $\hf$ to $f'$, the flow after price and flow adjustment. Let (*) be a shorthand for $\rmin(S,\hf) \le \rmax(B_3 \cup B_{4b},\hf)$'. Let $\Enew$ be the new equality edges connecting buyers in $S$ with goods in $C \setminus \Gamma(S)$. There may also be new equality edges\footnote{The buyers incident to such equality edges had only equality edges connecting them to $\Gamma(S)$ before the price adjustment. These equality edges carried no flow.} from $\barS$ to $C \setminus \Gamma(S)$; they are of no concern to us. We iterate over the edges $(b_i,c_j)$ in $\Enew$. For any such edge, we first increase the flow across the edge until either $c_j$ is completely sold or (*) holds. In the latter case, the construction of $\hf$ is complete. In the former case, we
consider the flow-carrying edges $(b_k,c_j)$ with $b_k \in \barS$. Since type 4a nodes have no outgoing flow, $b_k$ has type 3 or type 4b and hence $r(b_k) \le r(b_i)/(1 + 1/n)$. We increase the flow along the edge $(b_i,c_j)$, decrease the flow across the edge $(b_k,c_j)$ by the same amount until either the latter flow is zero or (*) holds. In the latter case, the construction of $\hf$ is complete. 

\begin{lemma} When the algorithm in Figure~\ref{construction of hf} terminates, either condition (*) holds for $\hf$ or there is at least one buyer in $S$ whose surplus with respect to $\hf$ is one less than its surplus with respect to $f'$.\end{lemma}
\begin{proof} Consider any new equality edge $(b_i,c_j)$. If condition (*) does not hold for $\hf$, then all inflow to $c_j$ in $\hf$ comes from $b_i$ and $c_j$ is completely sold. In $f$ there was no flow from $b_i$ to $c_j$ and hence the surplus of $b_i$ with respect to $\hf$ is one less than its surplus with respect to $f'$. \end{proof}

In the construction of $\hf$ from $f'$, the surpluses of type 1 and 2 players do not increase, and the surpluses of type 3 and 4 players do not decrease. Also, the total decrease of the former surpluses is at least the total increase of the latter surpluses. Thus Lemma~\ref{technical lemma} applies. We have $D \ge R/(2(n+1))$ and $\Delta \ge D/2$ if condition (*) holds for $\hf$ and $\Delta \ge 1$ otherwise. Thus
\begin{align*}
\norm{\hr}^2 - \norm{r'}^2 & \le - \frac{R}{2(n+1)} \min(\frac{R}{4(n+1)},1) \\
&= -\frac{R^2}{8(n+1)^2},\end{align*}
since $R \le n$ by Lemma~\ref{surplus bounds}.

Combining the bounds yields
\begin{align*}
\norm{r''}^2 - \norm{r}^2 &\le 
 \norm{\hr}^2 - \norm{r'}^2 + \frac{3 e^2 R^2}{C n^2} \\
&\le -\frac{R^2}{8(n+1)^2} + 
\frac{3 e^2 R^2}{C n^2} \\
&= - \Omega(\frac{1}{n^3}) \norm{r}^2.
\end{align*}
\end{proof}

We can now prove a bound on the number of balancing  phases.

\newcommand{\onenorm}[1]{\norm{#1}_1}

\begin{lemma}\label{lem:number-of-good} The number of balancing phases is $O(n^4 \log (nU))$.\end{lemma}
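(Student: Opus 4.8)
The plan is to combine the three potential-function estimates already established. Recall $\norm{r(B)}$ is the 2-norm of the buyers' surplus vector. Initially all prices are one, so the initial surplus of every buyer is at most $n$ and $\norm{r(B)} \le n^{3/2}$ at the start. The loop terminates once $\abs{r(B)} < \epsilon = 1/(8n^{4n}U^{3n})$, and since $\norm{r(B)} \le \abs{r(B)}$ for nonnegative vectors, it suffices to bound the number of phases until $\norm{r(B)}$ drops below $\epsilon$. Actually, to be careful, I would use that $\abs{r(B)} \le \sqrt{n}\,\norm{r(B)}$ is \emph{not} what I want; rather $\norm{r(B)} \le \abs{r(B)}$, so the loop can only be running while $\norm{r(B)} \ge \abs{r(B)}/\sqrt n$ is large — the cleanest route is: as long as the loop runs, $\abs{r(B)} \ge \epsilon$, hence $\norm{r(B)} \ge \epsilon/\sqrt n$ (using $\abs{v} \le \sqrt n\,\norm{v}$). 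So it suffices to bound the number of balancing phases that can occur before $\norm{r(B)}$ falls below $\epsilon/\sqrt n$.

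The key steps, in order: First, let $T$ be the number of balancing phases. Over the whole run, the multiplicative change in $\norm{r(B)}$ is the product of the per-phase factors. By Lemma~\ref{norm increase with}, the $\xmax$-phases with type 3 buyers contribute a total multiplicative increase of at most $n^{O(n)}$. By Lemma~\ref{norm increase without}, the $\xmax$-phases without type 3 buyers contribute a total multiplicative increase of at most $(nU)^{O(n)}$. By Lemma~\ref{decrease in balancing}, each of the $T$ balancing phases multiplies $\norm{r(B)}$ by a factor of at most $1 - \Omega(1/n^3)$. Hence after all phases,
\begin{equation*}
\norm{r(B)}_{\text{final}} \le n^{3/2} \cdot n^{O(n)} \cdot (nU)^{O(n)} \cdot \left(1 - \Omega(1/n^3)\right)^{T}.
\end{equation*}
Second, since the loop is still running just before the final balancing phase, at that point $\norm{r(B)} \ge \epsilon/\sqrt n = 1/(8n^{4n + 1/2}U^{3n})$. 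Combining, $1/(8 n^{4n+1/2} U^{3n}) \le (nU)^{O(n)} \cdot (1 - \Omega(1/n^3))^{T}$, i.e.
\begin{equation*}
\left(1 - \Omega(1/n^3)\right)^{T} \ge \frac{1}{(nU)^{O(n)}}.
\end{equation*}
Third, take logarithms: $T \cdot \Omega(1/n^3) \le \ln\big((nU)^{O(n)}\big) = O(n \log(nU))$, using $\ln(1/(1-x)) \ge x$. Therefore $T = O(n^4 \log(nU))$, which is the claimed bound.

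One subtlety I would address carefully is the interleaving: the bound $\left(1-\Omega(1/n^3)\right)^T$ for the balancing phases and the bounds $n^{O(n)}$, $(nU)^{O(n)}$ for the $\xmax$-phases are each valid regardless of the order in which the phases occur, because they are products of per-phase multiplicative factors and multiplication is commutative; so the total multiplicative change really is the product of these three aggregate quantities, and the argument does not need to track the order. A second point to state explicitly is that a balancing phase is exactly a phase with $x < \xmax$, so every phase is either an $\xmax$-phase (with or without type 3 buyers) or a balancing phase, and there is no fourth category — this is needed to conclude that the three aggregate factors account for the entire evolution of the norm. The main obstacle is really just bookkeeping the constants in $\epsilon$ against the $(nU)^{O(n)}$ growth to be sure the net exponent of $n$ in $T$ comes out as $4$ and not larger; since $\log \epsilon^{-1} = O(n \log(nU))$ and the aggregate $\xmax$ increase also has log equal to $O(n\log(nU))$, the dominant term on the right-hand side after taking logs is $O(n\log(nU))$, and dividing by $\Omega(1/n^3)$ gives $O(n^4 \log(nU))$ as required.
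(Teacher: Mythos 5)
Your argument is correct and is essentially the same as the paper's: bound the net multiplicative change in $\norm{r(B)}$ by the initial 2-norm, the total increase over $\xmax$-phases from Lemmas~\ref{norm increase with} and~\ref{norm increase without}, and the per-phase decrease factor from Lemma~\ref{decrease in balancing}, then take logarithms. (A small inaccuracy: initially every price is one, so each buyer's surplus is at most one and the initial 2-norm is at most $\sqrt{n}$, not $n^{3/2}$; this is harmless for the asymptotic bound.)
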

\begin{proof} Let $T$ be the number of balancing phases. The 2-norm of the initial balanced flow is no larger than $\sqrt{n}$. In the $\xmax$-phases, the 2-norm of the surplus vector is multiplied by a factor of at most $(nU)^{O(n)}$. We exit from the loop once $\abs{r}<\epsilon$. Thus $\abs{r} \ge \epsilon$ and hence  $\norm{r} \ge \epsilon/\sqrt{n}$ after $T -1$ 
balancing phases. Each balancing phase reduced the 2-norm by a factor of $1 - \Omega(1/n^3)$. Therefore
\[   \sqrt{n} (nU)^{O(n)} ( 1 - \Omega(1/n^3))^{T - 1} \ge \epsilon/\sqrt{n}.\]
The bound follows. \end{proof}


\section{Balanced Flows for Nondegenerate Instances}\label{sec: Balanced Flows}

We show that for nondegenerate instances, balanced flows can be computed with one maxflow computation and $n$ maxflow computations in forest networks. It is known~\cite{DPSV08} that balanced flows in equality networks can be computed with $n$ maxflow computations. All computations are in graphs with $O(n)$ edges. 

\newcommand{\capa}{\mathit{cap}} \newcommand{\val}{\mathit{val}}

\begin{lemma} If $E_p$ is a forest, a maximum flow in $N_p$ can be computed with $O(n)$ arithmetic operations. 
\end{lemma}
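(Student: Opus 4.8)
The plan is to exploit the forest structure of $E_p$ directly, treating the maximum flow computation as a sequence of local decisions at the leaves of the forest. The key observation is that in $N_p$, every buyer $b_i$ has a source edge $(s,b_i)$ of capacity $p_i$ and every good $c_j$ has a sink edge $(c_j,t)$ of capacity $p_j$; the only structure in the middle is the forest $E_p$ of equality edges. Since $E_p$ is a forest on the $2n$ vertices $B \cup C$, it has at most $2n-1$ edges, so the whole network has $O(n)$ edges and $O(n)$ vertices. I would first root each tree of the forest arbitrarily and process vertices in a bottom-up (post-order) traversal.

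The main step is the leaf-elimination argument. Consider a leaf of the forest. If it is a good $c_j$ whose only equality edge goes to some buyer $b_i$, then in any maximum flow the flow on $(b_i,c_j)$ should be exactly $\min$ of what is "available" from $b_i$ and the sink capacity $p_j$ of $c_j$; concretely, one can saturate the edge $(c_j,t)$ up to $p_j$ if enough budget flows in, so we may as well push $\min(p_j, \text{remaining budget at }b_i)$ along it and then delete $c_j$, reducing $b_i$'s remaining source capacity accordingly. Symmetrically, if the leaf is a buyer $b_i$ with single equality edge to $c_j$, we push $\min(p_i, \text{remaining sink capacity at }c_j)$ and delete $b_i$. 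Each such step is $O(1)$ arithmetic operations (one min, a couple of subtractions), and it removes one vertex and one edge from the forest while preserving the value of the maximum flow in the reduced network plus the committed flow. After $O(n)$ such steps the forest is empty and we have constructed a flow; a short argument shows it is a maximum flow — essentially because at each leaf elimination we never commit more flow than is forced, and the min-cut can be read off as the union of saturated source/sink edges plus, in each subtree where demand and supply are mismatched, the single "bottleneck" edge identified during elimination. Summing $O(1)$ per step over $O(n)$ vertices gives the $O(n)$ bound.

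The subtle point — and the step I expect to be the main obstacle — is making the leaf-elimination rule genuinely greedy-correct, i.e. verifying that committing $\min(\cdot,\cdot)$ flow at a leaf never destroys optimality. The danger is that pushing flow greedily into a leaf good $c_j$ via $b_i$ might "waste" budget of $b_i$ that could have been better routed elsewhere. This is resolved by observing that if we later discover $b_i$ wanted that budget back, we could have instead not sent it and the leaf $c_j$ simply has some residual sink capacity — but $c_j$ is a leaf, so nothing else can fill it, hence its sink edge being unsaturated is harmless for the cut, and conversely if some flow must pass through $(b_i,c_j)$ it can only go toward $t$ via $(c_j,t)$. A clean way to phrase this rigorously is via the max-flow/min-cut theorem restricted to trees, or by an exchange argument showing that from any maximum flow one can reroute so that the leaf edge carries exactly the committed amount. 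Once that lemma is in place, the induction on the number of forest edges is routine and the arithmetic-operation count is immediate.

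\begin{proof}
Since $E_p$ is a forest on vertex set $B \cup C$, it has at most $2n-1$ edges, so $N_p$ has $O(n)$ vertices and $O(n)$ edges. We compute a maximum flow by repeatedly eliminating a leaf of the forest. Root each tree of the forest arbitrarily. Maintain for each buyer $b_i$ a residual source capacity $\tilde{p}(b_i)$, initially $p_i$, and for each good $c_j$ a residual sink capacity $\tilde{p}(c_j)$, initially $p_j$. Process the forest vertices in post-order.

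Suppose the current leaf is a good $c_j$ whose unique equality edge is $(b_i,c_j)$. Set $f_{ij} = \min(\tilde{p}(b_i), \tilde{p}(c_j))$, decrease $\tilde{p}(b_i)$ by $f_{ij}$, and delete $c_j$ and the edge $(b_i,c_j)$ from the forest. Symmetrically, if the current leaf is a buyer $b_i$ with unique equality edge $(b_i,c_j)$, set $f_{ij} = \min(\tilde{p}(b_i),\tilde{p}(c_j))$, decrease $\tilde{p}(c_j)$ by $f_{ij}$, and delete $b_i$ and the edge from the forest. (An isolated vertex, with no equality edge, is simply deleted.) Each step performs $O(1)$ arithmetic operations and removes one vertex and one edge, so the whole procedure halts after $O(n)$ steps using $O(n)$ arithmetic operations. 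On all edges $(b_i,c_j) \in E_p$ not assigned above, set $f_{ij} = 0$; the flow values on source and sink edges are then determined by conservation.

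It remains to argue that $f$ is a maximum flow in $N_p$. We show by induction on the number of forest edges that at each elimination step the committed flow can be extended to a maximum flow of $N_p$ in the remaining network (with the residual capacities). Consider the first step, eliminating a leaf good $c_j$ via $(b_i,c_j)$ with $f_{ij} = \min(p_i,p_j)$. Let $f^\ast$ be any maximum flow of $N_p$. If $f^\ast_{ij} < f_{ij}$, then either $\tilde{p}(c_j) = p_j$ is not exhausted by $f^\ast$ on $(c_j,t)$, or $\tilde{p}(b_i) = p_i$ is not exhausted by $f^\ast$ on $(s,b_i)$; in the first case the path $s \to b_i \to c_j \to t$ has positive residual capacity under $f^\ast$, contradicting maximality, and in the second case $b_i$ has unused source capacity and its only equality edge other than $(b_i,c_j)$ leads elsewhere, so if raising $f^\ast_{ij}$ is infeasible it is only because $(c_j,t)$ is already saturated, i.e. $f^\ast_{ij} = p_j = f_{ij}$, again a contradiction. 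Hence $f^\ast_{ij} \ge f_{ij}$, and since $f_{ij} = \min(p_i,p_j)$ is the largest value consistent with the capacities of $(s,b_i)$ and $(c_j,t)$, we get $f^\ast_{ij} = f_{ij}$. Restricting $f^\ast$ to the network with $c_j$ removed and $\tilde{p}(b_i)$ decreased by $f_{ij}$ yields a maximum flow of that network, and the case of a leaf buyer is symmetric. Iterating, $f$ restricted to the empty forest is trivially maximum there, so $f$ is a maximum flow of $N_p$.
\end{proof}
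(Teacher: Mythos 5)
Your algorithm is identical to the paper's: repeatedly pick an equality edge with a degree-one endpoint, route $\min$ of the residual source/sink capacities along the corresponding $s$--$b_i$--$c_j$--$t$ path, decrement the residual capacity at the surviving endpoint, and delete the leaf. The $O(n)$ count is fine. The difference is in the correctness argument: the paper exhibits, by induction, a cut whose capacity equals the value of the constructed flow (using max-flow/min-cut directly), whereas you try an exchange argument showing that the greedy assignment at the leaf edge is consistent with a maximum flow.

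Your exchange argument has a genuine gap. You claim that for \emph{any} maximum flow $f^\ast$ one has $f^\ast_{ij}\ge f_{ij}$ at the current leaf edge, and you try to derive a contradiction from $f^\ast_{ij}<f_{ij}$ by arguing that the path $s\to b_i\to c_j\to t$ has positive residual capacity. That inference is false when $(s,b_i)$ is already saturated. Concretely, take a single buyer $b_i$ with $p_i=10$ and two leaf goods $c_j,c_k$ with $p_j=5$, $p_k=20$, both adjacent only to $b_i$. Your algorithm commits $f_{ij}=\min(10,5)=5$. But $f^\ast$ with $f^\ast_{ij}=0$ and $f^\ast_{ik}=10$ is a maximum flow (value $10$, equal to the capacity of the cut $\{s\}$), so $f^\ast_{ij}\ge f_{ij}$ fails. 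The ``first case'' of your either/or always holds when $f^\ast_{ij}<f_{ij}$ (the leaf's sink edge is not saturated), but that alone does not give an augmenting path, because $(s,b_i)$ may have zero residual capacity.

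The statement you actually need is the weaker one: \emph{some} maximum flow agrees with the greedy value on the leaf edge, so that restriction to the reduced network preserves optimality. To salvage the exchange route you should reroute rather than augment: if $f^\ast_{ij}<f_{ij}$, then $(c_j,t)$ has slack, and since $f^\ast_{ij}<\min(p_i,p_j)$ while $b_i$'s outflow may equal $p_i$, shift flow from some other edge $(b_i,c_\ell)$ onto $(b_i,c_j)$; this preserves the flow value, strictly increases $f^\ast_{ij}$, and can be iterated until $f^\ast_{ij}=f_{ij}$. Only then is the restriction step in your induction justified. Alternatively, prove optimality the way the paper does, by constructing (inductively, moving the degree-one endpoint across the cut when necessary) a cut whose capacity equals the flow value; that avoids any quantification over maximum flows entirely.
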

\begin{proof} For an edge $e$, we use $\capa(e)$ to denote its capacity. The following algorithm constructs a maximum flow: As long as $E_p$ is nonempty, let $(b_i,c_k)$ be an equality edge such that either $b_i$ or $c_k$ has degree one in $E_p$. We route $q = \min(\capa((s,b_i),\capa(c_k,t))$ along the path $(s,b_i,c_k,t)$, reduce the capacities of the edges $(s,b_i)$ and $(c_j,t)$ by $q$, and remove the edge $(b_i,c_k)$ from the network. 
Let $(N',\capa')$ be the resulting network.

In order to show correctness, we construct a cut $D$ such that the value of the flow constructed by the algorithm is equal to the capacity of $D$. We do so by induction on the number of edges in $E_p$. If $E_p$ is empty, we take $D = s \cup B$. Assume inductively, that $f'$ is a flow in $(N',\capa')$ and $D'$ is a cut in this network with $\val(f') = \capa'(D')$. Let $f$ be obtained from $f'$ by routing $q$ units along the path $(s,b_i,c_k,t)$. Then $\val(f) = \val(f') + q = \capa'(D') + q$. If $b_i$ and $c_k$ belong to the same side of the cut (either both belong or neither belongs to $D'$), then $\capa(D') = \capa'(D') + q$, and we take $D = D'$. If $b_i$ and $c_k$ belong to different sides of the cut,
we obtain $D$ from $D'$ by moving the vertex of degree one to the other side. Then exactly one of $(s,b_i)$ and $(c_k,t)$ is cut by $D$, $(b_i,c_k)$ is not cut, and no other equality edge is cut, since we move the vertex of degree one to the other side. Thus $\capa(D) = \capa'(D') +q$.
\end{proof} 


\begin{lemma} If $E_p$ is a forest, a balanced flow in $N_p$ can be computed with $O(n^2)$ arithmetic operations. 
\end{lemma}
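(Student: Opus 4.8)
The plan is to use the standard characterization of balanced flows (Lemma~\ref{lem:balanced flow}) via iterated max-flow computations, but to exploit the forest structure to keep each of the $n$ max-flow computations cheap, so that the total cost is $O(n^2)$ rather than the $O(n \cdot M(n))$ one gets in the general case. Recall the algorithm for a balanced flow: compute a max-flow, identify the ``tight'' set of buyers $T$ with maximum surplus (equivalently, the buyers on the source side of every min-cut that separates off the most surplus), fix their surpluses by deleting them together with their incident goods from the network, and recurse on the remaining network with the remaining capacities. Each recursion level performs one max-flow computation and removes at least one buyer, so there are at most $n$ levels.

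The key observation is that each residual network is still a forest: deleting vertices and edges from a forest leaves a forest, and the new $s$-- and $t$--edge capacities are just the unsaturated parts of the old ones. Hence by the previous lemma each of the $\le n$ max-flow computations costs $O(n)$ arithmetic operations. The remaining work per level is (i) identifying the set of maximum-surplus buyers and the goods to delete, and (ii) updating capacities. After a max-flow is computed, the surplus $r(b_i)$ of each buyer is available in $O(1)$ from the flow values, so the maximum surplus $\rmax$ and the candidate set $T_0 = \set{b_i}{r(b_i) = \rmax}$ are found in $O(n)$. One must then delete from $T_0$ those buyers that can ``push'' their surplus onto a good that is still reachable from a lower-surplus buyer — equivalently, one wants the maximal set $T$ such that in the current network $\Gamma(T)$ receives no flow from $\bar T$ and the goods in $\Gamma(T)$ are completely sold; this is exactly a reachability/alternating-path computation in the residual graph, which on a forest takes $O(n)$. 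Summing $O(n)$ per level over $\le n$ levels gives $O(n^2)$.

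The main obstacle is the bookkeeping in step (i): making precise which maximum-surplus buyers get frozen at this level and arguing that the resulting flow, after all levels, is genuinely balanced. For a general network this is the content of the Devanur et al.\ procedure (and of Lemma~\ref{lem:balanced flow}); here we may simply invoke that procedure as a black box and only need to check two things: that its ``fixing a layer'' step removes at least one buyer (so there are $\le n$ layers) and that each intermediate network remains a forest with $O(n)$ edges (so each max-flow is $O(n)$). Both are immediate — the first because the maximum-surplus set is nonempty whenever there is positive total surplus, the second because induced subgraphs of forests are forests — so the proof reduces to assembling these pieces: $n$ iterations, each dominated by an $O(n)$ forest max-flow plus $O(n)$ surplus- and reachability-bookkeeping, for a total of $O(n^2)$ arithmetic operations.
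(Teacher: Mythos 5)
Your proposal is correct and takes essentially the same approach as the paper: both invoke the standard $n$-iteration balanced-flow procedure and observe that each residual network remains a forest, so each max-flow costs $O(n)$ by the preceding lemma. The paper states this in three sentences; you supply the same argument with more bookkeeping detail.
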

\begin{proof} A balanced flow can be computed with at most $n$ maxflow computations. The maxflow computations are on a network with reduced capacities and some edges removed. Thus, the set of edges connecting buyers and goods always form a forest. 
\end{proof}

\begin{lemma}\label{balanced flow acyclic}  Let $C_0 \subseteq C$ be a set of goods. If $E_p$ is a forest and there is a maximum flow in which all goods in $C_0$ are completely sold, then a balanced flow in which all goods in $C_0$ are completely sold can be computed with $O(n^2)$ arithmetic operations. 
\end{lemma}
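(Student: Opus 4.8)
The plan is to adapt the standard balanced-flow algorithm (the one giving the $n$-maxflow bound of Lemma~\ref{lem:balanced flow}) so that it never disturbs the goods in $C_0$, and to argue that on a forest this still costs only $O(n^2)$ arithmetic operations. Concretely, I would first run the forest maxflow algorithm of the preceding lemma, but restricted so that the edges into $t$ from goods in $C_0$ are treated as having capacity exactly their current price and must be saturated; since the hypothesis guarantees a maximum flow saturating all of $C_0$, such a flow exists, and by the cut argument of the forest-maxflow lemma it is found in $O(n)$ operations. Having fixed this maximum flow, the goods in $C_0$ have surplus zero and, because we never want to decrease their inflow, we may contract each saturated edge $(c,t)$ with $c \in C_0$ (or equivalently delete $c$ and the edges incident to it after committing their flow). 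The key point is that deleting vertices from a forest leaves a forest, so all subsequent work is still on a forest network with $O(n)$ edges.

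Next I would run the iterative balancing procedure of Devanur et al.~\cite{DPSV08} on the residual instance: repeatedly identify the set of buyers of maximum surplus, form the subnetwork of goods reachable from them along equality/residual edges, compute a maximum flow there that minimizes the surplus of the top buyers, freeze the goods that become tight, and recurse on the remaining buyers with strictly smaller surplus. This terminates in at most $n$ rounds, and each round is a maximum-flow computation on a network whose buyer--good edges form a forest (a subforest of $E_p$ with some capacities reduced), hence $O(n)$ operations per round by the forest-maxflow lemma, giving $O(n^2)$ in total. Because we removed the $C_0$-goods at full inflow before starting, no step of the balancing procedure ever reduces their inflow, so in the final flow every good of $C_0$ is still completely sold; and the output is a genuine balanced flow for the original network, since freezing $C_0$ only forces the already-maximum total flow value and does not change which surplus vectors are achievable (a maximum flow saturating $C_0$ exists, so the balanced flow — which by Lemma~\ref{lem:balanced flow} also saturates every good that some maximum flow saturates — automatically saturates $C_0$ anyway; fixing it in advance merely makes this explicit without increasing the minimum of $\norm{r(B)}$).

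The main obstacle, and the step that needs the most care, is verifying that pre-committing the flow into $C_0$ does not cut off the true balanced flow: one must check that the minimum achievable value of $\norm{r(B)}$ over maximum flows saturating $C_0$ equals the minimum over all maximum flows. This follows from the last sentence of Lemma~\ref{lem:balanced flow} — a balanced flow already completely sells every good that is completely sold by some maximum flow, in particular all of $C_0$ — so restricting attention to $C_0$-saturating maximum flows loses nothing. The remaining details (that residual networks of a forest restricted to a buyer set still have forest buyer--good structure, and that each round's special maxflow is itself a forest maxflow) are routine, exactly as in the proof of the preceding two lemmas.
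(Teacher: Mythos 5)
Your construction has a genuine gap, and it is exactly at the step you yourself flag as ``the main obstacle.'' After step one you commit to a particular maximum flow on the edges into $C_0$ and then delete those goods. But committing a specific flow on the $C_0$ edges is a much stronger restriction than merely insisting that $C_0$ be saturated: it changes the set of surplus vectors achievable by the subsequent balancing step. The argument you give via the last sentence of Lemma~\ref{lem:balanced flow} only shows that \emph{some} balanced flow saturates $C_0$; it does not show that some balanced flow agrees with your arbitrarily chosen max flow on all $C_0$-incident edges. Concretely, take $b_1,b_2$ with $p_1=5$, $p_2=3$, a single good $c_1\in C_0$ with price $5$, and equality edges $(b_1,c_1),(b_2,c_1)$. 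The balanced flow sends $3.5$ from $b_1$ and $1.5$ from $b_2$, giving surpluses $(1.5,1.5)$. Your forest maxflow might route $5$ from $b_1$ and $0$ from $b_2$; after deleting $c_1$ and its edges, no rebalancing is possible and you are stuck with surpluses $(0,3)$, which is not balanced.

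The paper's proof performs the two steps in the opposite order, which is what makes the argument go through. First compute a balanced flow $f$ ignoring $C_0$. Partition the buyers by surplus value and split off the lowest class $B_{h_0+1}$ (the buyers with surplus zero, if any); let $C'$ be the goods fed by the higher classes. Then $s\cup B'\cup C'$ is a minimum cut, so in \emph{any} maximum flow the subnetwork on $s\cup B_{h_0+1}\cup(C\setminus C')\cup t$ receives exactly the budget of $B_{h_0+1}$ and every buyer there has surplus zero; consequently any rerouting inside this lower subnetwork leaves the surplus vector unchanged. The paper therefore computes one more (forest) max flow in this lower subnetwork with a sink-side gadget that forces the goods in $C_0\cap(C\setminus C')$ to be saturated, and then stitches that onto $f$ restricted to the upper part. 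Balancedness is preserved for free because only the zero-surplus part is touched. The lesson is that freedom to redistribute lives entirely in the minimum-surplus class, and you must locate that class by balancing first; committing an allocation to $C_0$ \emph{before} balancing destroys exactly the degrees of freedom the balanced flow needs.
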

\begin{proof} Let $f$ be a balanced flow in $N_p$. Let $B_1$, $B_2$, \ldots, $B_h$ be the partition of the buyers into maximal classes with equal surplus. This partition is unique. Let $r_i$ be the surplus of the buyers in $B_i$ with $r_1 > r_2 > \ldots > r_h \ge 0$. If $r_h = 0$, let $h_0 = h-1$. Otherwise, let $h_0 = h$. Let $C'$ be the set of goods to which money from a buyer in $B' = \cup_{i \le h_0} B_i$ flows. The goods in $C'$ are completely sold, there is no money flowing from $B_{h_0 + 1}$ ($B_{h +1} = \emptyset$) to $C'$ and there is no equality edge from $B'$ to $C \setminus C'$. Thus $s \cup B' \cup C'$ is a minimum cut. 

Let $f'$ be a maximum flow in which all goods in $C_0$ are completely sold. Then $f'$ also sends no flow from $B_{h_0 + 1}$ to $C'$ and $f'$ saturates all edges in the minimum cut. The flow $f''$ consisting of $f$ on the edges in the subnetwork spanned by $s \cup B' \cup C' \cup t$ and of $f'$ on the edges in $s \cup B_{h_0 + 1} \cup C \setminus C' \cup t$. It is a balanced flow in which all goods in $C_0$ are completely sold. 

We can construct $f''$ as follows. We first construct $f$. This requires $O(n^2)$ arithmetic operations by the preceding Lemma and gives us $B_{h_0+1}$ and $C'' \define C \setminus C'$.  Let $F$ be the value of a maximum flow in $s \cup B_{h_0 + 1} \cup C'' \cup t$ and let $P$ be the total price of the goods in 
$C'' \cap C_0$. We introduce a new vertex $t'$, replace $t$ by $t'$ in all edges from $C''  \setminus C_0$ to $t$, and introduce an edge $(t',t)$ with capacity $F - P$, and compute a maximum flow in the network $s \cup B_h \cup C'' \cup t' \cup t$. In this flow the goods in $C'' \cap C_0$ are completely sold. \end{proof}

\begin{theorem}\label{theorem arithmetic} For nondegenerate instances, the algorithm computes equilibrium prices with $O(n^6 \log (nU))$ arithmetic operations on rationals.
\end{theorem}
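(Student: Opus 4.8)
The plan is to assemble the running time bound from the pieces already proven in Sections~\ref{sec: Analysis}--\ref{sec: Balanced Flows}: a bound on the number of phases and a bound on the work done per phase, the latter relying on the nondegeneracy assumption to keep the equality network a forest.

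First I would count the phases. By Lemma~\ref{number-of-xmax-without} and Lemma~\ref{number-of-xmax-with}, the number of $\xmax$-phases is $O(n^4 \log(nU)) + O(n^4 \ln n) = O(n^4 \log(nU))$, and by Lemma~\ref{lem:number-of-good} the number of balancing phases is $O(n^4 \log(nU))$. Hence the total number of phases is $O(n^4 \log(nU))$. (One should also note, for completeness, that the exit condition $\abs{r(B)} < \epsilon$ with $\epsilon = 1/(8 n^{4n} U^{3n})$ is eventually met, which is exactly what Lemma~\ref{lem:number-of-good}'s proof establishes, and that Part~B converts the final $(p,f)$ to exact equilibrium prices, as in~\cite{\DM}, at cost dominated by the loop.)

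Next I would bound the cost of a single phase. Sorting the buyers by surplus and finding $\ell$ costs $O(n^2)$; computing $\xmax$, $\xeq$, $x_{23}$, $x_{24}$, $x_2$ and hence $x$ requires scanning the $O(n^2)$ utility-graph edges and the $O(n)$ buyer pairs involved, so $O(n^2)$; updating prices and flow via (\ref{new prices}) and (\ref{new flow interior}) touches $O(n)$ prices and $O(n)$ flow-carrying edges (the balanced flow in a forest network has $O(n)$ flow-carrying edges), so $O(n)$. The remaining cost is the balanced-flow recomputation in $N_{p'}$ with the "completely sold" constraint. Here nondegeneracy enters: by Lemma~\ref{general position} the equality graph $E_{p'}$ is a forest (it contains no cycle), so Lemma~\ref{balanced flow acyclic} applies — provided there is a maximum flow in $N_{p'}$ completely selling all goods in $C_0$ (the goods completely sold under $f$), which is exactly what Lemma~\ref{lem:balanced flow}'s last sentence together with the fact that $f'$ already sells all of $\Gamma(S)$ and all previously-sold goods guarantees. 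Thus the balanced flow with the required property is computed with $O(n^2)$ arithmetic operations. So each phase costs $O(n^2)$ arithmetic operations.

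Multiplying, the loop performs $O(n^4 \log(nU)) \cdot O(n^2) = O(n^6 \log(nU))$ arithmetic operations on rationals, and Part~B adds a lower-order term. The main obstacle — and the one point that needs care rather than routine bookkeeping — is verifying that the hypothesis of Lemma~\ref{balanced flow acyclic} is genuinely available in every phase: namely that, after the price/flow update producing $f'$, there exists a maximum flow in $N_{p'}$ that completely sells every good in $C_0$. This follows because $f'$ itself already saturates $c_j \to t$ for every $c_j \in \Gamma(S)$ (their surpluses stay zero by Lemma~\ref{thm:adjusted-flow}) and for every good outside $\Gamma(S)$ that was completely sold under $f$ (its inflow edges are unchanged, since no edge from $\barS$ to $\Gamma(S)$ carried flow and flow on edges into $C \setminus \Gamma(S)$ is untouched), so $f'$ is itself a feasible — indeed maximum, by Lemma~\ref{new flow} combined with the cut argument — flow witnessing the hypothesis with $C_0$ the goods sold under $f$. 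Everything else is a direct product of the phase count and the per-phase bound.
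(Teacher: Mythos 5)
Your proposal is essentially the paper's proof, just written out in full: multiply the phase count $O(n^4\log(nU))$ (combining Lemmas~\ref{number-of-xmax-without}, \ref{number-of-xmax-with}, and \ref{lem:number-of-good}) by the $O(n^2)$ per-phase cost, where the latter rests on Lemma~\ref{general position} (nondegeneracy $\Rightarrow$ $E_{p'}$ is a forest) and Lemma~\ref{balanced flow acyclic}. One small imprecision: you assert that $f'$ is itself a \emph{maximum} flow in $N_{p'}$, citing Lemma~\ref{new flow} and an unstated ``cut argument,'' but Lemma~\ref{new flow} only gives feasibility, and when $x = \xeq$ a new equality edge can open an augmenting path from a surplus buyer in $S$ to a not-yet-sold good outside $\Gamma(S)$, so $f'$ need not be maximum. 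The conclusion you want is still correct and the fix is routine: augment $f'$ to a maximum flow $f_0$; since every $s$--$t$ augmenting path enters each intermediate good along a forward edge and leaves it along a backward edge (so its inflow is unchanged) and the path's terminal $c\to t$ edge must have residual capacity, augmentation never decreases the inflow of a good already saturating its $(c,t)$ edge, so $f_0$ still completely sells $C_0$, and Lemma~\ref{balanced flow acyclic} applies.
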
 
\begin{proof} The number of phases is $O(n^4 \log (nU))$ and each phase requires $O(n^2)$ arithmetic operations.
Part B requires $O(n^4 \log (nU))$ arithmetic operations. \end{proof}

\section{Perturbation}\label{sec: Perturbation}

A perturbation of the utilities replaces each nonzero utility $u_{ij}$ by a nearby utility $\tu_{ij}(\epsilon)$, where $\epsilon$ is a positive parameter and $\lim_{\epsilon \rightarrow 0+} \tu_{ij}(\epsilon) = u_{ij}$. We will usually write $\tu_{ij}$ instead of $\tu_{ij}(\epsilon)$. The solution to a perturbed problem may or may not give information about the solution to the original problem. An example, where it does not is the following version of the convex hull problem. We are given a set of points in the plane; the task is to compute the number of points on the boundary of the convex hull (if points lie on the interior of hull edges, this number may be more than the number of vertices of the convex hull). A perturbation may move a point from the boundary into the interior and hence the number computed for the perturbed problem may be smaller than the number for the original problem. For the computation of equilibrium prices the situation is benign and the limit of solutions to the perturbed problem is a solution to the original problem. 

\newcommand{\tA}{\tilde{A}} \newcommand{\tX}{\tilde{X}}

An equilibrium price vector $p$ gives rise to a $n \times n$ linear system $A x = X$ of equations of full rank; the price vector $p$ is the unique solution to this system. The entries of $A$ depend on the utilities $u_{ij}$ and $X$ is a unit vector. The solution $\tp$ to a perturbed problem gives rise to a linear system $\tA x = X$. The entries of $\tA$ depend on the perturbed utilities $\tu_{ij}$. We will show in Section~\ref{Continuity for Sufficiently Small Perturbations} that for sufficiently small perturbations, a system $Ax = X$ with solution $p$ can be constructed from the system $\tA x = X$ by simply reversing the perturbation, i.e., by replacing occurrences of $\tu_{ij}$ by $u_{ij}$. This implies 
$\lim_{\epsilon \rightarrow 0+} \tp(\epsilon) = p$. Once $p$ is known, the computation of the money flow is tantamount to solving a maxflow problem. 

In the Appendix, we will review the perturbation suggested in~\cite{Orlin10} and argue that the description is incomplete. In Section~\ref{ssec: Our Perturbation}, we introduce a novel perturbation. It is inspired by~\cite{Seidel:Perturbation}. In Section~\ref{ssec: Implementation of a Phase} we review how phases are implemented in~\cite{\DM}. Finally, in Section~\ref{ssec: Degeneracy Removal}, we combine the ingredients and show how to handle degenerate inputs efficiently.

\subsection{Continuity for Sufficiently Small Perturbations.}\label{Continuity for Sufficiently Small Perturbations}

\newcommand{\hE}{\hat{E}} 

Let $p$ be a set of prices. The extended equality graph $\hG_p$ has edge set $E_p \cup \set{(b_i,c_i)}{1 \le i \le n}$. 
We call a set of prices \emph{canonical} if the extended equality graph is connected and the minimum price is equal to one. 

\begin{lemma} Let $p$ be any equilibrium price vector. Then there is a canonical equilibrium price vector $\hp$ such that $E_p \subseteq E_{\hp}$. \end{lemma}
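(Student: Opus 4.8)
The plan is to start from an arbitrary equilibrium price vector $p$ and massage it into a canonical one in two independent steps: first rescale so that the minimum price is one, then increase some prices to force the extended equality graph to become connected, all the while preserving the equilibrium property and only enlarging $E_p$.

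\emph{Step 1: scaling.} If $p$ is an equilibrium, then so is $\lambda p$ for any $\lambda > 0$: all three equilibrium conditions ($\sum_i x_{ij} = 1$, $p_i = \sum_j x_{ij} p_j$, and the bang-per-buck condition, with $\alpha_i$ replaced by $\alpha_i/\lambda$) are scale-invariant, and the equality edge set is unchanged, $E_{\lambda p} = E_p$. So I would replace $p$ by $p/\min_i p_i$, which makes the minimum price one without disturbing anything; in particular $E_p$ is unchanged, so $E_p \subseteq E_{\lambda p}$ trivially.

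\emph{Step 2: achieving connectivity of $\hG_p$.} The extended equality graph $\hG_p$ always contains the self-loops-in-the-bipartite-sense edges $(b_i,c_i)$, so each buyer $b_i$ is connected to its own good $c_i$; thus the connected components of $\hG_p$ are unions of ``agent pairs'' $\{b_i,c_i\}$, and I can speak of a partition of the agent index set into components. Suppose $\hG_p$ is disconnected, and let $P$ be the index set of the component containing an agent owning a minimum-price good (or any fixed component). By the nonstandard assumption made in the preliminaries, since $P$ is a proper subset of agents there exist $i \in P$ and $j \notin P$ with $u_{ij} > 0$; but $(b_i,c_j)$ is not an equality edge (else $i$ and $j$ would be in the same component), so $u_{ij}/p_j < \alpha_i$, i.e., $p_j > u_{ij}/\alpha_i$. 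Now I lower the prices of all goods outside $P$ continuously by a common factor — equivalently, raise the prices inside $P$ — until some new equality edge crossing between $P$ and its complement appears; this must happen before any price in the complement reaches zero, by the inequality above applied to the buyer $b_i$ (its bang-per-buck on good $c_j$ rises as $p_j$ falls and eventually meets $\alpha_i$). During this process: (a) no existing equality edge is destroyed, because edges internal to $P$ see both endpoints scaled equally and edges internal to the complement likewise, while there are no equality edges between $P$ and its complement to begin with; (b) the flow $x_{ij}$ restricted to within each component still satisfies market clearing and budget-balance, since money only ever flowed within components (no equality edges cross, hence $x_{ij} = 0$ whenever $i,j$ lie in different components), and scaling all prices within a component by a common factor preserves both $\sum_i x_{ij}=1$ and $p_i = \sum_j x_{ij}p_j$ inside that component. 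Hence the rescaled $p'$ is again an equilibrium with $E_p \subseteq E_{p'}$ and strictly fewer components in $\hG_{p'}$. Repeat at most $n-1$ times until $\hG$ is connected, then reapply Step 1's scaling to restore minimum price one (this last scaling does not change the equality edge set, so connectivity is preserved).

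\emph{Main obstacle.} The delicate point is Step 2(b): I must verify that when components are merged by rescaling, the \emph{same} allocation $x$ (unchanged) continues to witness an equilibrium. This rests on the observation that in any equilibrium, money flows only along equality edges, so no flow crosses between components of $\hG_p$; therefore rescaling prices component-by-component leaves every buyer's budget equation and every good's market-clearing equation intact, and the bang-per-buck condition is preserved because within a component all prices move together (so ratios $u_{ij}/p_j$ for equality edges inside the component are unchanged, and they remain the maximum since no other good became more attractive — we stop the process exactly when the first new edge would appear). I would also double check the direction of the rescaling: one wants the prices \emph{outside} the chosen component to go down relative to those inside (equivalently prices inside go up), so that goods outside become cheaper and hence more attractive to buyers inside, which is what creates the new equality edge; and one must confirm the stopping time is finite and strictly positive, which follows from $u_{ij} > 0$ and $p_j < \infty$ for the witnessing pair $(i,j)$.
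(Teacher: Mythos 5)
Your proof is correct and takes essentially the same approach as the paper's: after a scaling normalization, repeatedly select a component of the extended equality graph, raise its internal prices (scaling the internal flow, or equivalently keeping the allocation $x$ fixed) until a new crossing equality edge appears, and invoke the nonstandard connectivity assumption on the utilities to guarantee termination. The only cosmetic difference is that the paper chooses the component so that the minimum-price normalization is preserved throughout, whereas you restore it by a final rescaling; both are valid.
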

\begin{proof} Let $f$ be a feasible money flow with respect to $p$. We first multiply all prices and flows by a suitable constant so as to make all prices at least one. As long as the extended equality graph contains more than one component, we choose one of them subject to the constraint that it is not the only component that contains a good with price one. Let $K$ be the component chosen. We increase the prices of all goods in $K$ and the flows between the buyers and goods in $K$ by a common factor $x$ until a new equality edge between a buyer in $K$ and a good outside $K$ arises. This reduces the number of components. Continuing in this way, we obtain a canonical equilibrium  price vector and the corresponding money flow. 
\end{proof}

Let $p$ be a canonical equilibrium price vector and let $T$ be a spanning forest of the equality graph $G_p$. Consider the following system of equations. For a buyer $b_i$ with neighbors $c_{j_1}$ to $c_{j_k}$ in $T$ we have equations 
\begin{equation}\label{forest equations} u_{i j_1} p_{j_\ell} - u_{i j_\ell} p_{j_1} = 0   \quad\text{for $2 \le \ell \le k$.} \end{equation}
For all but one  component $K$ of the equality graph (not the extended equality graph), we have the equation
\begin{equation}\label{component equations} \sum_{b_i \in B \cap K} p_i = \sum_{c_j \in C \cap K} p_j .\end{equation}
Note that $\set{i}{b_i \in B \cap K} \not= \set{j}{c_j \in C \cap K}$ since otherwise $K$ would be a component of the extended equality graph. If the equality graph has only one component, there is no equation of type (\ref{component equations}). Finally, we have the equation 
\begin{equation}\label{price one equation} p_i = 1 \end{equation} for one of the goods of price $1$. 

\begin{lemma} The equation system (\ref{forest equations}), (\ref{component equations}), (\ref{price one equation}) has full rank and $p$ is the unique solution.\end{lemma}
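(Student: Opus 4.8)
The plan is to show two things: first, that $p$ is a solution to the system, and second, that the coefficient matrix has full rank (equivalently, the solution is unique). The first part is essentially by construction. Equations~(\ref{forest equations}) hold because $(b_i,c_{j_\ell}) \in T \subseteq E_p$ means $u_{ij_\ell}/p_{j_\ell} = \alpha_i = u_{ij_1}/p_{j_1}$, which rearranges to the stated identity. Equation~(\ref{component equations}) holds for any component $K$ of the equality graph because in an equilibrium the money flow stays within equality edges, hence within components of $G_p$; so the total budget of buyers in $K$ (which is $\sum_{b_i \in B \cap K} p_i$, since all money is spent) equals the total value of goods sold in $K$, which is $\sum_{c_j \in C \cap K} p_j$ (all goods completely sold). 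Equation~(\ref{price one equation}) holds since $p$ is canonical and we picked a good of price one. So $p$ solves the system.

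For the rank statement, I would count: the equality graph $G_p$ is a forest (bipartite, on $2n$ vertices); say it has $c$ connected components, hence $2n - c$ edges in $T$ (it spans $G_p$, so $T = E_p$ if $E_p$ is already a forest, but in general $T$ is a spanning forest and the number of tree edges is $2n - c$). The equations of type~(\ref{forest equations}): for a buyer $b_i$ with $\deg_T(b_i) = k_i$ neighbors we get $k_i - 1$ equations, and $\sum_i (k_i - 1) = (2n - c) - n = n - c$ equations (since $\sum_i k_i$ equals the number of tree edges, each tree edge being incident to exactly one buyer). The equations of type~(\ref{component equations}) number $c - 1$. Plus one equation of type~(\ref{price one equation}). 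Total: $(n-c) + (c-1) + 1 = n$ equations in $n$ unknowns — so the system is square, and full rank is equivalent to uniqueness.

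To prove uniqueness, suppose $q$ is any solution. Within a single component $K$ of $G_p$, pick a spanning tree and root it; the forest equations~(\ref{forest equations}) force, along each tree edge, a fixed positive ratio between the two prices, so once the price of one good in $K$ is fixed, all prices in $K$ (goods and buyers — note buyer $i$'s "price" $p_i$ only appears through the component equations, not the forest equations, so I should be careful: the forest equations only relate good prices to each other) are determined up to a single positive scaling factor $\lambda_K$ per component. Actually the $p_i$ for buyers are free in~(\ref{forest equations}); they are pinned down only by~(\ref{component equations}) and~(\ref{price one equation}). Hmm — so I should instead argue: the good-prices within $K$ are determined up to one scalar $\lambda_K$; the buyer-"prices" $p_i$ are genuinely independent unknowns. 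That gives $c$ scalars plus $n$ buyer-unknowns, total $n + c$ degrees of freedom before imposing~(\ref{component equations}) and~(\ref{price one equation}). Wait, that overcounts. Let me reconsider: this suggests the buyer variables $p_i$ must actually be tied to good variables, which they are in equilibrium via $p_i = \sum_j f_{ij} p_j$, but that relation is NOT among the listed equations. So I suspect the resolution is that the extended equality graph being connected is what makes things work: the edges $(b_i,c_i)$ connect buyer $i$ to good $i$, and connectivity of $\hG_p$ means the components of $G_p$ get linked through the identity edges, so the component equations~(\ref{component equations}) effectively propagate the scalars. The main obstacle I foresee is exactly this bookkeeping: correctly identifying the $n$ independent unknowns (is $p_i$ for a buyer really a separate variable, or is it shorthand?) and checking that the $c-1$ component equations plus connectivity of $\hG_p$ rigidly link the $c$ per-component scalars, so that~(\ref{price one equation}) removes the last global degree of freedom. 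I would handle this by setting up the homogeneous system ($X = 0$, i.e. replace~(\ref{price one equation}) by $p_i = 0$) and showing its only solution is $0$: the forest equations make each component's good-prices a multiple of a scalar $\lambda_K$; the component equations, read together with the fact that each identity edge $(b_i,c_i)$ joins a buyer-slot and good-slot possibly in different components, yield a connected system of linear relations among the $\lambda_K$'s and buyer-values whose only solution (given $\lambda_{K_0}=0$ from the normalization) is the all-zero vector. I expect to invoke connectivity of $\hG_p$ precisely here, as the graph-theoretic input that makes the relation-graph on the scalars connected.
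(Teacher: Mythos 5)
Your verification that $p$ solves the system and your count of $n$ equations in $n$ unknowns are correct, but the ``bookkeeping'' worry you raise is a red herring, and the uniqueness step contains a genuine gap. On the first point: there is exactly one variable per agent, since $p_i$ is simultaneously the price of good $c_i$ and the budget of buyer $b_i$, not two separate unknowns; and because every good is completely sold at equilibrium, every good has degree at least one in $G_p$, so every $p_i$ is already pinned, up to the per-component scalar $\lambda_K$, by the forest equations of the component containing $c_i$. There is no ``$n+c$'' overcount to resolve: the forest equations reduce the $n$ unknowns to the $c$ scalars $\lambda_K$ with nothing left over.

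The gap is in the final step. The claim that the $c-1$ component equations, the normalization, and connectivity of $\hG_p$ force all $\lambda_K = 0$ is asserted, not proved, and ``a connected system of linear relations has trivial kernel'' is simply not a theorem---a cycle of such relations can easily carry a one-dimensional kernel. You must invoke the strict positivity of $p$, and the argument that works is an extremality one. Briefly: the omitted component equation follows by summing the other $c-1$, so all $c$ hold. Writing $q_j = \mu_K p_j$ for $c_j$ in component $K$ (so $\mu_K = \lambda_K/\lambda_K^p$) and subtracting $\mu_K$ times the inhomogeneous component equation from the homogeneous one gives $\sum_{b_i \in B \cap K}(\mu_{K'(i)} - \mu_K)\,p_i = 0$, where $K'(i)$ is the component of $c_i$. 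For $K$ with maximal $\mu_K$ every term is nonpositive and they sum to zero, so each buyer in $K$ owns a good in a max-$\mu$ component; summing the component equations over all max components and again using $p>0$ shows the union of max components is closed under the identity edges $(b_i,c_i)$ and is therefore a union of components of $\hG_p$, hence by connectivity is everything. So all $\mu_K$ coincide, $q$ is a scalar multiple of $p$, and the normalization $q_{i^*}=0$ with $p_{i^*}=1$ kills the scalar. That is the piece your plan only promises to ``check,'' and since the paper itself merely cites \cite{\DM} here, it is the part you actually have to supply.
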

\begin{proof} Shown in~\cite{\DM}.\end{proof}

Let $\tu$ be a set of perturbed utilities and let $\tp$ be a canonical solution with respect to $\tu$. Then $\tp$ satisfies
a system of equations consisting of equations of the form  (\ref{forest equations}), (\ref{component equations}) and  (\ref{price one equation}). The utilities in the equations of type  (\ref{forest equations}) are the perturbed utilities $\tu$. This is system $\tA x = X$; it has solution $\tp$. We next construct a system $A x = X'$, still with solution $\tp$, by replacing the equations of type (\ref{forest equations}) by the equivalent equations
\begin{equation}\label{modified forest equations} 
 u_{i j_1} p_{j_\ell} - u_{i j_\ell} p_{j_1} = (u_{i j_1} - \tu_{i j_1}) p_{j_\ell} + (\tu_{i j_\ell} - u_{i j_\ell}) p_{j_1}
\end{equation}
and keeping the other equations. 

\newcommand{\pstar}{p^*}

\begin{lemma} The system (\ref{modified forest equations}), (\ref{component equations}), (\ref{price one equation}) has full rank. Let $p^*$ be a solution to the system $A x = X$ obtained from $Ax = X'$ by replacing the right hand side of equations (\ref{modified forest equations}) by zero. If $\delta = \max_{ij} \abs{\tu_{ij} - u_{ij}}$ satisfies $2 U 2 (nU)^{2n} + 2 ( (nU)^n + 6 (nU)^{2n} \delta) \delta < 1$ and $2 n (nU)^{2n}\delta < 1/2$, then $p^*$ is an equilibrium price vector for the original utilities.
\end{lemma}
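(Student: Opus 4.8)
The plan is to establish three things: (i) the modified system~(\ref{modified forest equations}), (\ref{component equations}), (\ref{price one equation}) has full rank; (ii) the solution $p^*$ of the associated homogeneous-right-hand-side system $Ax = X$ exists, is unique, and is close to $\tp$; and (iii) $p^*$ is actually an equilibrium price vector for the original utilities $u_{ij}$. The key observation driving the whole argument is that $A$ and $\tA$ have the same \emph{combinatorial structure} --- they share the forest $T$, the component partition, and the normalization equation --- and differ only in the entries of the forest equations, where $u_{ij}$ replaces $\tu_{ij}$. Thus $A = \tA + E$ where $E$ has small entries (each at most $\delta$) and the same support as the forest rows of $\tA$.

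First I would recall from the preceding lemma (the analogue for $\tu$) that $\tA$ has full rank, so $\tA^{-1}$ exists; I would then bound $\norm{\tA^{-1}}$ in terms of $U$ and $n$, using that $\tp = \tA^{-1} X$ has entries bounded by $(nU)^n$ (Lemma~\ref{thm:largest} applied to the perturbed instance, whose utilities are still within the relevant range for small $\delta$) together with Cramer's rule / cofactor bounds: the determinant of an integer-ish matrix with bounded entries is at least $1$ in absolute value after clearing denominators, and each cofactor is at most $(nU)^{O(n)}$, giving $\norm{\tA^{-1}} \le (nU)^{O(n)}$. This is the step where the stated numerical hypotheses $2U\cdot 2(nU)^{2n} + \ldots < 1$ and $2n(nU)^{2n}\delta < 1/2$ get used: they are exactly the condition $\norm{\tA^{-1} E} < 1/2$ (or similar), which by the Neumann series guarantees $A = \tA(I + \tA^{-1}E)$ is invertible, hence the modified system has full rank, and moreover gives the quantitative bound $\twonorm{p^* - \tp} = \twonorm{A^{-1}X - \tA^{-1}X} \le \frac{\norm{\tA^{-1}E}}{1 - \norm{\tA^{-1}E}}\twonorm{\tp}$, which I would push below $1/2$ or below the minimum spacing between distinct equilibrium-relevant prices.

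Next I would verify that $p^*$ is an equilibrium for the original utilities. Since $p^*$ solves the homogeneous forest equations $u_{ij_1}p^*_{j_\ell} - u_{ij_\ell}p^*_{j_1} = 0$, every edge of $T$ is an equality edge of $G_{p^*}$ with respect to the \emph{original} utilities; since $T$ is a spanning forest of $G_{\tp}$ containing at least one good of price one in each component, and $p^*$ is within $1/2$ of $\tp$ (so still positive, and prices that were one are now within $1/2$ of one, preserving the component/normalization structure), the component equations~(\ref{component equations}) and the price-one equation~(\ref{price one equation}) still express "money conservation within each component" and the normalization. From these I construct a feasible money flow: route flow along $T$ within each component, using~(\ref{component equations}) to see that total buyer budgets equal total good prices in each component, and the forest structure to see the flow is realizable; then I must check that no buyer is "forced" onto a non-equality edge, i.e. that $u_{ij}/p^*_j \le u_{ij_1}/p^*_{j_1}$ for the edges of $T$ at $b_i$ --- here I would argue that for $\delta$ small enough the bang-per-buck ordering at $p^*$ agrees with that at $\tp$ (or invoke nondegeneracy of the perturbed instance and closeness), so the forest edges remain maximum-bang-per-buck. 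Then all three equilibrium conditions (goods sold, money spent, only max-utility goods bought) hold at $p^*$.

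The main obstacle I expect is the quantitative bookkeeping in step (ii)--(iii): one must be careful that the two stated inequalities are genuinely strong enough both to invert the perturbed system and to keep $p^*$ inside the "safe" neighborhood where (a) prices stay positive and bounded, (b) the good(s) of price exactly one in $\tp$ do not collide with other prices in a way that breaks the component partition, and (c) the bang-per-buck comparisons that certify the equality edges of $T$ do not flip. I would handle this by first proving a clean separation bound --- any two quantities of the form $u_{ij}/p_j$ that are unequal at an equilibrium differ by at least $(nU)^{-O(n)}$ --- and then choosing the constants in the exponents so that the displayed hypotheses imply $\twonorm{p^* - \tp}$ is below that separation. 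The algebraic identity~(\ref{modified forest equations}) makes the "reverse the perturbation" idea work cleanly; the rest is estimating $\norm{\tA^{-1}}$ and propagating the error, which is routine once the determinant lower bound and cofactor upper bound are in place.
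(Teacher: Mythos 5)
Your proposal diverges from the paper's proof in two places that matter, and both are genuine gaps.

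First, the framing $A = \tA + E$ and a Neumann-series inversion argument is not what is going on. After rearrangement, $Ax = X'$ is algebraically identical to $\tA x = X$, so it has solution $\tp$ by construction; the full-rank claim for $A$ follows structurally from the same argument that gave full rank for $\tA$ (the forest/component/normalization structure works for any positive coefficients, as cited from~\cite{\DM}), not from a matrix-perturbation estimate. More importantly, the comparison the proof actually needs is between the \emph{two right-hand sides} $X$ and $X'$ with the \emph{same} matrix $A$: $\tp - p^* = A^{-1}(X' - X)$, with $\|X' - X\|_\infty \le 2(nU)^n\delta$. The two displayed numerical hypotheses have nothing to do with $\|\tA^{-1}E\| < 1/2$; they are tailored to two integrality arguments explained next, and your reading of them would not produce the right bounds.

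Second, and this is the heart of the matter, you replace the paper's integrality trick with a vaguer ``separation bound'' that you never establish. The paper writes $p^*_i = q_i/D$ with $q_i\in\N$ and $D = |\det A|$, so that for any $b_i, c_j, c_k$ the quantities $u_{ij}q_k$ and $u_{ik}q_j$ are \emph{integers}. The first hypothesis is exactly calibrated so that the chain of estimates starting from $\tu_{ij}/\tp_j \le \tu_{ik}/\tp_k$ yields $u_{ij}q_k < u_{ik}q_j + 1$, and then integrality forces $u_{ij}q_k \le u_{ik}q_j$, i.e.\ $E_{\tp}\subseteq E_{p^*}$. No separation lower bound on bang-per-buck ratios is needed or proved. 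Likewise, your plan to ``route flow along $T$'' and verify feasibility does not work: the equality graph $G_{p^*}$ can be strictly larger than $T$, and a flow supported on a forest need not be nonnegative or meet the demands. The paper instead shows $(s, B\cup C\cup t)$ is a minimum cut in $N_{p^*}$ by a second integrality argument: any finite cut has capacity a multiple of $1/D$, so a non-minimum cut would have capacity at most $(Q-1)/D$ where $Q = \sum_i q_i$; since $E_{\tp}\subseteq E_{p^*}$ this cut is also a cut in $N_{\tp}$ of capacity at most $(Q-1+n\varepsilon')/D$, contradicting the min-cut value $\ge (Q - n\varepsilon')/D$ once $2n\varepsilon' < 1$ --- which is the second hypothesis. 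Without these two integrality steps your proof has no way to close the argument; a generic $\|p^*-\tp\|$ bound is not by itself enough to certify $p^*$ as an exact equilibrium.
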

\begin{proof} We need to show that $(s, B \cup C \cup t)$ is a minimum cut in the equality network $N_{p^*}$. To this end, we first show that equality edges with respect to $\tp$ are also equality edges with respect to $p^*$, i.e., $E_{\tp} \subseteq E_{p^*}$. 

The following argument was essentially given in~\cite{\DM}, although in a different context. Since prices are bounded by $(nU)^n$, the right hand sides in equations (\ref{modified forest equations}) are bounded by
$2 (nU)^n \delta$. The vector $p^*$ is a rational vector with a common denominator $D = \abs{\det A}$, say $p^*_i = q_i/D$ with $q_i \in \N_{> 0}$. Since $\norm{X - X'}_\infty \le 2 (nU)^n \delta$, we have $\tp - p^*= A^{-1} (X - X') = B^{-1}(X - X')/(\det A)$,
where $B^{-1}$ is an integral matrix whose entries are bounded by $n!U^n$. Thus
\[  \abs{\tp_i - p^*_i} = \abs{\tp_i - q_i/D } \le \frac{(nU)^n 2 (nU)^n \delta}{D} = \frac{2 (nU)^{2n} \delta}{D}.\]
Let $\varepsilon' = 2 (nU)^{2n}\delta$. Consider any $b_i \in B$ and $c_j,c_k \in C$ and assume $\tilde{u}_{ij}/\tp_j \le \tilde{u}_{ik}/\tp_k$. Then
\begin{align*} u_{ij} q_k &\le (\tu_{ij} + \delta) (\tp_k D + \varepsilon')\\
&\le \tu_{ij} \tp_k D + \delta(\tp_k D + \varepsilon') + \tu_{ij} \varepsilon'\\
&\le \tu_{ik} \tp_j D +  \delta(\tp_k D + \varepsilon') + \tu_{ij} \varepsilon'\\
&\le (u_{ik} + \delta)(q_j + \varepsilon') +  \delta(q_k  + 2\varepsilon') + \tu_{ij} \varepsilon'\\
&\le u_{ik} q_j + u_{ik} \varepsilon' + \delta (q_j + \varepsilon') +  \delta(q_k + 2\varepsilon') + \tu_{ij} \varepsilon'\\
&< u_{ik} q_j + 1,
\end{align*}
since $u_{ik} \varepsilon' + \delta (q_j + \varepsilon') +  \delta(q_k + 2 \varepsilon') + \tu_{ij} \varepsilon' \le 
(2 U 2 (nU)^{2n} + 2 ( (nU)^n + 6 (nU)^{2n} \delta)) \delta < 1$. 

Let $Q = \sum_i q_i$. The cut $(s, B \cup C \cup t)$ is a minimum cut in $N_{\tp}$. Its capacity is at least 
$\sum_i \tp_i \ge (Q - n \varepsilon')/D$. The cut $(s,B \cup C \cup t)$ has capacity $Q/D$ in $N_{p^*}$. If it is not a minimum cut, there is a cut of capacity at most $(Q - 1)/D$. Since $E_{\tp} \subseteq E_{p^*}$, this cut is also a cut in $N_{\tp}$. Its capacity in $N_{\tp}$ is at most $(Q - 1 + n \varepsilon')/D$. This is less than $(Q - n \varepsilon')/D$, a contradiction. 
\end{proof} 


\subsection{Our Perturbation.}\label{ssec: Our Perturbation}

R.~Seidel~\cite{Seidel:Perturbation} discusses the nature and meaning of perturbations in computational geometry. His advice is also applicable here. First, determine a set of inputs which are nondegenerate, i.e., a set of utilities for which 
\[   U(D) \define \frac{\prod_{e \in D_1} u_e}{\prod_{e \in D_0} u_e} \not= 1 \]
for every cycle $D$; here $D_0$ and $D_1$ are the two classes of $D$. This is easy. We choose for each edge $e = ij$ a distinct prime $q_e$. Then 
\[   Q(D) \define \frac{\prod_{e \in D_1} q_e}{\prod_{e \in D_0} q_e} \not= 1 \]
for every cycle $D$. Second, one defines the perturbed utilities by combining the original utilities and the nondegenerate utilities by means of a positive infinitesimal $\varepsilon$. Since our quantity of interest combines the utilities by multiplications and divisions, a multiplicative combination is appropriate, i.e., we define
\[   \tu_{ij}(\varepsilon) = u_{ij} q_{ij}^{\varepsilon}.\]

\begin{lemma} 
For a cycle $D$, let 
\[ \tU(D) \define \frac{\prod_{e \in D_1} \tu_e}{\prod_{e \in D_0} \tu_e} = U(D) Q(D)^\varepsilon.\]
Then, for all sufficiently small positive $\varepsilon$, $\sign(\ln \tU(D)) \not= 0$ and $\sign(\ln \tU(D)) = \sign(\ln U(D))$ if $\sign(\ln U(D)) \not = 0$. 
\end{lemma}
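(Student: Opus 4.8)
The plan is to analyze the quantity $\ln \tU(D) = \ln U(D) + \varepsilon \ln Q(D)$ as a function of $\varepsilon > 0$ and exploit the finiteness of the set of cycles. First I would observe that the formula $\tU(D) = U(D) Q(D)^\varepsilon$ is immediate from the definitions: each $\tu_e = u_e q_e^\varepsilon$, and the product over $D_1$ divided by the product over $D_0$ factors into the product of the $u$-part and the product of the $q^\varepsilon$-part, the latter being $\left(\prod_{e \in D_1} q_e / \prod_{e \in D_0} q_e\right)^\varepsilon = Q(D)^\varepsilon$. Taking logarithms gives $\ln \tU(D) = \ln U(D) + \varepsilon \ln Q(D)$, an affine function of $\varepsilon$ with slope $\ln Q(D)$ and intercept $\ln U(D)$.

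Next I would distinguish two cases for a fixed cycle $D$. If $\sign(\ln U(D)) \neq 0$, i.e., $U(D) \neq 1$, then $\ln \tU(D) \to \ln U(D) \neq 0$ as $\varepsilon \to 0+$, so by continuity there is a threshold $\varepsilon_D > 0$ such that for all $0 < \varepsilon < \varepsilon_D$ the sign of $\ln \tU(D)$ agrees with $\sign(\ln U(D))$ (and in particular is nonzero). Concretely one can take $\varepsilon_D = \abs{\ln U(D)} / (2 \abs{\ln Q(D)})$ when $\ln Q(D) \neq 0$, and $\varepsilon_D = 1$ otherwise. If instead $\sign(\ln U(D)) = 0$, i.e., $U(D) = 1$, then $\ln \tU(D) = \varepsilon \ln Q(D)$; since the $q_e$ are distinct primes, $Q(D) \neq 1$ (a ratio of products of distinct primes equals $1$ only when numerator and denominator are the empty product, impossible for a cycle), hence $\ln Q(D) \neq 0$, so $\ln \tU(D) \neq 0$ for every $\varepsilon > 0$. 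In this case the claim only asserts $\sign(\ln \tU(D)) \neq 0$, which holds, and I would set $\varepsilon_D = 1$.

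Finally I would take $\varepsilon_0 = \min_D \varepsilon_D$ over all cycles $D$ in the utility graph. Since the utility graph is finite, there are finitely many (simple) cycles, so this minimum is a well-defined positive number; and it suffices to consider simple cycles because the degeneracy condition and the quantities $U(D), Q(D)$ are multiplicative over edge-disjoint decompositions, so a sign statement for all simple cycles propagates. For every $0 < \varepsilon < \varepsilon_0$ and every cycle $D$, both assertions of the lemma hold by the case analysis above.

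I do not expect a genuine obstacle here; the only point requiring care is the claim $Q(D) \neq 1$ for the distinct-primes choice, which is exactly the unique factorization argument already sketched in the text preceding the lemma (the edges of $D$ split into the two color classes $D_0, D_1$, which are disjoint nonempty edge sets, so $\prod_{e\in D_1} q_e$ and $\prod_{e \in D_0} q_e$ are products of distinct primes with disjoint prime supports and therefore unequal). The other mild subtlety is making the quantification uniform over all cycles at once rather than cycle-by-cycle, which is handled by the finiteness of the cycle set and taking the minimum of the thresholds.
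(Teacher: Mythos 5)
Your proof is correct and follows essentially the same route as the paper's: write $\ln \tU(D) = \ln U(D) + \varepsilon \ln Q(D)$ and split into the cases $U(D)\neq 1$ (sign preserved by continuity for small $\varepsilon$) and $U(D)=1$ (sign equals $\sign(\ln Q(D))$, which is nonzero because the $q_e$ are distinct primes). The extra details you supply — explicit thresholds $\varepsilon_D$, the unique-factorization justification of $Q(D)\neq 1$, and taking a uniform minimum over cycles — are fine, though note the lemma is stated for a single fixed cycle $D$, so the uniform threshold is not actually required; and your remark that ``a sign statement for all simple cycles propagates'' to non-simple cycles via multiplicativity is imprecise (signs of the pieces could cancel), but it is a harmless tangent that does not affect the validity of the argument for the lemma as stated.
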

\begin{proof} 
Clearly, $\ln \tU(D) = \ln U(D) + \varepsilon \ln Q(D)$. If $U(D) \not= 1$, $\sign( \ln \tU(D)) = \sign(\ln U(D))$ for every sufficiently small $\varepsilon$. If $U(D) = 1$, 
$\sign( \ln \tU(D)) = \sign(\ln Q(D))$ for every positive $\varepsilon$. 
\end{proof}

Since there are at least $Q/(2 \ln Q)$ primes less than $Q$ for any integer $Q$, we can choose the $q_e$'s such that $q_e \le Q \define 8n^2\log n$. The primes less than $Q$ can be determined in time $O(Q \ln Q) = O(n^2 \log^2 n)$ by use of the Sieve of Eratosthenes.\footnote{We initialize all entries of an array $A[2..Q]$ to true. Then, for $i \in [2,Q]$, we do: if $A[i]$ is true, we output $i$ as prime and set all multiples $A[\ell\cdot i]$, for $2 \le \ell \le Q/i$ to false. All of this takes time
$O(\sum_{i \le Q} Q/i) = O(Q \ln Q)$.}

We come to the implementation of the perturbation. The perturbed utilities are of the form $A B^\varepsilon$ with $A$ and $B$ positive rational numbers. We will also restrict prices to numbers of this form; we will discuss below how this can be achieved in the balanced flow computation. We can either choose $\varepsilon$ as a sufficiently small numerical value or, more elegantly, treat $\varepsilon$ as a symbolic positive infinitesimal. 

\begin{lemma}\label{lexicography}  Let $A$, $B$, $C$, and $D$ be positive rationals. Then 
$A B^\varepsilon > C D^\varepsilon$ for all sufficiently small $\varepsilon$ iff $A > C$ or $A = C$ and $B > D$. Also, $A B^\varepsilon / C D^\varepsilon = (A/C) \cdot (B/D)^\varepsilon$. 
\end{lemma}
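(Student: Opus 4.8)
The plan is to prove the two assertions of Lemma~\ref{lexicography} separately, both by elementary manipulation. The second identity, $A B^\varepsilon / C D^\varepsilon = (A/C)(B/D)^\varepsilon$, is immediate from the laws of exponents: $B^\varepsilon/D^\varepsilon = (B/D)^\varepsilon$ because $B$ and $D$ are positive reals and $x\mapsto x^\varepsilon$ is multiplicative on the positive reals. So the substance is in the first (lexicographic comparison) claim.

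For the first claim I would take logarithms. Since $A,B,C,D>0$ and all quantities in sight are positive, $A B^\varepsilon > C D^\varepsilon$ is equivalent to $\ln A + \varepsilon \ln B > \ln C + \varepsilon \ln D$, i.e.\ to
\[
(\ln A - \ln C) + \varepsilon(\ln B - \ln D) > 0 .
\]
Now I would argue by cases on the sign of $\ln A - \ln C$. If $\ln A - \ln C > 0$ (equivalently $A>C$), then for all $\varepsilon$ small enough in absolute value the $\varepsilon$-term is dominated and the left-hand side is positive; this gives one direction, and conversely if $A>C$ fails but the inequality holds for arbitrarily small $\varepsilon>0$, then letting $\varepsilon\to 0+$ forces $\ln A - \ln C \ge 0$, so $A\ge C$. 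If $A = C$, the inequality reduces to $\varepsilon(\ln B - \ln D) > 0$, which holds for all (sufficiently small, indeed all) positive $\varepsilon$ exactly when $\ln B - \ln D > 0$, i.e.\ $B>D$. If $A<C$, then $\ln A - \ln C < 0$ and for small $\varepsilon>0$ the left-hand side is negative, so the inequality fails; hence "$A B^\varepsilon > C D^\varepsilon$ for all sufficiently small $\varepsilon>0$'' can only hold when $A>C$, or $A=C$ and $B>D$. Assembling these cases yields the stated equivalence.

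There is essentially no obstacle here; the only point requiring a word of care is the quantifier "for all sufficiently small $\varepsilon$'' — one must be careful that in the $A=C$, $B>D$ case the conclusion in fact holds for \emph{every} positive $\varepsilon$, not merely small ones, which is harmless but worth noting, and that in the $A>C$ case the threshold below which the inequality holds depends on the magnitudes $|\ln A - \ln C|$ and $|\ln B - \ln D|$, explicitly any $\varepsilon < (\ln A - \ln C)/|\ln B - \ln D|$ (with the convention that the bound is vacuous when $\ln B = \ln D$). Since in the application $\varepsilon$ is treated as a symbolic positive infinitesimal, these thresholds never need to be made numerical; the lemma simply licenses comparing pairs $(A,B)$ lexicographically, which is exactly what the balanced-flow computation in the perturbed instance requires.
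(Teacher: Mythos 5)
Your proof is correct; the paper itself simply declares this lemma ``Obvious'' and gives no argument, and your elementary verification via logarithms and a case split on the sign of $\ln A - \ln C$ is exactly the natural way to fill in that gap. The only caveat worth keeping in mind is the one you already note: ``sufficiently small'' is taken over positive $\varepsilon$, which is consistent with the paper's use of $\varepsilon$ as a positive infinitesimal.
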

\begin{proof} Obvious. \end{proof}

So prices and utilities are pairs $(A,B)$ with $A$ and $B$ positive rationals. The pair $(A,B)$ stands for $AB^\varepsilon$ and comparisons are lexicographic.

\subsection{The Implementation of a Phase: A Review of~\cite[Section 3]{\DM}.}\label{ssec: Implementation of a Phase}

Duan and Mehlhorn~\cite{\DM} already used perturbation in their algorithm, but for the purpose of keeping the cost of arithmetic low and not for the purpose of removing degeneracies. Fortunately, their perturbation extends to what is needed here. We review their use of perturbation adapted to our modified definition of $S$. 

Let $L = 128 n^{5n + 5} U^{4n}$. A real number $b$ is an \emph{additive $1/L$-approximation or simply additive approximation} of a real number $a$ if $\abs{a - b} \le 1/L$. It is a \emph{multiplicative $(1 + 1/L)$-approximation or simply multiplicative approximation} if $a/b \in [1/(1 + 1/L),1 + 1/L]$. 
\cite{\DM} approximates utilities by powers of $(1 + 1/L)$. For a utility $u_{ij} \in [1..U]$, let $e_{ij} \in \N$ be such that $\tu_{ij} \define (1 + 1/L)^{e_{ij}}$ is a multiplicative approximation of $u_{ij}$. Moreover, for part A of the algorithm, \cite{\DM} restricts prices to the form $(1 + 1/L)^k$, $0 \le k \le K$, where $K$ is chosen such that $(nU)^n \le (1 + 1/L)^K$. This choice of $K$ guarantees that the full range $[1..(nU)^n]$ of potential prices is covered. Then $K = O(n L \log(nU))$. They represent a price $p_i = (1 + 1/K)^{e_i}$ by its exponent $e_i \in \N$. Here, $\N$ denotes the natural numbers including zero. The bitlength of $e_i$ is $\log K = O(n \log (nU))$. 

The surplus vector with respect to $p$ is computed only approximately. To this end, they replace each price $p_i$ by an approximation $\hp_i$ with denominator $L$ (note that the denominator of $p_i$ might be as large as $L^K$) and compute a balanced flow in a modified equality network $N(p,\hp)$. The approximation $\hp_i$ is a rational number with denominator $L$ and an additive and a multiplicative approximation of $p_i$. 

\newcommand{\hq}{\hat{q}}

For two price vectors $p$ and $\hp$, the equality network $N(p,\hp)$ has its edge set determined by $p$ and its edge capacities determined by $\hp$, i.e., it has 
\begin{compactenum}[\hspace{\parindent}--]
    \item an edge $(s,b_i)$ with capacity $\hp_i$ for each $b_i \in B$,
    \item an edge $(c_i,t)$ with capacity $\hp_i$ for each $c_i \in C$, and 
    \item an edge $(b_i,c_j)$ with infinite capacity whenever $\tu_{ij}/p_j=\max_\ell \tu_{i\ell}/p_\ell$. We use $E_p$ to denote this set of edges.
\end{compactenum}\smallskip

Let $\hf$ be a balanced flow in $N(p,\hp)$. For a buyer $b_i$, let $\hr(b_i) = \hp_i - \hf_{si}$ be its surplus, for a good $c_i$, let $\hr(c_i) = \hp_i - \hf_{it}$ be its surplus. We define $S$ as in Section~\ref{sec: algorithm} except that we use the surplus vector $\hr$ instead of the surplus vector $r$. Recall that~\cite{\DM} uses simpler definition. 
Since prices are now restricted to powers of $1 + 1/L$, the update factor $x$ has to be a power of $1 + 1/L$, and hence they need to modify its definition and computation. They compute $x$ in two steps. They first compute a factor $\hx$ from $\hp$ as in Section~\ref{sec: algorithm} and then obtain $x$ from $\hx$ by rounding to a near power of $1 + 1/L$. They use $x$ to update the price vector $p$. The prices of all goods in $\Gamma(B(S))$ are multiplied by $x$. 


They define $\hx$ as the minimum of $\xeq(S)$, $\hx_{23}(S)$, $\hx_{24}(S)$,  and $\xmax$. The definition of $\xeq(S)$ is in terms of the rounded utilities:
\begin{equation*}
\begin{split}
\xeq(S) =\min \{ &\frac{\tu_{ij}}{p_j}\cdot\frac{p_k}{\tu_{ik}} \; | \\
&b_i\in B(S), (b_i,c_j)\in E_p, c_k\notin\Gamma(B(S)) \}.
\end{split}
\end{equation*}
By definition, $\xeq(S)$ is a power of $1 + 1/L$. They redefine $\xmax$ as a power of $1 + 1/L$ such that 
$1 + 1/(Rn^3) \ge \xmax \ge (1 + 1/R(n^3))/(1 + 1/L)^2$. The quantities $\hx_{23}(S)$, $\hx_{24}(S)$, and $\hx_2(S)$ are defined in terms of the price vector $\hp$ and the surplus vector $\hr$, e.g., 
\begin{equation*}\begin{split}
\hx_{23}(S) = \min \{& \frac{\hp_i + \hp_j - \hr(b_j)}{\hp_i + \hp_j - \hr(b_i)} \; | \\
&\text{$b_i$ is type 2 and $b_j$ is type 3 buyer}\}
\end{split}\end{equation*}
Then $x_{23}(S)$, $x_{24}(S)$, $x_2(S)$ are multiplicative approximations of $\hx_{23}(S)$, $\hx_{24}(S)$, and $\hx_2(S)$ and $x = \min(\xeq,x_{23}(S),x_{24}(S),\xmax)$. Clearly, $x$ is a multiplicative $(1 + 1/L)$-approximation of $\hx$. Finally, $x$ is used to update the prices: $p'_i = x p_i$ for each good $c_i \in \Gamma(B(S))$ and 
$p'_i = p_i$ for any $c_i \not\in \Gamma(B(S))$. 

It was shown in~\cite{\DM} that the above can be implemented with arithmetic (additions, subtractions, multiplications, divisions) on integers whose bitlength is $O(n \log (nU))$. The computation of multiplicative and additive approximations takes $O(n \log (nU))$ operations. 
The following lemma summarizes the review:

\begin{theorem}\label{nondegenerate bitcomplexity} For nondegenerate inputs, the algorithm computes equilibrium prices and requires $O(n^6 \log^2 (nU))$ arithmetic operations (additions, subtractions, multiplications, divisions) on integers of bitlength $O(n \log (nU))$. Its time complexity is $O(n^7 \log^3 (nU))$. 
\end{theorem}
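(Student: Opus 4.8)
The plan is to assemble the theorem from the structural and combinatorial facts established earlier in the paper, together with the bit-complexity bookkeeping reviewed in Section~\ref{ssec: Implementation of a Phase}. The three ingredients are: (i) a bound on the total number of phases; (ii) a bound on the number of arithmetic operations per phase and on the operand bitlength; and (iii) a cost model translating arithmetic operations on $O(n\log(nU))$-bit integers into a RAM running time. I would simply multiply these together.

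First I would invoke the phase count. By Lemmas~\ref{number-of-xmax-without} and~\ref{number-of-xmax-with}, the number of $\xmax$-phases is $O(n^4\log(nU))$, and by Lemma~\ref{lem:number-of-good} the number of balancing phases is also $O(n^4\log(nU))$; hence the algorithm runs for $O(n^4\log(nU))$ phases in total. Next I would bound the work in a single phase. For nondegenerate inputs the equality graph $E_p$ is a forest by Lemma~\ref{general position}, so by Lemma~\ref{balanced flow acyclic} the balanced flow $f''$ (with the required goods completely sold) can be computed with $O(n^2)$ arithmetic operations; the remaining per-phase bookkeeping --- sorting buyers by surplus, finding $\ell$ in $O(n^2)$ time, computing $\xmax$, $\xeq$, $x_{23}$, $x_{24}$, $x_2$, updating $O(n)$ prices and flows, and the $O(n\log(nU))$ operations for the multiplicative/additive rounding reviewed in Section~\ref{ssec: Implementation of a Phase} --- is dominated by $O(n^2\log(nU))$ operations. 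Finally, Part B costs $O(n^4\log(nU))$ operations by the argument in~\cite{\DM}. Multiplying the phase count by the per-phase cost gives $O(n^4\log(nU))\cdot O(n^2\log(nU)) = O(n^6\log^2(nU))$ arithmetic operations overall, and the review of~\cite{\DM} guarantees that every operand stays an integer of bitlength $O(n\log(nU))$ (prices are represented by exponents of bitlength $O(n\log(nU))$, and the approximations $\hp_i$ have denominator $L$).

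For the time bound I would use the RAM multiplication cost $M(\ell) = O(\ell)$ on a word-RAM with logarithmic word length (\cite{Furer09}), so each arithmetic operation on $O(n\log(nU))$-bit integers costs $O(n\log(nU))$ time. Multiplying, the time complexity is $O(n^6\log^2(nU))\cdot O(n\log(nU)) = O(n^7\log^3(nU))$, as claimed.

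The only genuinely non-routine point is the per-phase arithmetic-operation bound: one must be sure that no step of the phase --- in particular the balanced-flow recomputation --- secretly costs more than $O(n^2\log(nU))$ operations. This is exactly what the forest structure buys us (Lemma~\ref{balanced flow acyclic}), and it is the place where nondegeneracy is essential; everything else is multiplication of previously established bounds. One should also double-check that the perturbation machinery of Section~\ref{ssec: Implementation of a Phase}, which keeps operand bitlength at $O(n\log(nU))$, is compatible with the refined definition of $S$ used here, but this was already noted in the review. I expect no further obstacles.
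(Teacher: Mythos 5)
Your proposal is correct and follows essentially the same decomposition as the paper's own proof: total arithmetic operations equals phase count $O(n^4\log(nU))$ times per-phase cost $O(n^2\log(nU))$, with the forest structure of the equality graph (Lemma~\ref{balanced flow acyclic}) ensuring the balanced-flow step costs only $O(n^2)$ operations, operand bitlength $O(n\log(nU))$ coming from the power-of-$(1+1/L)$ representation reviewed in Section~\ref{ssec: Implementation of a Phase}, and the time bound obtained by multiplying by $O(n\log(nU))$ per operation on a word-RAM. The only small imprecision is attributing $O(n\log(nU))$ operations to the multiplicative/additive rounding per phase — the dominant rounding cost is actually $O(n\log(nU))$ \emph{per price}, hence $O(n^2\log(nU))$ total — but you arrive at the correct per-phase bound in any case.
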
 
\begin{proof} The number of phases is $O(n^4 \log (nU))$. At the beginning of a phase, we have the price vectors $p$ and $\hp$ and know a balanced flow $f$ in the network $N(p,\hp)$. We determine the set $S$, and the quantities $\xeq$, $\hx_{23}$, $\hx_{24}$, and $\hx_2$. This requires $O(n^2)$ arithmetic operations. We then compute $x_{23}$, $x_{24}$, and $x_2$. This takes $O(n \log (nU))$ arithmetic operations (\cite[Lemma 17]{\DM}). We then perform the price update ($n$ arithmetic operations), compute $\hp_{\mathit{new}}$ from the new price vector $p_{\mathit{new}}$. This takes $O(n \log(nU))$ arithmetic operations for each price (\cite[Lemma 16]{\DM}) for a total of $O(n^2 \log (nU))$ arithmetic operation. Finally, we compute a balanced flow in the network 
$N(p_{\mathit{new}},\hp_{\mathit{new}})$ ($n^2$ arithmetic operations). In summary, each phase requires 
$O(n^2 \log (nU))$ arithmetic operations on integers of bitlength $O(n \log(nU))$. Since the cost of arithmetic on a RAM with logarithmic word length has cost linear in the bitlength, the time bound follows. \end{proof}

\subsection{Adaption to Degeneracy Removal.}\label{ssec: Degeneracy Removal}

Recall that we have chosen a distinct prime $q_e \le Q = 8 n^2 \ln n$ for every edge $e \in B \times C$ with $u_e > 0$. Let $L' = 8n \ceil{Q}^{2n}$. 

For each edge $e$ with $u_e > 0$, let $\ell_e$ be such that $(1 + 1/L')^{\ell_e}$ is a $1 + 1/L'$-multiplicative approximation of $q_e$. Consider any cycle $D$ and assume w.l.o.g~$(\prod_{e \in D_0} q_e)/\prod_{e \in D_1} q_e > 1$.
Then the product is at least $Q^{2n}/(Q^{2n} - 1)$, and hence 
\begin{align*}
\frac{1}{\ln(1 + \frac{1}{L'})} & \ln \left( \frac{\prod_{e \in D_0} (1 + \frac{1}{L'})^{\ell_e}}{\prod_{e \in D_1} (1 + \frac{1}{L'})^{\ell_e}} \right)\\
&\ge \frac{1}{\ln(1 + \frac{1}{L'})} \ln \left(\frac{Q^{2n}}{Q^{2n} - 1} \cdot \frac{1}{(1 + \frac{1}{L'})^{2n}}\right)\\
&\ge \frac{1}{2}\frac{L'}{Q^{2n}} - 2n > 0,
\end{align*}
where the next to last inequality follows from $x/2 \le \ln(1 + x) \le x$ for $x \le 1$, and the last inequality follows
from the definition of $L'$.

We now combine the perturbations described in Sections~\ref{ssec: Our Perturbation}, \ref{ssec: Implementation of a Phase}, and the paragraph above. We replace a utility $u_{ij} \in [1..U]$ by
\[   \tu_{ij}(\varepsilon) = (1 + \frac{1}{L})^{e_{ij}} (1 + \frac{1}{L'})^{\ell_{ij}\varepsilon}, \]
where $e_{ij}$ is as in Section~\ref{ssec: Implementation of a Phase} and $\ell_{ij}$ is as in the preceding paragraph. For part A of the algorithm, prices are restricted to the form $(1 + 1/L)^a (1 + 1/L')^{b \varepsilon}$, where $a$ is a nonnegative integer and $b$ is an integer. For the computation of balanced flows, we proceed as in Section~\ref{ssec: Implementation of a Phase}. In particular, we replace each price $p_i$ by an approximation $\hp_i$ (for this approximation, we ignore the infinitesimal part of the prices, i.e., $A B^\varepsilon$ is replaced by $A$.) and then compute a balanced flow in the network $N(p,\hp)$. The results of Section~\ref{ssec: Implementation of a Phase} carry over. We only have to replace $U$ by $\max(U,O(n^2 \log n))$ in the time bounds.

\begin{theorem}\label{theorem bitcomplexity} The algorithm computes equilibrium prices with $O(n^6 \log^2 (nU))$ arithmetic operations on integers of bitlength $O(n \log (nU))$. Its time complexity is $O(n^7 \log^3 (nU))$. 
\end{theorem}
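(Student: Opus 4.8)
The plan is to combine the two theorems that precede this one. Theorem~\ref{nondegenerate bitcomplexity} already establishes exactly the claimed bounds --- $O(n^6 \log^2(nU))$ arithmetic operations on integers of bitlength $O(n\log(nU))$, hence time $O(n^7 \log^3(nU))$ --- for nondegenerate inputs. Theorem~\ref{theorem bitcomplexity} simply asserts the same bounds with the nondegeneracy hypothesis dropped, so the entire content of the proof is to verify that the perturbation machinery of Section~\ref{ssec: Degeneracy Removal} reduces a degenerate input to a nondegenerate one \emph{without} changing the asymptotic cost.

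First I would recall the perturbed utilities $\tu_{ij}(\varepsilon) = (1 + 1/L)^{e_{ij}}(1 + 1/L')^{\ell_{ij}\varepsilon}$ and note that, by the lemma preceding this theorem (the $L'$-rounding argument), the rounded primes $(1+1/L')^{\ell_e}$ are still nondegenerate for every cycle $D$; combined with Lemma~\ref{general position}, the equality graph of the perturbed instance with respect to any price vector is a forest. Hence the perturbed instance satisfies the hypothesis of Theorem~\ref{nondegenerate bitcomplexity}, and running the algorithm on it costs $O(n^6\log^2(\max(U,O(n^2\log n))))$ arithmetic operations on integers of bitlength $O(n\log(\max(U,O(n^2\log n))))$. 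Since $\log(n^2\log n) = O(\log n) = O(\log(nU))$, the substitution $U \mapsto \max(U,O(n^2\log n))$ leaves both bounds unchanged up to constants, which is the observation made at the end of Section~\ref{ssec: Degeneracy Removal}.

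Next I would account for the extra bookkeeping introduced by the perturbation. The primes $q_e \le Q = 8n^2\ln n$ and the exponents $\ell_e$ are computed once, at the start, by the Sieve of Eratosthenes in $O(n^2\log^2 n)$ time --- dominated by the main loop. Prices now carry a symbolic infinitesimal: each price is a pair $(A,B)$ representing $AB^\varepsilon$, with comparisons lexicographic (Lemma~\ref{lexicography}) and products computed componentwise; this at most doubles the number of arithmetic operations per phase and the bitlengths involved, again absorbed into the $O(\cdot)$. For the balanced-flow subroutine we discard the infinitesimal part, replacing $AB^\varepsilon$ by $A$ to form $\hp$, exactly as in Section~\ref{ssec: Implementation of a Phase}, so that subroutine is literally unchanged. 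Therefore the per-phase cost and the phase count of Theorem~\ref{nondegenerate bitcomplexity} both carry over verbatim.

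Finally I would invoke the continuity result of Section~\ref{Continuity for Sufficiently Small Perturbations}: the canonical solution $\tp$ to the perturbed instance yields, via equation system~(\ref{modified forest equations})--(\ref{price one equation}) and the subsequent lemma, an equilibrium price vector $p^*$ for the \emph{original} utilities, provided $\delta = \max_{ij}|\tu_{ij} - u_{ij}|$ is small enough --- which is guaranteed by choosing $L$ large (it already is, $L = 128n^{5n+5}U^{4n}$) and $\varepsilon$ symbolically infinitesimal. Once $p^*$ is in hand, the money flow is recovered by one maxflow computation, and Part B runs as before. The only genuine point requiring care --- the main (mild) obstacle --- is confirming that substituting $U$ by $\max(U, O(n^2\log n))$ throughout the analysis of Theorem~\ref{nondegenerate bitcomplexity} (both in the $O(n^4\log(nU))$ phase bound, via the definitions of $\epsilon$ and the prime bound, and in the $O(n\log(nU))$ bitlength bound) does not inflate any exponent; since $n^2\log n$ is polynomial in $n$, every occurrence of $\log(nU)$ changes only by a constant factor, so both stated bounds are preserved. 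This completes the proof.
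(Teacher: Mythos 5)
Your proposal reconstructs exactly the argument the paper leaves implicit: Theorem~\ref{theorem bitcomplexity} is the immediately preceding Theorem~\ref{nondegenerate bitcomplexity} applied to the perturbed (hence nondegenerate) instance, with the observation that the combined perturbation $\tu_{ij}(\varepsilon) = (1+1/L)^{e_{ij}}(1+1/L')^{\ell_{ij}\varepsilon}$ only forces the substitution $U \mapsto \max(U, O(n^2\log n))$ in the bounds, that symbolic-$\varepsilon$ arithmetic via $(A,B)$-pairs and lexicographic comparison at most doubles per-phase cost, that the Sieve preprocessing is $O(n^2\log^2 n)$ and dominated, and that the continuity lemma of Section~\ref{Continuity for Sufficiently Small Perturbations} recovers the true equilibrium from $\tp$. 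This is precisely how the paper arrives at the theorem.
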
 

\section{A Real Implementation}

Omar Darwish and Kurt Mehlhorn~\cite{Darwisch-Mehlhorn:AD-Implementation} have implemented the algorithm by Duan and Mehlhorn and parts of the improvements described in this paper. For random utilities, the running time seems to scale significantly better than $n^7$.

\newcommand{\htmladdnormallink}[2]{#1}

\section*{Appendix: Orlin's Perturbation}\label{Orlin's Perturbation}

In~\cite{Orlin10}, Orlin sketches a perturbation technique. He suggests to replace a nonzero utility $u_{ij}$ by\footnote{In the paper, he writes $\tu_{ij} = u_{ij} + \varepsilon^{n i} + \varepsilon^j$. In personal communication, he corrected this to  $\tu_{ij} = u_{ij} + \varepsilon^{n^2 i} + \varepsilon^j$. } $\tu_{ij} = u_{ij} + \varepsilon^{n^2 i} + \varepsilon^j$, where $\varepsilon$ is a positive infinitesimal. He shows that for the perturbed utilities the equality graph $E_p$ for any price vector $p$ is a forest. He does not discuss how to maintain the prices and he suggests that the perturbation can be implemented at no extra cost. 

Consider the following example of a Fisher market. We have two buyers $b_0$ and $b_1$ and two goods $c_0$ and $c_1$. The budget of $b_0$ is equal to 2 and the budget of $b_1$ is equal to one. The utilities are $u_{00} = u_{10} = 2$ and $u_{01} = u_{11} = 1$. In an equilibrium, the prices are $p_0 = 2$ and $p_1 = 1$, and the equality graph $E_p$ consists of all four edges. Money flows either on all four edges, or on edges $(0,0)$, $(0,1)$ and $(1,0)$  or on edges $(0,0)$ and $(1,1)$. In any case, there is flow on the edge $(0,0)$. 

The perturbation yields utilities $u_{00} = 2$, $u_{01} = 1 + \varepsilon$, $u_{10} = 2 + \varepsilon^4$, $u_{11} = 1 + \varepsilon^4 + \varepsilon$. Orlin writes: ``If $E$ is the equality graph prior to perturbations, and if $E'$ is the equality graph after perturbations, then $E'$ is the maximum weight forest of $E$ obtained by assigning edge $(i,j)$ a weight of $in^2 + j$.'' Thus $E'$ is $E_p$ minus the edge $(0,0)$. The equilibrium prices are $p_0 = 1$ and $p_1 = 2$, and this is nowhere close to the true solution. 

Orlin ignores that the prices after perturbation are rational functions in $\varepsilon$. Assume $p_1 = 1$. Then $(0,0)$ belongs to the equality graph if $2/p_0 \ge ( 1 + \varepsilon)/1$ or $p_0 \le 2/(1 + \varepsilon)$, and $(0,0)$ and $(0,1)$ belong to the equality graph if $p_0 = 2/(1 + \varepsilon)$. Edge $(1,0)$ belongs to the equality graph if $(2 + \varepsilon^4)/p_0 \ge (1 + \varepsilon^4 + \varepsilon)/1$ or $p_0 \le (2 + \varepsilon^4)/(1 + \varepsilon^4 + \varepsilon)$. Since 
$(2 + \varepsilon^4)/(1 + \varepsilon^4 + \varepsilon) < 2/(1 + \varepsilon)$ for $\varepsilon$ sufficiently small and positive, we have the following possibilities:\smallskip
\begin{compactenum}[1.]
\item $p_0 < (2 + \varepsilon^4)/(1 + \varepsilon^4 + \varepsilon)$: The equality graph consists of $(0,0)$ and $(1,0)$.
\item $p_0 = (2 + \varepsilon^4)/(1 + \varepsilon^4 + \varepsilon)$: The equality graph consists of $(0,0)$, $(1,0)$, and $(1,1)$. 
\item  $(2 + \varepsilon^4)/(1 + \varepsilon^4 + \varepsilon) < p_0 < 2/(1 + \varepsilon)$: The equality graph consists of $(0,0)$, and $(1,1)$. 
\item  $p_0 = 2/(1 + \varepsilon)$: The equality graph consists of $(0,0)$ and $(0,1)$ and $(1,1)$
\item $p_0 > 2/(1 + \varepsilon)$: The equality graph consists of $(0,1)$ and $(1,1)$. 
\end{compactenum}\smallskip
Only cases 2, 3, and 4 lead to equality graphs for which the original problem has an equilibrium solution. The example demonstrates that prices must be maintained as rational functions in $\varepsilon$ in Orlin's scheme. Alternatively, that a concrete sufficiently small value is chosen for $\varepsilon$. Both alternatives incur additional cost. 

The equilibrium solution corresponds to case 4. The prices are $p_1 = (3 + 3\varepsilon)/(3 + \varepsilon)$ and $p_0 = 2/(1 + \varepsilon) \cdot p_1$. The flow is $1$ on the edge $(1,1)$, $2$ on the edge $(0,0)$, and $2 - p_0 = p_1 - 1$ on the edge $(0,1)$.

\end{document}